\documentclass[a4paper,english]{article}
\pdfoutput=1
\usepackage{amsmath}
\usepackage{amssymb}
\usepackage{algorithm2e}
\usepackage{mathrsfs}

\makeatletter

%
%

\usepackage[english]{babel}
\usepackage{authblk}
\usepackage{amsfonts}
\usepackage{amsthm}
\usepackage{color}
\usepackage{multicol}
\usepackage{empheq}
\usepackage{graphicx}
\usepackage{float}
\usepackage{bbm}
\usepackage[colorlinks,linkcolor=blue,anchorcolor=black,citecolor=red]{hyperref}

\usepackage{calligra}

\DeclareMathAlphabet{\mathcalligra}{T1}{calligra}{m}{n}
\DeclareFontShape{T1}{calligra}{m}{n}{<->s*[2.2]callig15}{}

\usepackage[list=true]{subcaption}
\usepackage{babel}

\newtheorem{lemma}{Lemma}

\newcommand{\R}{{\mathbb{R}}}

\DeclareMathOperator*{\argmax}{arg\,max}

\DeclareMathOperator*{\argmin}{argmin}

\let\e\varepsilon

\makeatother

\usepackage[left=26mm, right=24mm]{geometry}

\usepackage{placeins}

\newboolean{show_LOUT}
\setboolean{show_LOUT}{false}

\usepackage[T1]{fontenc}
\usepackage{eurosym}
\def\yen{{\setbox0=\hbox{Y}Y\kern-.97\wd0\vbox{\hrule height.1ex
width.98\wd0\kern.33ex\hrule height.1ex width.98\wd0\kern.45ex}}}

\usepackage{array}  
 \usepackage{multirow,bigdelim}

\begin{document}

%
%
%
%

\title{An MBO scheme for clustering and semi-supervised clustering of signed networks}

\author[a,b]{\footnote{The authors are listed in alphabetical order.}Mihai Cucuringu}
\author[b,c,d]{Andrea Pizzoferrato}
\author[e]{Yves van Gennip}
\affil[a]{Department of Statistics and Mathematical Institute, University of Oxford, Oxford, UK}
\affil[b]{The Alan Turing Institute, London, UK}
\affil[c]{Department of Mathematics, Imperial College, London, UK}
\affil[d]{School of Mathematical Sciences, Queen Mary University of London, London, UK}
\affil[e]{Delft Institute of Applied Mathematics, Delft University of Technology, Delft, NL}

\renewcommand\Authands{ and }

\maketitle
%
%
%
%
\begin{abstract}

We introduce a principled method for the signed clustering problem, where the goal is to partition a weighted undirected graph whose edge weights take both positive and negative values, such that edges within the same cluster are mostly positive, while edges spanning across clusters are mostly negative. 
%
Our  method relies on a graph-based diffuse interface model formulation utilizing the Ginzburg--Landau functional, based on an adaptation of the classic numerical Merriman--Bence--Osher (MBO) scheme for minimizing such graph-based functionals. The proposed objective function  aims to minimize the total weight of inter-cluster positively-weighted edges, while maximizing the total weight of the inter-cluster negatively-weighted edges. Our method scales to 
large sparse networks, and can be easily adjusted to incorporate labelled data information, as is often the case in the context of semi-supervised learning. We tested our method  on a number of both synthetic stochastic block models  and real-world data sets (including financial correlation matrices),
and obtained promising results that compare favourably against a number of state-of-the-art approaches from the recent literature.
\end{abstract}

%
%
\tableofcontents{}

\vspace{5mm}
\textbf{Keywords:} MBO, clustering, signed networks, graph Laplacians, spectral methods, time series

%
%
\section{Introduction}
Clustering is one of the most widely used techniques in data analysis, and aims to identify groups of nodes  in a network  that  have similar behaviour or features.  It is arguably one of the first tools one uses when facing the task of understanding the structure of a network. 

Spectral clustering methods have become a fundamental tool with a broad range of applications  in applied mathematics and computer science, and  tremendous impact on  areas including scientific computing, network science, computer vision, machine learning and data mining, such as ranking and web search algorithms. On the theoretical side, understanding  the properties of  the spectrum (both eigenvalues and eigenvectors) of the adjacency matrix and its various corresponding Laplacians, is crucial for the development of efficient algorithms with performance guarantees, and leads to a very mathematically rich set of open problems.

Variational methods have a long history in mathematics and physics. In a variational model, the aim is to find minimizers of a given functional (which is often called an `energy' in recognition of the field's origins, which are rooted in physics, even in situations where this terminology is not physically justified). When formulated in a continuum setting, such models are closely related to partial differential equations (PDEs) which arise as the Euler--Lagrange equations or gradient flow equations in the model. These equations are often nonlinear and can have other interesting properties, such as nonlocality. In recent years it has been recognized that such properties can be useful for clustering problems as well \cite{merkurjev2013mbo,HuiyiTVModularity}. When attempting a hard clustering problem, i.e. a clustering problem in which each node is to be assigned to exactly one cluster, the approach is usually to relax the exact combinatorial problem. For example, in the case of \textit{spectral relaxations}, one typically leverages the spectrum of certain graph-based matrix operators, in particular their eigenvalues and especially eigenvectors.  
This introduces the question of how to translate the real-valued outcomes of the relaxed problem into a discrete cluster label. To this end, non-linearities can prove useful as they can be used to drive the real-valued approximate solution away from the decision boundaries. Moreover, what is non-local in a continuum setting, can be made local in a graph setting, by adding edges between nodes that are located far away in continuum space. Furthermore, since the equations which result from graph-based variational methods  
tend to resemble discretized PDEs, the large existing literature on numerical methods for solving PDEs often provides fast algorithms for solving such equations. Considerations of similar nature have led to a growing interest in the use of variational methods in a graph setting  \cite{bertozzi2016diffuse,merkurjev2015variational,vanGennipGuillenOstingBertozzi14,jordan1999introduction}.  

In certain applications, negative weights can be used to denote dissimilarity or distance between a pair of nodes in a network. The analysis of such signed networks has become an increasingly important research topic in the past decade, and examples include social networks that  contain both friend and foe links,   shopping bipartite networks that encode like and dislike relationships between users and products  \cite{banerjee2012partitioning}, and online news and  review websites such as Epinions \cite{epinions} and Slashdot \cite{slashdot} that allow users to approve or denounce others  \cite{leskovec2010predicting}. Such networks may be regarded as signed graphs, where a positive edge weight captures the degree of trust or similarity, while a negative edge weight captures the degree of distrust or dissimilarity. 
Further efforts for the analysis of signed graphs have lead to 
a wide range of applications including 
edge prediction~\cite{kumar2016wsn,leskovec2010predicting}, 
node classification~\cite{Bosch_2018_nodeclassification,tang2016nodeClassification},  
node embeddings~\cite{Chiang_2011_ELC,derr_2018_signedGraphConvolutionalNetwork,Kim_2018_SIDE_WWW, Wang_2017_signedNetworkEmbedding}, 
and node ranking~\cite{Chung_2013,Shahriari_2014}.
We refer the reader to~\cite{tang2015survey} for a recent survey on the topic.

Finally, another main motivation for our proposed line of work 
arises in the area of time series analysis, where clustering time series data is a task of particular importance \cite{aghabozorgi2015time}. It arises in a number of applications in a diverse set of fields ranging from biology 
(analysis of gene expression data \cite{fujita2012functional}), 
to economics (time series that capture macroeconomic variables \cite{focardi2001clustering}). 
A particularly widely used application originates from finance, where one aims to cluster  
the large number of time series corresponding to  financial instruments in the stock market \cite{ziegler2010visual, pavlidis2006financial}.
In all the above application scenarios, a popular approach in the literature is to consider the similarity measure given by the 
Pearson correlation coefficient (and related distances),  
that measures the  linear dependence between pairs of variables and takes values in $[-1,1]$.
Furthermore, the resulting empirical correlation matrix can be construed as a weighted network, where the signed edge weights capture pairwise correlations, which reduces the task of clustering the multivariate time series to that of clustering the underlying signed network.
For scalability and robustness reasons, one can further consider
thresholding on the p-value associated to each  individual sample correlation entry  \cite{Ha2015_Network_Threshold_pValue},  following tests of statistical significance, which renders the network sparse, and thus amenable to larger scale computations.
We refer the reader to the seminal work of Smith et al. \cite{Smith_2011_FMRI_networks_TimeSeries}  for a detailed survey and comparison of various methodologies for casting time series data into networks, in the context of fMRI time series. Their overall conclusion that  
correlation-based approaches are typically quite successful at estimating the connectivity of brain networks from fMRI time series data, provides sound motivation for the development of algorithms for analysis of signed networks and correlation clustering.

\textbf{Contribution}. Our main contribution is to extend the Ginzburg--Landau functional to the setting of clustering signed networks, %
and show the applicability of the MBO scheme in this context, for different types of constraints that arise in the semi-supervised setting.  
As a side contribution, we provide a more computationally efficient way to determine node-cluster membership (as an alternative to the traditional projection/thresholding step).
In addition, we show extensive numerical experiments on both synthetic and real data, showcasing the competitivity  of our approach when compared to state-of-the-art methods, across various noise and sparsity regimes.

\textbf{Paper outline}. The remainder of this paper is organized as follows. 
Section \ref{sec:related_lit} is a summary of related work from the signed clustering literature. 
Section \ref{sec:Ginzburg_Landau_overview} gives a brief overview of the  Ginzburg--Landau functional. 
Section \ref{sec:signedGL} extends the Ginzburg--Landau framework to the setting of signed graphs, by explicitly handling negative weights and modifying the MBO algorithm accordingly. 
Section \ref{sc:GLFwithcons} pertains to the semi-supervised learning setting, and 
incorporates various types of a priori  information available to the user.
Section \ref{sc:thealgo} summarizes the structure of the algorithm we propose. 
Section \ref{sec:numericalSection} contains numerical experiments under a signed stochastic block model, and includes various scenarios in the semi-supervised setting.
Section  \ref{sec:expBAmodel}  details the outcome of numerical experiments under the  Barab\'asi--Albert model. 
Section  \ref{sec:imageSeg} considers an application to image segmentation, Section \ref{sec:realData} to clustering correlation matrices arising from financial time-series data, and Section \ref{sec:migration} to computational social science using a network resulting from migration patterns between US counties.
Section \ref{sec:conclusion}  summarizes our findings  and details future research directions.
Finally, we defer to the Appendix the technical details that pertain to the projection/thresholding step.

\textbf{Notation}. We summarize our notation in Table \ref{tab:methodAbbrev}.
\begin{table} 
\begin{minipage}[b]{0.49\linewidth}
\begin{center}
\begin{tabular}{|p{1cm}|p{6cm}|}
\hline
Symbol & Description   \\
\hline
$\alpha$  &  percentage of known information\\
$A$  &  affinity matrix of the signed case\\
$A^+$ & positive part of $A$\\
$A^-$ & negative part of $A$\\
$B$  &  time evolution matrix in the MBO scheme with affinity matrix $N$\\
$\bar{B}$  &  time evolution matrix in the MBO scheme with affinity matrix $A$\\
$c_i$ & size of clusters of the Signed Stochastic Block Model\\
$d\tau$  &  time discretization parameter\\
$D$  &  degree matrix of $N$\\
$\bar{D}$  &  degree matrix of $A$\\
$\underbar{\ensuremath{e}}$  &  element of the vertices of the probability simplex\\
$\underbar{\ensuremath{e}}^\pm$  &  element of the vertices of $\Sigma_K^\pm$\\
$\varepsilon$  &  parameter of the GL functional \\
$F_\varepsilon$  &  GL functional\\
$\bar{F}_\varepsilon$  & signed GL functional\\
$I_V$ & Identity matrix with size $V$\\
$K$  &  number of clusters\\
$\Lambda$  &  eigenvalue matrix\\
$L$  &  Laplacian of $N$\\
$L^+$ & Laplacian of $A^+$\\
$L_{\text{rw}}$  &  random walk Laplacian of $N$\\
$L_{\text{sym}}$  &  symmetric Laplacian of $N$\\
$\bar{L}$  &  signed Laplacian\\
$\bar{L}_{\text{rw}}$  &  signed random walk Laplacian  \\
$\bar{L}_{\text{sym}}$  &  signed symmetric Laplacian\\
$\lambda$ & sparsity probability\\
$\lambda^+$, $\lambda^-$ & trade-off parameters for must- and cannot-links, respectively\\
$m$ & number of eigenvector considered in the expansion of $\bar{B}$\\
$n$ & step number in the MBO algorithm\\
$N$  &  affinity matrix of the positive edges case\\
$N_\tau$  &  number of repetitions of the diffusion step\\
$\mathbb{N}_+$  &  natural numbers excluding $0$.\\
$\eta$ & noise probability\\
\hline
\end{tabular}
\end{center}
\end{minipage}
\begin{minipage}[b]{0.49\linewidth}
\begin{center}
\begin{tabular}{|p{1cm}|p{6cm}|}
\hline
Symbol & Description   \\
\hline
$Q^-$ & signless Laplacian\\
$\rho$  & system density 
for one-dimensional continuous version of the GL functional \\
$\underbar{\ensuremath{\rho}}$  &  vertex set values of the discrete bi-dimensional case\\
$R^{fi}$, $R^{av}$ & weight matrix for the fidelity and avoidance contributions, respectively\\
$\mathbb{R}$  &  real numbers \\
$\mathbb{R}_+$  &  positive real numbers excluding $0$\\
$\mathbb{R}_{+0}$  &  positive real numbers including $0$\\
$S$  &  ground truth affinity matrix for the Stochastic Signed Block Model\\
$\sigma$  &  component of the probability simplex\\
$\sigma^\pm$  &  component of the $\Sigma_K^\pm$ simplex\\
$\Sigma_K$  &  probability simplex\\
$\Sigma_K^{\pm}$  &  `$2-K$' simplex (see Equation \eqref{eq:2mKsimplex})\\
$t$  &  time variable\\
$\underbar{\ensuremath{u}}$  &  column vector carrying the weight of a node on various clusters\\
$\underbar{\ensuremath{u}}^\pm$  &  column vector carrying the weight of a node on various clusters in the signed case\\
$U$  &  node-cluster association matrix\\
$U^\pm$  &  node-cluster association matrix in the signed case\\
$V$  &  number of nodes\\
$x$  &  space coordinate\\
$X$  &  eigenvectors matrix\\
$\underbar{\ensuremath{0}}_K$  &  column vector of length $K$ whose elements are all $0$\\
$0_{V\times K}$  &  matrix with $V$ rows and $K$ columns whose elements are all $0$\\
$\underbar{\ensuremath{1}}_V$  &  column vector of length $V$ whose elements are all $1$\\
$1_V$  &  square matrix of size $V$ whose elements are all $1$\\
$\circ$ & Hadamard product\\
$\left|\cdot\right|$ & absolute value\\
$\left\Vert \cdot\right\Vert _{1}$ & taxicab metric\\
$\left\Vert \cdot\right\Vert _{2}$ & Euclidean distance\\
\hline
\end{tabular}
\end{center}
\end{minipage}
\vspace{-2mm}
\caption{Summary of frequently used symbols in the paper from Section \ref{sec:Ginzburg_Landau_overview} onwards.}
\label{tab:methodAbbrev}
\end{table}

\section{Related literature}  \label{sec:related_lit}

The problem of clustering signed graphs traces its roots to the work of Cartwright and Harary from the 1950s from  social balance theory \cite{HararySigned,cartwright1956structural}. 
Consider, for example, three mutually connected nodes from a signed graph, and  define the relationship between two nodes to be \text{\it positive} if the edge in-between is $+1$ (they are ``friends''), and \text{\it negative} if the edge in-between is $-1$ (they are ``enemies''). We say that such a triad is \text{\it balanced} if the product of its signs is positive, and that the signed graph is balanced if all of its cycles are positive \cite{cartwright1956structural}, i.e., each cycle contains an even number of negative edges. 
Cartwright and Harary proved the so-called \textit{structure theorem} that if a signed graph  $G=(V,E)$ is \textit{balanced}, then the set of vertices, $V$ can be partitioned into two subsets, called plus-sets, such that all of the positive edges are between vertices within a plus-set and all negative edges are between vertices in different plus-sets.

Davis \cite{DavisWeakBalance} proposed a weaker notion of balance, \textit{weak balance theory},  which relaxed the balanced relationship by allowing an enemy of one's enemy to also be an enemy, and showed that a network with both positive and negative edges is k-weakly balanced if and only if its nodes can be clustered into $k$ groups such that edges within groups are positive and edges between groups are negative. 
For a $k$-balanced network, the problem of finding the partition of the node set can be viewed as an optimization problem: find the $k$-partition $C$ which minimizes $P_\alpha: \Phi \rightarrow \mathbb{R}$ given by
\begin{equation}
P_\alpha(C) := \alpha{\sum}_n+(1-\alpha){\sum}_p,
\end{equation}
over $\Phi$, the set of all possible $k$-partitions. Here $\sum_p$ is the number of positive arcs between plus-sets, $\sum_n$ is the number of negative arcs within plus-sets, and $\alpha \in \left(0,1\right)$ is a weighting constant  \cite{doreian1996partitioning}. If the graph is $k$-balanced the set of minimizers of $P$ will not change when $\alpha$ varies; if the graph is not $k$-balanced, the parameter $\alpha$ allows control over which kind of error is penalized more strongly, positive ties between groups or negative ties within groups.

Motivated by this characterization, the $k$-way clustering  problem in signed networks amounts to finding a partition into $k$ clusters such that most edges within clusters are positive and most edges across clusters are negative. As an alternative formulation, one may seek a partition such that the number of \textbf{violations} is minimized, i.e., negative edges within the cluster and positive edges across clusters.  In order to avoid partitions where clusters contain only a few nodes, one often prefers to incentivize  clusters of similar large size or volume, leading to balancing factors such as the denominators $x_c^T x_c$ in \eqref{obj_BRatioC} or $x_c^T \bar{D} x_c$ in \eqref{obj_BNormC},  discussed further below.

Weak balance theory has motivated the development of algorithms for clustering signed networks. Doreian and Mrvar \cite{DoreianMrvar20091} proposed a local search approach in the  spirit of the  Kernighan--Lin algorithm \cite{KernighanLin}. Yang et al. \cite{YangCheungLiu} introduced an agent-based approach by  considering a certain random walk on the graph.  Anchuri et al. \cite{anchuri2012communities}  
proposed a spectral approach to optimize modularity and other objective functions in signed networks, and demonstrated its  effectiveness in detecting communities from real world networks.
In a different line of work, known as  \textit{correlation clustering}, Bansal et al. \cite{BansalBlumCorrClust} considered the  problem of clustering signed complete graphs, proved that it is NP-complete, and proposed two approximation algorithms with theoretical guarantees on their performance.  On a related note, Demaine and Immorlica \cite{DemaineImmorlicaCorrClust} studied the same problem but for arbitrary weighted graphs, and proposed an O($\log n$) approximation algorithm based on linear programming. For correlation clustering, in contrast to k-way clustering, the number of clusters is not given in advance, and there is no normalization with respect to size or volume.

Kunegis et al. \cite{kunegis2010spectral} proposed new  spectral tools for clustering, link prediction, and visualization of signed graphs, 
by solving a signed version of the 2-way ratio-cut problem   via the signed graph Laplacian \cite{HouSignedLap}
\begin{equation}
\bar{L} := \bar{D} - A, 
\label{signed_Lap} 
\end{equation}
where $\bar{D}$ is the diagonal matrix with $\bar{D}_{ii} = \sum_{i=1}^{n} |d_{ij}|$. 
The authors proposed a unified approach 
to handle graphs with both positive and negative edge weights, motivated by a number of tools and applications, including random walks, graph clustering, 
graph visualization and electrical networks. The signed extensions of the combinatorial Laplacian  also exist for the random-walk normalized Laplacian and the symmetric graph Laplacian
\begin{equation}
\bar{L}_{\text{rw}} := I - \bar{D}^{-1} A,
\label{signed_Lap_rw}
\end{equation}
\vspace{-4mm}
\begin{equation}
\bar{L}_{\text{sym}} := I - \bar{D}^{-1/2} A \bar{D}^{-1/2},
\label{signed_Lap_sym} \\
\end{equation}
the latter of which is particularly suitable for skewed degree distributions. Here $I$ denotes the identity matrix.
Recent work, by a subset of the authors in our present paper, provided the first theoretical guarantees, under a signed stochastic block model, for the above Signed Laplacian $\bar{L}$   \cite{SPONGE}.

In related work,  Dhillon et al. \cite{DhillonBalNormCut} proposed a formulation based on the  \textit{Balanced Ratio Cut} objective
\begin{equation}
\operatorname{min}_{ \{x_1,\ldots,x_k\} \in \mathcal{I}_k } \left(  \sum_{c=1}^{k} \frac{x_c^T(D^{+} - A)x_c }{x_c^T x_c}  \right),
\label{obj_BRatioC}   
\end{equation}
and the closely related \textit{Balanced Normalized Cut} (BNC) objective for which the balance factors $ x_c^T x_c$ in the denominators 
are replaced by $x_c^T \bar{D} x_c$, 
\begin{equation}
\operatorname{min}_{ \{x_1,\ldots,x_k\} \in \mathcal{I}_k} \left(  \sum_{c=1}^{k} \frac{x_c^T(D^{+} - A)x_c }{x_c^T \bar{D} x_c}  \right).
\label{obj_BNormC}  
\end{equation}
Here $\mathcal{I}_k$ denotes the collection of all sets containing the indicator vectors of a $k$-partition of the node set $V$, i.e. if $\{x_1,\ldots,x_k\} \in \mathcal{I}_k$, then each of the $k$ vectors has entries
\begin{equation}
x_t(i) := \left\{
 \begin{array}{rl}
 1, & \; \text{if node } i \in C_t,\\
 0, & \; \text{otherwise },	\\
     \end{array}
   \right.
\label{def:indicatorFcn}
\end{equation} 
for a collection of disjoint clusters $C_t$ which partitions the node set $V$. 
The same authors  
claimed  that the approach  of Kunegis et al. \cite{kunegis2010spectral} via the Signed Laplacian faces a fundamental weakness when  directly  extending it to $k$-way clustering, and obtained better results on several real data sets via their proposed algorithm  which optimizes the BNC objective function. On the other hand, \cite{SPONGE} reported good numerical performance of the  Signed Laplacian even for higher values of $k$, aside from providing a robustness analysis of it for $k=2$.  

A rather different approach for signed graph clustering has been investigated by Hsieh et al. \cite{DhillonLowRank}, who proposed a low-rank model, based on the observation that
the problem of clustering a signed network can be interpreted as recovering the missing structure of the network. 
The missing relationships are recovered via state-of-the-art matrix completion algorithms (MC-SVP),  which first  complete the network and then recover the clusterings via the k-means algorithm using the top $k$ eigenvectors of the completed matrix. An alternative method (MC-MF),  scalable to very large signed networks, is based on a low-rank matrix factorization approach, which first completes the network using matrix factorization and derives the two low-rank factors  $U,H \in \mathbb{R}^{n \times k}$, runs k-means on both $U$ and $V$, and selects the clustering which achieves a lower BNC objective function \eqref{obj_BNormC}.

Both strong and weak balance are defined in terms of local structure at the level of triangles. As observed early in the social balance theory literature, local patterns between the nodes of a signed graph induce certain global characteristics. Chiang et al. \cite{DhillonLocalGlobal} exploited both local and global aspects of social balance theory, and proposed algorithms for sign prediction and clustering of signed networks. They defined more general measures of social imbalance based on cycles in the graph, and showed that the classic Katz measure, ubiquitous for unsigned link prediction, also has a balance theoretic interpretation in the context of signed networks. The global structure of balanced networks leads to a low-rank matrix completion approach 
via convex relaxation, which scales well to very large problems. This approach is made possible by the fact that the adjacency matrix of a complete k-weakly balanced network has rank 1 if $k \leq 2$, and rank $k$ for all $k>2$ \cite{DhillonLocalGlobal}.

The recent work of Mercado et al. \cite{Mercado_2016_Geometric} presented an extended spectral method based on the geometric mean of Laplacians.  In another recent work, a subset of the authors showed in \cite{sync_congress} that, for $k=2$, signed clustering can be formulated as an instance of the group synchronization problem over $\mathbb{Z}_2$, for which spectral, semidefinite programming relaxations, and message passing algorithms have been studied; extensions to the multiplex setting and incorporation of constraints have also been considered. 
Finally, we refer the reader to \cite{JeanGallierSurvey,fox2018investigation} for recent surveys on clustering 
signed and unsigned graphs.

%
%
\section{Ginzburg--Landau functional}  \label{sec:Ginzburg_Landau_overview}

The \textit{Ginzburg--Landau functional} (GLF) was originally introduced in the materials science literature to study phase separation, and has since also found many uses in image processing and image analysis due to its connections with the total variation functional \cite{ModicaMortola77,Modica87}. In this section, we briefly review the main idea behind the GLF together with a description of the (local) minimizer(s) of the functional in terms of solutions of the \textit{Allen--Cahn equation} (ACE). Then, we will show how this approach can be adapted to the discrete case of clustering over signed graphs.

\subsection{GLF as potential energy}
In this introductory presentation, which is based on \cite{weizhu2007dynamics, bartels2015numerical}, we restrict to the one dimensional case for simplicity. Consider the function $\rho$ in the Sobolev space $\mathscr{W}^{1,2}\left(\left[0,1\right]\right)$. The GLF $F_\varepsilon:\mathscr{W}^{1,2}\left(\left[0,1\right]\right)\mapsto \left[0,\infty\right)$ is defined as
\begin{equation}\label{eq:GLfunctional}
F_{\varepsilon}\left[\rho\right]\coloneqq\frac{\varepsilon}{2}\int_{0}^{1}\left(\frac{d\rho}{ds}\right)^{2}ds+\frac{1}{4\varepsilon}\int_{0}^{1}\left(1-\rho^{2}\right)^{2}ds ,
\end{equation}
for some $\varepsilon>0$. The second term has the shape of a double-well and is often labelled $W\left(\rho\right)\coloneqq\left(1-\rho^{2}\right)^{2}\in\mathbb{R}_+$. The GLF is interpreted as the \textit{potential energy} for the physical observable $\rho$. At this stage, $\rho$ is also understood as time dependent and we want to find solutions of
\begin{equation}\label{eq:energypot}
\frac{d\rho}{dt}\left(t,x\right)=-\frac{\delta F_\varepsilon}{\delta \rho}\left[\rho\right]\left(t\right),
\end{equation}
where $\frac{\delta F}{\delta \rho}$ is the functional derivative of $F$, which lead in their time evolution to the local minima of the GLF. A common choice for the boundary values of the problem \eqref{eq:energypot} is the \textit{von Neumann condition}
\begin{equation}
\frac{d\rho}{dx}\left(t,1\right)=\frac{d\rho}{dx}\left(t,0\right)=0\quad\text{for}\quad{t>0}.
\end{equation}
In this way, calculating the functional derivative of the right-hand side in \eqref{eq:energypot} gives rise to the ACE
\begin{equation}\label{eq:ACequation}
\frac{d\rho}{dt}\left(t,x\right)=\varepsilon\frac{d^{2}\rho}{dx^{2}}\left(t,x\right)-\frac{1}{\varepsilon}\rho\left(t,x\right)\left(\rho^{2}\left(t,x\right)-1\right),
\end{equation}
where the last term is often called $W'\left(\rho\right)\coloneqq \rho\left(\rho^{2}-1\right)\in\mathbb{R}$. Moreover, it is possible to verify by direct calculation that the GLF is decreasing in time on the solution of the ACE, that is
\begin{equation}
\frac{dF_\varepsilon\left[\rho\right]}{dt}\left(t\right)=-\int_{0}^{1}\left(\frac{d\rho}{dt}\left(t,s\right)\right)^{2}ds<0,
\end{equation}
as is expected for the evolution of a physical observable along its energy potential. The meaning of the coefficient $\varepsilon$ can be understood as follows. A stationary solution for the ACE, such that $\frac{d\rho}{dt}\left(t,x\right)=0\,\forall t\in\mathbb{R}_+$ in \eqref{eq:ACequation}, is given by
\begin{equation}\label{eq:solutionACstat}
\rho^{*}_\varepsilon\left(x\right)\coloneqq\tanh\left(\frac{x}{\varepsilon\sqrt{2}}\right).
\end{equation}
By taking the first derivative with respect to the spatial coordinate, we can see that the parameter $\varepsilon$ is responsible for the sharpness of the solution (see Figure \ref{fig:solutionACstat}). Thus, given the values of the function $\rho^{*}_{\varepsilon}$ at the boundary of its domain, the ACE can be understood as the evolution of the layer profile connecting the two boundary values of the system.
\begin{figure}
    \centering
    \includegraphics[width=0.42\textwidth]{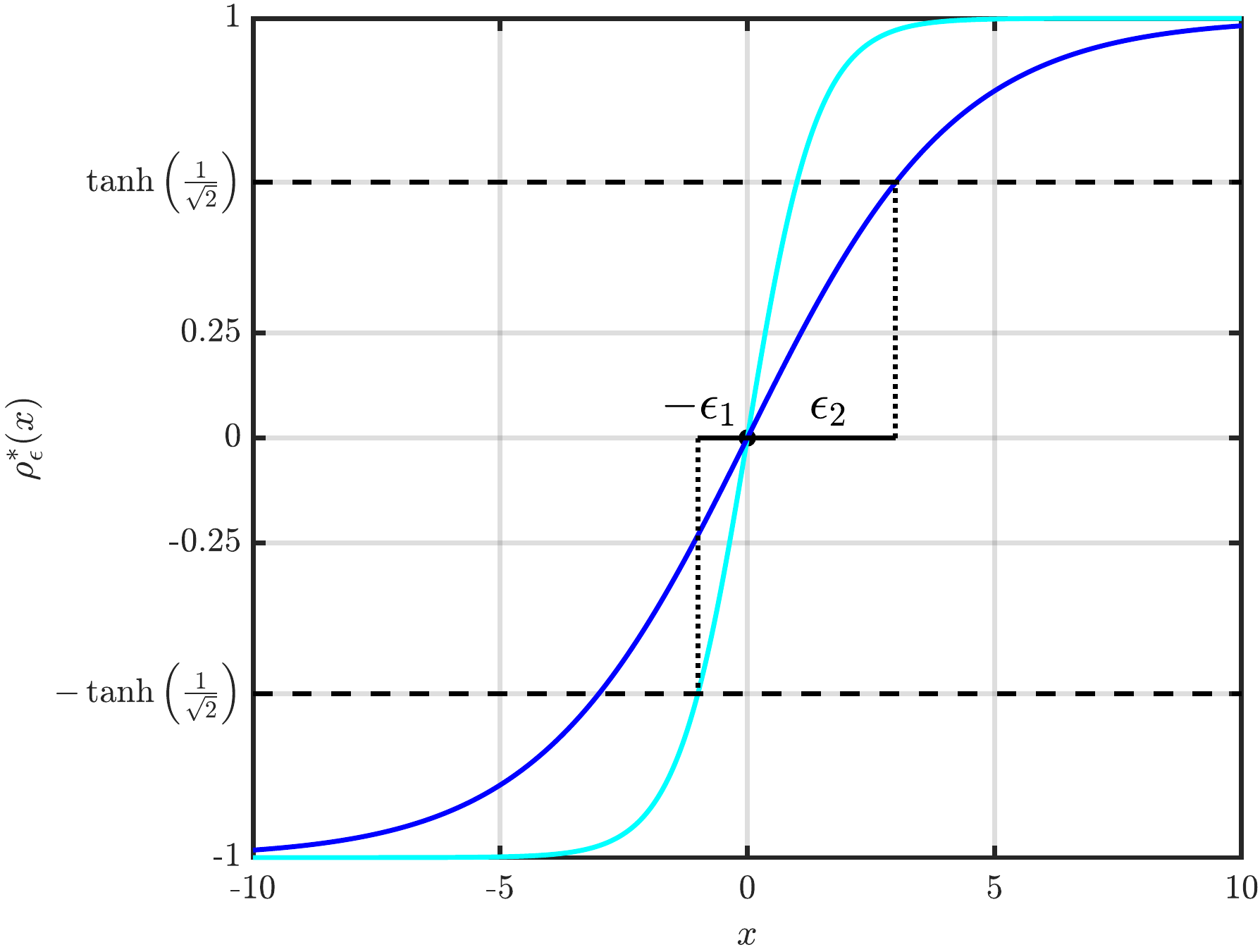}
    \caption{Plot of the stationary solution \eqref{eq:solutionACstat} of the ACE \eqref{eq:ACequation} for two parameter values $\varepsilon_2$ (blue curve) and $\varepsilon_1$ (cyan curve) with $\varepsilon_2>\varepsilon_1$. We can see that bigger values of $\varepsilon$ reduce sharpness of the curve $\rho^*_\varepsilon\left(x\right)$. Moreover, $\rho_{\varepsilon_{1}}^{*}\left(\varepsilon_{1}\right)=\rho_{\varepsilon_{2}}^{*}\left(\varepsilon_{2}\right)=\tanh\left(\frac{1}{\sqrt{2}}\right)\approx0.609$.}
    \label{fig:solutionACstat}
\end{figure}
\subsection{GLF and clustering}
Using tools from nonlocal calculus, see Section 2.3 in \cite{merkurjev2013mbo}, it is possible to write a version of the GLF where the density function is defined on a network. To fix the notation, we denote as $G\left(V,E,N\right)$ a positive-weighted undirected graph with \textit{vertex set} $V$, \textit{edge set} $E$ and \textit{affinity matrix} $N$ which is a symmetric matrix whose entries $N_{ij}$ are zero if $i=j$ or $i\not\sim j$\footnote{The symbol $i\sim j$, respectively  $i\not\sim j$, denotes that nodes $i$ and $j$ are connected, respectively not connected, by an edge.} and nonnegative otherwise.
From the affinity matrix, ones defines the diagonal  \textit{degree matrix} as
\begin{equation}\label{eq:degreeMatdef}
D_{ij}\coloneqq\left\{ \begin{array}{cc}
\begin{array}{c}
\sum_{k=1}^{V}N_{ik}\in\mathbb{R}_{+0}\\
0
\end{array} & \begin{array}{c}
\text{if }i=j\\
\text{if }i\neq j
\end{array}\end{array}\right. ,
\end{equation}
where $V$ is the cardinality of the vertex set. When the entries of the affinity matrix are positive, the associated network is often called an \textit{unsigned graph}. In the same way, the discrete equivalent of the density function is now a column vector on the vertex set, that is $\underbar{\ensuremath{\rho}}:\mathbb{N}\times\left[0,T\right]\mapsto\mathbb{R}^{V}$, written as
\begin{equation}
\underbar{\ensuremath{\rho}}\left(t\right)\coloneqq\left\{ \rho_{0}(t),\rho_{1}(t),\dotsc,\rho_{V}(t)\right\} .
\end{equation}
Following \cite{bertozzi2016diffuse}, it turns out that the network version of \eqref{eq:GLfunctional} for unsigned graphs, also called \textit{graph GLF} (GGLF), is given by
\begin{equation}\label{eq:netGLF}
F_{\varepsilon}(\underbar{\ensuremath{\rho}})\coloneqq\frac{\varepsilon}{2}\langle\underbar{\ensuremath{\rho}},L\underbar{\ensuremath{\rho}}\rangle+\frac{1}{4\varepsilon}\sum_{i=1}^{V}(\rho_{i}^{2}-1)^{2},
\end{equation}
where $L\coloneqq D-N\in\mathbb{R}^{V\times V}$ is the (combinatorial) graph Laplacian,  and $\langle \cdot, \cdot \rangle$ denotes the usual inner product on $\R^{V}$. It is important to notice that the following expression holds true  
\begin{equation}\label{eq:identityLap}
\langle\underbar{\ensuremath{\rho}},L\underbar{\ensuremath{\rho}}\rangle=\sum_{i=1}^{V}N_{ij}(\rho_{i}-\rho_{j})^{2} \geq 0.
\end{equation}
The inequality in \eqref{eq:identityLap} shows that $L$ is positive semidefinite (i.e., $L$ has non-negative, real-valued eigenvalues, including $0$) and this assures the non-negativity of the GGLF \eqref{eq:netGLF}. Again, the minimization of the GGLF can shed light on the interpretation of the terms composing the potential. The second term in $F_\e$ will force minimizers $\rho\in \R^{V}$ to have entries which are close to $-1$ or $1$ in order to keep $(\rho_i^2-1)^2$ small. The first term in $F_\varepsilon$ penalizes the assignment to different clusters of nodes that are connected by highly weighted edges. This means that minimizers of $F_\varepsilon$ are approximate indicator vectors of clusters that have few highly weighted edges between them. In the end, we interpret nodes having different sign as belonging to different clusters.

As before, minimizing the GGLF using the method of the gradient descent, we obtain the ACE for networks which we write component-wise as
\begin{equation}\label{eq:ACEnetwork}
\frac{d\rho_{i}}{dt}\left(t\right)=-\varepsilon\left[L\underbar{\ensuremath{\rho}}\right]_{i}\left(t\right)-\frac{1}{\varepsilon}\rho_{i}\left(t\right)\left(\rho_{i}^{2}\left(t\right)-1\right) ,
\end{equation}
where $\left[L\underbar{\ensuremath{\rho}}\right]_{i}$ indicates the $i$-th component of the vector $L\underbar{\ensuremath{\rho}}$.
Instead of the combinatorial graph Laplacian $L$, it is possible to use the random walk Laplacian $L_{\text{rw}} := I - D^{-1} N$ (which is positive semidefinite), or the symmetric graph Laplacian $L_{\text{sym}} := I - D^{-1/2} N D^{-1/2}$ (symmetric positive semidefinite). In fact, \cite{BertozziFlenner12} argues for the use of $L_{\text{sym}}$, but no rigorous results are known that confirm this is necessarily the best choice for each application.


\subsection{Numerical solution of the ACE}\label{sc:numsolACE}

Various methods have been employed to numerically compute (approximate) minimizers of $F_\e$. In \cite{BertozziFlenner12} a gradient flow approach is used and \eqref{eq:ACEnetwork}, interpreted as a system of ordinary differential equations (ODEs), is solved. The hope is that the gradient flow dynamics will converge to a local minimizer of $F_\e$. Under some assumptions \cite{absil2006stable}, this hope is theoretically justified \cite{Luo2017}, but numerically the nonconvexity of $F_\e$ can cause some issues. In \cite{BertozziFlenner12} such issues were tackled with a convex splitting scheme. 

Another strategy which has been developed to approximately solve the ACE for networks consists in using a variation of the \textit{Merriman--Bence--Osher} (MBO) scheme (also known as \textit{threshold dynamics}). This was first introduced in \cite{MBO1992,MBO1993} in a continuum setting as a proposed algorithm to approximate flow by mean curvature. Following \cite{merkurjev2013mbo}, we first neglect the terms $W'\left(\rho_i\right)$ and discretize \eqref{eq:ACEnetwork} in the following way
\begin{equation}\label{eq:discrMBO}
\frac{\underbar{\ensuremath{\rho}}^{n+\frac{1}{2}}-\underbar{\ensuremath{\rho}}^{n}}{d\tau}=-\varepsilon L\underbar{\ensuremath{\rho}}^{n+\frac{1}{2}} ,
\end{equation}
where $d\tau\in\mathbb{R}_+$ is the length of a small time interval to be tuned and the superscript $n$ indicates the step of the iterative procedure ($n+\frac{1}{2}$ simply specifies an intermediate step between the $n$-th step and the subsequent one which will be labelled as $n+1$). The above leads to the \textit{MBO equation} (MBOE)
\begin{equation} \label{eq:MBOE1d}
\underbar{\ensuremath{\rho}}^{n+\frac{1}{2}}=\left(I_{V}+\varepsilon d\tau L\right)^{-1}\underbar{\ensuremath{\rho}}^{n} ,
\end{equation}
where $I_{V}\in\left\{ 0,1\right\} ^{V\times V}$ is the identity matrix. After performing the eigendecomposition of the Laplacian, \eqref{eq:MBOE1d} becomes
\begin{equation}
\underbar{\ensuremath{\rho}}^{n+\frac{1}{2}}=\left[X\left(I_{V}+d\tau\varepsilon\Lambda\right)X^{T}\right]^{-1}\underbar{\ensuremath{\rho}}^{n} ,
\end{equation}
where $X$ is a matrix whose columns are the normalized eigenvectors of $L$ and $\Lambda$ is the relevant eigenvalue diagonal matrix. Since the double-well potential pushes the components of $\underbar{\ensuremath{\rho}}$ to be either $+1$ or $-1$, its role can be approximated via a thresholding step, in the following way
\begin{equation} \label{eq:MBOEthres1d}
\rho_{i}^{n+1}=T\left(\rho_{i}^{n+\frac{1}{2}}\right)\coloneqq\left\{ \begin{array}{c}
\begin{array}{cc}
+1 & \text{if }\rho_{i}^{n+\frac{1}{2}}\geqslant0\end{array}\\
\begin{array}{cc}
-1 & \text{if }\rho_{i}^{n+\frac{1}{2}}<0\end{array}
\end{array}\right. .
\end{equation}
This procedure can be iterated until a convergence criterion is met. In practice, the diffusion step \eqref{eq:MBOE1d} is repeated $N_\tau\in\mathbb{N}_+$ times before the thresholding step \eqref{eq:MBOEthres1d}. This corresponds to multiplying $\underbar{\ensuremath{\rho}}^{n}$ in \eqref{eq:MBOE1d} by $\left[X\left(I_{V}+\frac{d\tau}{N_{\tau}}\varepsilon\Lambda\right)X^{T}\right]^{-N_{\tau}}$ (note the division of $d\tau$ by $N_\tau$).

The graph version of MBO has been used successfully in recent years in various applications, both applied to $F_\e$ \cite{merkurjev2013mbo}, its multi-class extension \cite{merkurjev2014graph}, and its signless Laplacian variant \cite{KeetchvanGennip17}. In general, we notice that the choice of $\e$ also poses a problem. Smaller values of $\e$ are expected to lead to minimizers $\underbar{\ensuremath{\rho}}$ that are closer to being indicator vectors, however the discrete length scales of the problem impose a lower bound on the $\e$ values that can be resolved. The MBO scheme avoids this problem since the $W'\left(\rho_i\right)$ are substituted with a thresholding step. In this way, we can fix $\varepsilon=1$. Using the MBO scheme instead of the gradient flow does not solve all problems: It eliminates the terms in the ODEs that originated with the non-convex term in $F_\e$ and it removes $\e$ from consideration, but instead a value for $d\tau$ has to be chosen. The discrete length scales in the problem disallow values that are too small, yet $d\tau$ has to be small enough to prevent oversmoothing by the diffusion term $\left[L\underbar{\ensuremath{\rho}}\right]_{i}$.

\subsection{Generalization to multiple clusters}\label{sc:genmulcl}

The Ginzburg--Landau method has been adapted to allow for clustering or classification into $K>2$ clusters \cite{Garcia-CardonaFlennerPercus13,MerkurjevGarcia-CardonaBertozziFlennerPercus14,Garcia-CardonaMerkurjevBertozziFlennerPercus14,Garcia-CardonaFlennerPercus15}. We briefly review this line of work here, since our proposed approach for clustering signed graphs is for general $K$ as well.
This construction relies upon an \textit{assignment} or \textit{characteristic matrix} whose entries correspond to the weight for a node $i$ to belong to a cluster $k$. More precisely, consider the simplex
\begin{equation}
\Sigma_{K}\coloneqq\left\{ \left(\begin{array}{cccc}
\sigma_{1}, & \sigma_{2}, & \dotsc, & \sigma_{K}\end{array}\right)\in\left[0,1\right]^{K}:\sum_{k=1}^{K}\sigma_{k}=1\right\} ,
\end{equation}
which corresponds to row vectors whose components sum up to $1$. Then, we can associate to each node a \textit{characteristic vector}\footnote{We generally refer to the vector whose entries are the ``intensity" of belongingness to a cluster as \textit{characteristic vector} also if these coefficients are not only $0$ or $1$.}, written as a column vector
\begin{equation}
\underbar{\ensuremath{u}}_{i}\coloneqq\left(\begin{array}{cccc}
u_{i1}, & u_{i2}, & \dotsc, & u_{iK}\end{array}\right)^T\in\Sigma_{K} ,
\end{equation}
which specifies the weight for the node $i$ to belong to each of the clusters in the network. So, the characteristic matrix of all the nodes in the system is defined as
\begin{equation}\label{eq:origcharU}
U\coloneqq\left(\begin{array}{cccc}
\underbar{\ensuremath{u}}_{1}, & \underbar{\ensuremath{u}}_{2}, & \dotsc, & \underbar{\ensuremath{u}}_{V}\end{array}\right)^{T}\in\left(\Sigma_{K}\right)^{V} .
\end{equation}
In the end, as explained in \cite{Garcia-CardonaMerkurjevBertozziFlennerPercus14}, the generalization of the GLF reads as 
\begin{equation}\label{eq:multiKGLF}
F_{\varepsilon}(U)\coloneqq\frac{\varepsilon}{2}\langle U,LU\rangle+\frac{1}{2\varepsilon}\sum_{i=1}^{V}\left(\prod_{k=1}^{K}\frac{1}{4}\left\Vert \underbar{\ensuremath{u}}_{i}-\underbar{\ensuremath{e}}_{k}\right\Vert _{1}^{2}\right) ,
\end{equation}
where $\left\Vert \cdot\right\Vert _{1}$ is the taxicab metric, $\langle U,LU\rangle=\text{Tr}\left(U^{T}LU\right)$ and $\underbar{\ensuremath{e}}_{k}$ correspond to the canonical basis vector defined as
\begin{equation}
\underbar{\ensuremath{e}}_{k}\coloneqq\left(\begin{array}{cccc}
e_{1}, & e_{2}, & \dotsc, & e_{K}\end{array}\right)^T\in\left\{ 0,1\right\} ^{K}, 
\end{equation}
whose components satisfy $e_{i=k}=1$ and $e_{i\neq k}=0$. In this context, the MBO scheme for the characteristic matrix reads as
\begin{equation}\label{eq:intermediateMBO}
U^{n+\frac{1}{2}}=BU^{n},\quad\text{where}\quad B\coloneqq\left[X\left(I_{V}+\frac{d\tau}{N_\tau}\Lambda\right)X^{T}\right]^{-N_\tau}\in\mathbb{R}^{V\times V}
\end{equation}
where we have fixed $\varepsilon=1$, as explained at the end of Section \ref{sc:numsolACE}, and we have included the contribution from the diffusion iteration directly in $B$ as explained below equation \eqref{eq:MBOEthres1d}. Then, following \cite{Garcia-CardonaMerkurjevBertozziFlennerPercus14}, we project each row of $U^{n+\frac{1}{2}}$ on the simplex $\Sigma_K$. We call the corresponding vector 
\begin{equation}\label{eq:projOLD}
\underbar{\ensuremath{v}}_{i}\coloneqq\argmin_{\underbar{\ensuremath{y}}\in\Sigma_K}\left\Vert \underbar{\ensuremath{u}}_{i}^{n+\frac{1}{2}}  -\underbar{\ensuremath{y}} \right\Vert _{2}\in\left[0,1\right]^{K}.
\end{equation}
It is important to notice that in the intermediate step $\underbar{\ensuremath{u}}_{i}^{n+\frac{1}{2}}\in\mathbb{R}^{K}$. The thresholding step consists in determining the closest vertex to $\underbar{\ensuremath{v}}_{i}$ of $\Sigma_K$, that is\footnote{When $\underbar{\ensuremath{v}}_{i}$ has the same distance to multiple vertices, one of those is picked at random with uniform probability.}
\begin{equation}\label{eq:threOLD}
\underbar{\ensuremath{u}}_{i}^{n+1}=\underbar{\ensuremath{e}}_{k}\quad\text{with}\quad k=\argmin_{j=1,\dotsc,K}\left\Vert \underbar{\ensuremath{v}}_{i}  -\underbar{\ensuremath{e}}_j \right\Vert _{2}.
\end{equation}
In the end, we have $U^{n+1}=\left(\begin{array}{cccc}
\underbar{\ensuremath{u}}_{1}^{n+1}, & \underbar{\ensuremath{u}}_{2}^{n+1}, & \dotsc, & \underbar{\ensuremath{u}}_{V}^{n+1}\end{array}\right)^{T}$. In Section \ref{sc:shortMBO}, we will use a simplified version of the MBO scheme which reduces the projection \eqref{eq:projOLD} and thresholding \eqref{eq:threOLD} to a single step (the equivalence of these two versions will be proved in the Appendix \ref{app:projsimplex}).

%
%

\section{Ginzburg--Landau functional on signed graphs}  \label{sec:signedGL}

In this section, we define a new affinity matrix that also incorporates negative weights, and investigate how the degree and Laplacian matrices can be suitably modified. In this way, we also show how negative weights can be interpreted in the context of the GL functional. Moreover, we discuss how the standard MBO algorithm can be simplified.
\subsection{Signed Laplacians}\label{sc:signedLap}
In the following, we will consider connected weighted undirected simple graphs (hence with symmetric affinity matrices and no self-loops). The corresponding affinity matrix $A$ is a symmetric matrix whose entries $A_{ij}$ are zero if $i=j$ or $i\not\sim j$ and nonzero real otherwise.
This can be decomposed in two matrices $A^+\in\mathbb{R}_{+0}^{V\times V}$ and $A^-\in\mathbb{R}_{+0}^{V\times V}$ whose components are defined as
\begin{equation}
A_{ij}^{+}\coloneqq\max\left\{ A_{ij},0\right\} \in\mathbb{R}_{+0}\text{ and } A_{ij}^{-}\coloneqq-\min\left\{ A_{ij},0\right\} \in\mathbb{R}_{+0},
\end{equation}
respectively, so that
\begin{equation}\label{eq:Adecomp}
A=A^+-A^- \in\mathbb{R}^{V\times V}.
\end{equation}
In this decomposition, note that  $A^+$, respectively $A^-$, takes into account the affinity, respectively antipathy, between  nodes. Along  the same lines as \cite{kunegis2010spectral}, we define the \textit{signed degree matrix} as
\begin{equation}\label{eq:signedDegmatrix}
\bar{D}_{ij}\coloneqq\left\{ \begin{array}{cc}
\begin{array}{c}
\sum_{k=1}^{V}\left|A_{ik}\right|\in\mathbb{R}_{+0}\\
0
\end{array} & \begin{array}{c}
\text{if }i=j\\
\text{if }i\neq j
\end{array}\end{array}\right. ,
\end{equation}
where $\left|A_{ik}\right|$ is the absolute value of $A_{ik}$. As for the affinity matrix, $\bar{D}$ can be decomposed into $D^{+}\in\mathbb{R}_{+0}^{V\times V}$ and $D^{-}\in\mathbb{R}_{+0}^{V\times V}$, with components
\begin{equation}
D_{ij}^{+}\coloneqq\sum_{k=1}^{V}A_{ik}^{+}\in\mathbb{R}_{+0}\text{ and }\quad D_{ij}^{-}\coloneqq\sum_{k=1}^{V}A_{ik}^{-}\in\mathbb{R}_{+0},
\end{equation}
so that
\begin{equation}\label{eq:Ddecomp}
\bar{D}=D^{+}+D^{-}\in\mathbb{R}_{+0}^{V\times V}.
\end{equation}
Following \cite{kunegis2010spectral}, the \textit{signed Laplacian} is defined as
\begin{equation}\label{eq:signLap}
\bar{L}\coloneqq\bar{D}-A\in\mathbb{R}^{V\times V},
\end{equation}
together with its variants
\begin{equation}\label{eq:signLapRW}
\bar{L}_{\text{rw}}\coloneqq\bar{D}^{-1}\bar{L}=I_{V}-\bar{D}^{-1}A\in\mathbb{R}^{V\times V}
\end{equation}
and
\begin{equation}\label{eq:signLapSym}
\bar{L}_{\text{sym}}\coloneqq\bar{D}^{-\frac{1}{2}}\bar{L}\bar{D}^{-\frac{1}{2}}=\bar{D}^{\frac{1}{2}}\bar{L}_{\text{rw}}\bar{D}^{-\frac{1}{2}}=I_{V}-\bar{D}^{-\frac{1}{2}}A\bar{D}^{-\frac{1}{2}}\in\mathbb{R}^{V\times V}.
\end{equation}
Kunegis et al. \cite{kunegis2010spectral} showed that the signed Laplacian $\bar{L}$ is positive semidefinite, and furthermore, unlike the ordinary Laplacian $L$, the signed Laplacian  $\bar{L}$ is  \textit{strictly positive-definite} for certain  graphs, including most real-world networks. It is known that (see for instance \cite{golub2012matrix,gallier2014elementary}) if a matrix $Y_1$ is positive semidefinite then $Y_2\coloneqq X Y_1 X^T$ is also positive semidefinite provided that $X$ is invertible. We can apply this to show from the definition \eqref{eq:signLapSym} that $\bar{L}_{\text{sym}}$ is also positive semidefinite since $\bar{D}$ is diagonal with rank $V$. Positive semidefiniteness of $\bar{L}_{\text{rw}}$ follows from the second equality in \eqref{eq:signLapSym} since it shows that $\bar{L}_{\text{rw}}$ and $\bar{L}_{\text{sym}}$ are similar matrices.

In particular, for \eqref{eq:signLap} we can see that a natural decomposition stemming from \eqref{eq:Adecomp} and \eqref{eq:Ddecomp} occurs,
\begin{equation}
\bar{L}=\bar{D}-A=\left(D^{+}-A^{+}\right)+\left(D^{-}+A^{-}\right)\eqqcolon L^{+}+Q^{-},
\end{equation}
where we stress that $L^{+}\in\mathbb{R}^{V\times V}$ and $Q^{-}\in\mathbb{R}_{+0}^{V\times V}$. The latter is known in the literature as \textit{signless Laplacian} \cite{haemers2004enumeration} and, to the best of our knowledge, first appeared, although without this nomenclature, in \cite{desai1994characterization}.

To understand the meaning of the different terms contributing to the signed Laplacian, first consider the following quadratic form
\begin{equation}
\underbar{\ensuremath{x}}^{T}Q^{-}\underbar{\ensuremath{x}}=\underbar{\ensuremath{x}}^{T}D^{-}\underbar{\ensuremath{x}}+\underbar{\ensuremath{x}}^{T}A^{-}\underbar{\ensuremath{x}}.
\end{equation}
for some $\underbar{\ensuremath{x}}\in\mathbb{R}^V$. Using the definition of $D^-$ together with the fact that $A^-$ is symmetric with null diagonal, we have for the first term\footnote{Here, we use the following shorthand notation $\sum_{j<i}\equiv\sum_{\substack{j=1\\j< i}}^{V}$, together with $\sum_{j>i}\equiv\sum_{\substack{j=1\\j> i}}^{V}$ and $\sum_{j\neq i}\equiv\sum_{j<i}+\sum_{j>i}$.}
\begin{alignat}{2}
\underbar{\ensuremath{x}}^{T}D^{-}\underbar{\ensuremath{x}}=&\sum_{i=1}^{V}D_{ii}^{-}x_{i}^{2}=\sum_{i=1}^{V}\left(\sum_{j=1}^{V}A_{ij}^{-}\right)x_{i}^{2}=\sum_{i=1}^{V}\left(A_{ii}^{-}+\sum_{j\neq i}A_{ij}^{-}\right)x_{i}^{2}=\sum_{i=1}^{V}\left(\sum_{j<i}A_{ij}^{-}+\sum_{j>i}A_{ij}^{-}\right)x_{i}^{2}\nonumber\\
=&\sum_{i=1}^{V}\sum_{j<i}A_{ij}^{-}x_{i}^{2}+\sum_{i=1}^{V}\sum_{j>i}A_{ij}^{-}x_{i}^{2}=\sum_{i=1}^{V}\sum_{j<i}A_{ji}^{-}x_{i}^{2}+\sum_{i=1}^{V}\sum_{j>i}A_{ij}^{-}x_{i}^{2}\nonumber\\
=&\sum_{i=1}^{V}\sum_{j>i}A_{ij}^{-}x_{j}^{2}+\sum_{i=1}^{V}\sum_{j>i}A_{ij}^{-}x_{i}^{2},
\end{alignat}
and for the second
\begin{equation}
\underbar{\ensuremath{x}}^{T}A^{-}\underbar{\ensuremath{x}}=\sum_{i=1}^{V}A_{ii}^{-}x_{i}^{2}+\sum_{i=1}^{V}\sum_{j>i}A_{ij}^{-}x_{i}x_{j}+\sum_{i=1}^{V}\sum_{j<i}A_{ij}^{-}x_{i}x_{j}=2\sum_{i=1}^{V}\sum_{j>i}A_{ij}^{-}x_{i}x_{j}.
\end{equation}
Putting all the contributions together gives
\begin{equation}\label{eq:Qminmin}
\underbar{\ensuremath{x}}^{T}Q^{-}\underbar{\ensuremath{x}}=\sum_{i=1}^{V}\sum_{j>i}A_{ij}^{-}x_{i}^{2}+\sum_{i=1}^{V}\sum_{j>i}A_{ij}^{-}x_{j}^{2}+2\sum_{i=1}^{V}\sum_{j>i}A_{ij}^{-}x_{i}x_{j}=\sum_{i=1}^{V}\sum_{j>i}A_{ij}^{-}\left(x_{i}+x_{j}\right)^{2}\geqslant0.
\end{equation}
The minimum of \eqref{eq:Qminmin} is achieved when $x_i=-x_j$. 
In the simplified case in which the node vector is limited to dichotomic values such as $\underbar{\ensuremath{x}}\in\left\{ -1, 1\right\} ^{V}$, one can envision the system to be made up of $K=2$ clusters and the value of $x_i$ (being only $+1$ or $-1$) specifies the node-cluster association. In addition, with this particular choice,  \eqref{eq:Qminmin} counts the number of times, weighted for $A^{-}_{ij}$, two nodes were placed incorrectly in the same cluster (which we want to minimize). We note that choosing as dichotomic values $\left\{0, 1\right\}$ would reach a different conclusion since the result of the minimization would be $x_i=x_j=0$, that is, the nodes would belong to the same cluster. In regard to $L^{+}$ (being the Laplacian of the affinity matrix $A^+$), using either $\left\{ -1, 1\right\}$ or $\left\{0, 1\right\}$ leads to the same interpretation as for \eqref{eq:identityLap}. A more rigorous point of view based on the TV measure can be found in \cite{KeetchvanGennip17} and references therein.

The elements of the canonical basis, which correspond to pure node-cluster associations, can be suitably transformed to take values  $\left\{ -1, 1\right\}$,  via  the following expression 
\begin{equation}\label{eq:basetrasf}
\underbar{\ensuremath{e}}_{k}^{\pm}\coloneqq2\underbar{\ensuremath{e}}_{k}-\underbar{1}_K\in\left\{ -1;1\right\} ^{K}.
\end{equation}
where the associated simplex is
\begin{equation}\label{eq:2mKsimplex}
\Sigma_{K}^{\pm}\coloneqq\left\{ \left(\begin{array}{cccc}
\sigma_{1}^{\pm}, & \sigma_{2}^{\pm}, & \dotsc, & \sigma_{K}^{\pm}\end{array}\right)\in\left[-1;1\right]^{K}:\sum_{k=1}^{K}\sigma_{k}^{\pm}=2-K\right\}.
\end{equation}
In addition, since each row of $U$ from \eqref{eq:origcharU} is a linear combination of $\underbar{\ensuremath{e}}_{k}$ whose coefficients sum up to $1$, the characteristic matrix in the new representation can be obtained as
\begin{equation}\label{eq:Utrasf}
U^{\pm}\coloneqq2U-1_{V}\in\left(\Sigma_{K}^{\pm}\right)^{V}.
\end{equation}
In the end, by plugging $\bar{L}$ in place of $L$ in \eqref{eq:multiKGLF}, we have that the \textit{signed GL} functional is given by
\begin{alignat}{2}\label{eq:signedGL}
\bar{F}_{\varepsilon}(U^{\pm})\coloneqq &\frac{\varepsilon}{8}\langle U^{\pm},\bar{L}U^{\pm}\rangle+\frac{1}{2\varepsilon}\sum_{i=1}^{V}\left(\prod_{k=1}^{K}\frac{1}{16}\left\Vert \underbar{\ensuremath{u}}_{i}^{\pm}-\underbar{\ensuremath{e}}_{k}^{\pm}\right\Vert _{1}^{2}\right)\\
= &\frac{\varepsilon}{8}\langle U^{\pm},L^{+}U^{\pm}\rangle+\frac{\varepsilon}{8}\langle U^{\pm},Q^{-}U^{\pm}\rangle+\frac{1}{2\varepsilon}\sum_{i=1}^{V}\left(\prod_{k=1}^{K}\frac{1}{16}\left\Vert \underbar{\ensuremath{u}}_{i}^{\pm}-\underbar{\ensuremath{e}}_{k}^{\pm}\right\Vert _{1}^{2}\right),
\end{alignat}
where the different normalization of the terms comes from \eqref{eq:basetrasf} and \eqref{eq:Utrasf}. In other words, through the signed GL functional we want to minimize the presence of positive links between clusters as well as negative links within each one. We can verify that the following equality  holds
\begin{equation}
\bar{F}_{\varepsilon}(U^{\pm})-\frac{\varepsilon}{8}\langle U^{\pm},Q^{-}U^{\pm}\rangle=F_{\varepsilon}(U),
\end{equation}
where, on the right-hand side, the Laplacian of the GL functional is $L^+$, and we used the fact that $L^{+}1_{V}=0$ together with known properties of the trace of the product of two matrices.

\subsection{MBO scheme for signed graphs}\label{sc:shortMBO}
The MBO scheme arising from the signed GL has the same structure as in \eqref{eq:intermediateMBO}, that is
\begin{equation}\label{eq:SIGNintermediateMBO}
\left(U^{\pm}\right)^{n+\frac{1}{2}}=\bar{B}\left(U^{\pm}\right)^{n}\quad\text{where}\quad \bar{B}\coloneqq\left[\frac{1}{4}\bar{X}\left(I_{V}+\frac{d\tau}{N_\tau}\bar{\Lambda}\right)^{-1}\bar{X}^{T}\right]^{N_{\tau}}\in\mathbb{R}^{V\times V},
\end{equation}
together with $\bar{X}$ and $\bar{\Lambda}$ which are the eigenvector and eigenvalue matrices of $\bar{L}$, respectively\footnote{Here we have used another definition for $\bar{B}$ which is different from the definition of $B$ in \eqref{eq:intermediateMBO}. In fact, since $\bar{L}$ is real and symmetric its eigenvector matrix is orthogonal, that is $\bar{X}^{T}=\bar{X}^{-1}$. So, we have that $\left[\bar{X}\left(I_{V}+\frac{d\tau}{N_{\tau}}\bar{\Lambda}\right)\bar{X}^{T}\right]^{-N_{\tau}}=\left[\left(I_{V}+\frac{d\tau}{N_{\tau}}\bar{\Lambda}\right)\bar{X}^{T}\right]^{-N_{\tau}}\bar{X}^{-N_{\tau}}=\left(\bar{X}^{T}\right)^{-N_{\tau}}\left(I_{V}+\frac{d\tau}{N_{\tau}}\bar{\Lambda}\right)^{-N_{\tau}}\bar{X}^{-N_{\tau}}=\left[\bar{X}\left(I_{V}+\frac{d\tau}{N_{\tau}}\bar{\Lambda}\right)^{-1}\bar{X}^{T}\right]^{N_{\tau}}$.}.
As explained in the Appendix \ref{app:projsimplex}, the succession of projection (which in \cite{Garcia-CardonaMerkurjevBertozziFlennerPercus14} is performed using the algorithm developed in \cite{chen2011projection}) and thresholding steps can be reduced to determine the entry of the largest component of each row of $\left(U^{\pm}\right)^{n+\frac{1}{2}}$, so that\footnote{As before, when $\left(\underbar{\ensuremath{u}}_{i}^{\pm}\right)^{n+1}$ has the same distance to multiple vertices, one of those is picked at random with uniform probability.}
\begin{equation}\label{eq:SIGNargmax}
\left(\underbar{\ensuremath{u}}_{i}^{\pm}\right)^{n+1}=\underbar{\ensuremath{e}}_{k}^{\pm}\quad\text{with}\quad k=\underset{j=1,\dotsc,K}{\mathrm{argmax}}\left\{ \left(u_{ij}^{\pm}\right)^{n+\frac{1}{2}}\right\} .
\end{equation}
However, we note that the calculation of the $\text{argmax}$ in \eqref{eq:SIGNargmax} is independent of the representation chosen for the characteristic matrix. In fact, using \eqref{eq:Utrasf} we can see that
\begin{equation}
\left(U^{\pm}\right)^{n+\frac{1}{2}}=2\bar{B}U^{n}-\bar{B}1_{V}. 
\end{equation}
Since $\bar{B}1_{V}$ has constant rows, the location of the largest entry in a row of $\left(U^{\pm}\right)^{n+\frac{1}{2}}$ is the same as in $\bar{B}U^{n}$, that is from \eqref{eq:SIGNargmax}
\begin{equation}
k =\underset{j=1,\dotsc,K}{\mathrm{argmax}}\left\{ 2\sum_{h=1}^{V}\bar{b}_{ih}u_{hj}^{n}-\sum_{h=1}^{V}\bar{b}_{ih}\right\} =\underset{j=1,\dotsc,K}{\mathrm{argmax}}\left\{ 2\sum_{h=1}^{V}\bar{b}_{ih}u_{hj}^{n}\right\}.
\end{equation}
since $\sum_{h=1}^{V}\bar{b}_{ih}$ is independent of $j$ by definition. The factor $2$ can also be neglected since it does not change the actual node-cluster association. We indicate this equivalence with the symbol
\begin{equation}\label{eq:repequivMBO}
\left(U^{\pm}\right)^{n+\frac{1}{2}}\approx\bar{B}U^{n}.
\end{equation}
In other words, after applying $\bar{B}$, the position of the largest entry of $\left(\underbar{\ensuremath{u}}_{i}^{\pm}\right)^{n+1}$, resulting from the argmax in \eqref{eq:SIGNargmax}, is the same as the one of $\left(\underbar{\ensuremath{u}}_{i}\right)^{n+1}$. In this way, we can keep using a sparse node-cluster association matrix by picking $U$ instead of $U^\pm$. 
%
%

%
%
\section{GLF with constraints}\label{sc:GLFwithcons}
Now, we extend the GLF to three possible types of constraints. Here, we distinguish between two different types of constraints: \textit{soft}, when the final result of the algorithm does not necessarily satisfy the original constraint, and \textit{hard}, when the output matches the constraint.
\subsection{Must-link and cannot-link}
In practice, the affinity matrix usually results from measurements of a system's properties (see Section \ref{sec:imageSeg} for example). However, in the same spirit as in \cite{cucuringu2016simple}, additional information in the form of both positive (must-link) and negative (cannot-link) edge weights, can also be known and could have a different relative strength with respect to measurements. In quantitative terms, we will embed must-links in $A^+$ and cannot-links in $A^{-}$ and generate the new degree matrices $D^+$ and $D^-$ accordingly. To this end, we define
\begin{equation}
\mathcal{A}^{+}\coloneqq A^{+} + \lambda^+ M\in\mathbb{R}_{+0}^{V\times V},
\end{equation}
where $M\in\mathbb{R}_{+0}^{V\times V}$ is the \textit{must-link matrix}, and $\lambda^+ \in \mathbb{R}_{+} $ is a trade-off parameter. 
The role of $M$  consists in adding links to nodes which should be belong to the same cluster or to increase the weight of a link to reinforce node-cluster associations. In general, it is a symmetric sparse matrix with positive weights (whose values can also be different among each couple of nodes). We indicate the resulting degree matrix for $\mathcal{A}^+$, defined in the same fashion as in \eqref{eq:signedDegmatrix}, with $\mathcal{D}^+\in\mathbb{R}_{+0}^{V\times V}$. Similarly 
\begin{equation}
\mathcal{A}^{-}\coloneqq A^{-} + \lambda^- C\in\mathbb{R}_{+0}^{V\times V},
\end{equation}
where $C\in\mathbb{R}_{+0}^{V\times V}$ is the \textit{cannot-link matrix}, and $\lambda^-   \in \mathbb{R}_{+} $ the trade-off parameter. 
In the same fashion as $M$, the matrix $C$ can include links to nodes which should not belong to the same cluster or to decrease even more the weight of a link pushing node couples to belong to different clusters. In this case as well, $C$ is a symmetric sparse matrix with positive weights, where the greater the weight the more the two nodes are incentivised to belong to different clusters. Also here we have $\mathcal{D}^-\in\mathbb{R}_{+0}^{V\times V}$ as the degree matrix of $\mathcal{A}^-$. In the end, we have that the GLF in the presence of must- and cannot-links is given as
\begin{equation}
\bar{\mathcal{F}}_{\varepsilon}(U^\pm)=\frac{\varepsilon}{8}\langle U^\pm,\mathcal{L}^{+}U^\pm\rangle+\frac{\varepsilon}{8}\langle U^\pm,\mathcal{Q}^{-}U^\pm\rangle+\frac{1}{2\varepsilon}\sum_{i=1}^{V}\left(\prod_{k=1}^{K}\frac{1}{16}\left\Vert \underbar{\ensuremath{u}}_{i}^\pm-\underbar{\ensuremath{e}}_{k}^\pm\right\Vert _{1}^{2}\right),
\end{equation}
where $\mathcal{L}^{+}\coloneqq\mathcal{D}^{+}-\mathcal{A}^{+}$, $\mathcal{Q}^{-}\coloneqq\mathcal{D}^{-}+\mathcal{A}^{-}$ so that $\mathcal{L}\coloneqq \mathcal{L}^+ +\mathcal{Q}^-$. As in \eqref{eq:SIGNintermediateMBO}, we now need to determine the matrix $\bar{\mathcal{B}}$ resulting from the eigendecomposition of $\mathcal{L}$.
\subsection{Fidelity and avoidance terms}
Soft constraints can also be implemented at the level of the characteristic matrix mainly in two ways. First, as a \textit{fidelity term} (see for instance \cite{Garcia-CardonaMerkurjevBertozziFlennerPercus14}) which attracts the dynamics of the ACE towards a particular characteristic matrix $\hat{U}$. At the level of the GLF this translates into a term which increases the overall energy when the node-cluster association in $\hat{U}$ is not matched. Second, as an \textit{avoidance term} which pushes the ACE dynamics to avoid particular node-cluster combinations expressed in a characteristic matrix $\tilde{U}$. Again, for the GLF this means an energy increase when the wrong node-cluster association is met. We notice that while the first constraint points the system to a particular combination and increasing the energy every time this condition is not met, the second does the opposite, that is only one combination increases the energy and leaves all the others as equally possible. The GLF for an affinity matrix with positive edges only reads as 
\begin{equation}
G_{\varepsilon}^{R}(U,\hat{U},\tilde{U})\coloneqq F_{\varepsilon}(U)+\frac{1}{2}\sum_{i=1}^{V}R^{fi}_{ii}\left\Vert \underbar{\ensuremath{u}}_{i}-\underbar{\ensuremath{\hat{u}}}_{i}\right\Vert _{2}^{2}+\sum_{i=1}^{V}R^{av}_{ii}\left\Vert \underbar{\ensuremath{u}}_{i}\circ\underbar{\ensuremath{\tilde{u}}}_{i}\right\Vert _{1}
\end{equation}
where $\circ$ is the Hadamard product and $R^{fi}$ (and $R^{av}$) is a diagonal matrix whose element $R_{ii}^{fi}$ ($R_{ii}^{av}$) is positive if and only if there exists a $j\in \{1, \ldots, V\}$ such that $\hat{U}_{ij} \neq 0$ ($\tilde{U}_{ij} \neq 0$) and is zero otherwise. At the level of the MBOE, following \cite{Garcia-CardonaMerkurjevBertozziFlennerPercus14} the first step in the procedure is given by
\begin{equation}\label{eq:Uhalf}
U^{n+\frac{1}{2}}=B\left[U^{n}-d\tau R^{fi} \left(U^{n}-\hat{U}\right)-d\tau R^{av}\tilde{U}\right].
\end{equation}
We can obtain an equivalent expression for the signed GLF using \eqref{eq:Utrasf}. The result is  
\begin{equation}
\bar{G}_{\varepsilon}^{R}(U^{\pm},\hat{U}^{\pm},\tilde{U}^{\pm})\coloneqq \bar{F}{}_{\varepsilon}(U^{\pm})+\frac{1}{8}\sum_{i=1}^{V}R^{fi}_{ii}\left\Vert \underbar{\ensuremath{u}}_{i}^{\pm}-\underbar{\ensuremath{\hat{u}}}_{i}^{\pm}\right\Vert _{2}^{2}+\frac{1}{4}\sum_{i=1}^{V}R^{av}_{ii}\left\Vert \underbar{\ensuremath{u}}_{i}^{\pm}\circ\underbar{\ensuremath{\tilde{u}}}_{i}^{\pm}+\underbar{\ensuremath{u}}_{i}^{\pm}+\underbar{\ensuremath{\tilde{u}}}_{i}^{\pm}+\underbar{\ensuremath{1}}_{V}\right\Vert _{1}.
\end{equation}
Now the MBOE \eqref{eq:SIGNintermediateMBO} becomes
\begin{equation}
\left(U^{\pm}\right)^{n+\frac{1}{2}}=\bar{B}\left\{ \left(U^{\pm}\right)^{n}-\frac{d\tau}{4}\left[R^{fi}\left(\left(U^{\pm}\right)^{n}-\hat{U}^{\pm}\right)+R^{av}\left(\tilde{U}^{\pm}+1_{V}\right)\right]\right\},
\end{equation}
and after the transformation $\eqref{eq:Utrasf} $ becomes
\begin{equation}\label{eq:eqbeforefinal}
\left(U^{\pm}\right)^{n+\frac{1}{2}}=2\left\{\bar{B}U^{n}-\frac{1}{2}\bar{B}1_{V}-\frac{d\tau}{4}\bar{B}\left[R^{fi}\left(U^{n}-\hat{U}\right)+R^{av}\tilde{U}\right]\right\},
\end{equation}
The factor $\frac{1}{4}$ can be embedded in the definitions of $R^{fi}$ and $R^{av}$. As explained in Section \ref{sc:shortMBO}, since $1_V$ has constant rows, the term $\bar{B}1_{V}$ does not modify the result of the $\text{argmax}$ problem. Moreover, since the global factor is positive, the presence of the coefficient $2$ in front of the brackets in \eqref{eq:eqbeforefinal} does not affect which entry is the largest and it can be neglected. So, we can consider only the following terms
\begin{equation}
\left(U^{\pm}\right)^{n+\frac{1}{2}}\approx\bar{B}\left\{ U^{n}-d\tau\left[R^{fi}\left(U^{n}-\hat{U}\right)+R^{av}\tilde{U}\right]\right\} .
\end{equation}

\subsection{Anchors}\label{sc:Anchors}
In general, \textit{anchors} correspond to a subset of the network nodes whose values are fixed once and for all.    
For example, in the context of sensor network localization (that aims to find coordinates of sensors in $\mathbb{R}^d$ from given pairwise distance measurements; typically $d=\{2,3\}$), anchors are sensors  that are aware of their location, and anchor-based algorithms make use of their existence when computing the coordinates of the remaining sensors \cite{asap2d}. 
In the context of the \textit{molecule problem} from structural biology applications \cite{asap3d} (motivated by NMR), there are molecular fragments (i.e., embeddings of certain subgraphs)  whose local configuration is known in advance up to an element of the special Euclidean group SE(3),  rather than the Euclidean group E(3). In other words, in the terminology of signed clustering for $k=2$, there exists a subset of nodes (anchors) whose corresponding ground truth assigned value $\{-1, 1\} \in \mathbb{Z}_2$ (i.e., cluster membership) is known a priori, and can be leveraged to find the values of the remaining non-anchor  nodes.

In the context of multiple clusters, anchors are understood as those nodes whose cluster membership is known. From the point of view of the MBO, we can consider anchors as fixed points in the dynamics, that is from \eqref{eq:discrMBO}\footnote{For this presentation we use the notation for the unsigned case,  but the arguments apply to $\bar{L}$ as well.}
\begin{equation}
\frac{\underbar{\ensuremath{u}}_{i}^{n+\frac{1}{2}}-\underbar{\ensuremath{u}}_{i}^{n}}{d\tau}=0.
\end{equation}
Anchors are implemented in the following way. Suppose we know that node $z$ always belongs to cluster $k$, that is 
\begin{equation}
U^{n}=\left(\begin{array}{cccccccc}
\underbar{\ensuremath{u}}_{1}^{n}, & \underbar{\ensuremath{u}}_{2}^{n}, & \dotsc, & \underbar{\ensuremath{u}}_{z-1}^{n}, & \underbar{\ensuremath{e}}_{k}, & \underbar{\ensuremath{u}}_{z+1}^{n}, & \dotsc, & \underbar{\ensuremath{u}}_{V}^{n}\end{array}\right)^{T}\,\text{ for all } n,
\end{equation}
including $n=0$ and the intermediate steps $n+\frac{1}{2}$. To include this a priori information in the MBO scheme, it is sufficient to use at the level of \eqref{eq:intermediateMBO} an evolution matrix (to be used in place of the original $B$) which leaves the anchor node unaffected by the dynamics. The latter can be defined as
\begin{equation}\label{eq:Banch}
B_{ij}^{a}\coloneqq\left\{ \begin{array}{cc}
B_{ij} & \text{if }i\neq z\\
0 & \text{if }i=z\text{ and }j\neq z\\
1 & \text{if }i=j=z
\end{array}\right. .
\end{equation}
In general, to take into account the presence of a second anchor node it is sufficient to modify all the relevant entries of $B^a_{ij}$ in the same fashion as in equation \eqref{eq:Banch}. This procedure can be repeated for all the anchor nodes.

%
%
\section{The algorithm}\label{sc:thealgo}
This section is devoted to describing the general structure of the MBO scheme. First, we will show the basic implementation and then we will explain how to modify the relevant structures to incorporate different types of constraints.\\

\begin{algorithm}[H]
\rule{6.5cm}{0.5mm}\\
 \textbf{Require}\\
$V$,  $K$,  $d\tau$, $N_\tau$, $m$, $\bar{L}$, $\epsilon$\;
\rule{6.5cm}{0.2mm}\\
 \textbf{Initialize}\\
 $\bar{B}_{m}\leftarrow\bar{X}_{m}\left(I_{m}+d\tau\bar{\Lambda}_{m}\right)^{-1}\bar{X}_{m}^{T}$\;
 $U^0=0_{V\times K}$\;
\For{$i\leftarrow 1$ \KwTo $V$}{
$k=randperm(K,1)$\;
$u_{ik}^{0}=1$\;
}
\rule{6.5cm}{0.2mm}\\
\textbf{MBO Scheme}\\
$n\leftarrow0$\;
 \While{Stop criterion not satisfied}{
\For{$s\leftarrow 1$ \KwTo $N_\tau$}{
$U^{n+\frac{1}{2}}\leftarrow\bar{B}_{m} U^{n}$\;
$U^{n}\leftarrow U^{n+\frac{1}{2}}$\;
}
\For{$i\leftarrow 1$ \KwTo $V$}{
$\underbar{\ensuremath{u}}_{i}^{n+1}=\underbar{\ensuremath{0}}_{K}$\;
$k=\argmax_{j=1,\dotsc,K}\left\{u_{ij}^{n+\frac{1}{2}}\right\}$\;
$u_{ik}^{n+1}=1$\;
}
$U^n\leftarrow U^{n+1}$
}
\rule{6.5cm}{0.2mm}\\
\textbf{Output} $\underbar{\ensuremath{v}}^{*}\in\left\{ 1;\dotsc;K\right\} ^{V}$\\
\For{$i\leftarrow 1$ \KwTo $V$}{
$v^*_{i}=\argmax_{j=1,\dotsc,K}\left\{ u_{ij}^{n}\right\}$\;
}
\rule{6.5cm}{0.5mm}\\
\medskip
\caption{Structure of the MBO scheme for signed Laplacians in the absence of constraints. The algorithm requires values specifying the affinity matrix and some parameters for best code-tuning which are explained in the main text of the present section. The output of the code is a vector $\underbar{\ensuremath{v}}^{*}$ which indicates the cluster (numbered from $1$ to $K$) each node belongs to. The command $randperm(K,1)$ generates a vector of length $K$ whose entries are natural numbers uniformly distributed in the interval $\left[1,K\right]$, i.e., it associates to each node a random cluster membership.}
\label{algo:basicMBO}
\end{algorithm}

\medskip
 
The MBO scheme for signed Laplacians we propose is given in Algorithm \ref{algo:basicMBO}. In addition to the graph variables (i.e., the number of nodes, clusters and the signed Laplacian), we also need to specify   
\begin{itemize}
\itemsep0mm
\item  $ d\tau\in\mathbb{R}_+$ which is the length of the interval of the time discretization procedure as in \eqref{eq:discrMBO}, 
\item   $N_\tau\in\mathbb{N}$ which is the number of times we repeat the diffusion step in \eqref{eq:MBOE1d},
\item  $ m \in \mathbb{N}$ which is the number of eigenvectors and eigenvalues considered in the eigendecomposition of $\bar{B}$ in \eqref{eq:SIGNintermediateMBO}  (we refer the reader to the next paragraph for a theoretical motivation of this choice). 
\end{itemize}

The corresponding eigenvector and eigenvalue matrices (thus having $m$ eigenvectors and eigenvalues) are indicated as $\bar{X}_m$ and $\bar{\Lambda}_m$, respectively. The parameter $\epsilon$ is defined later in \eqref{eq:errorparamep}. In regard to $d\tau$, in Section \ref{sc:bestpract} we show, empirically on a synthetic data set, that the outcomes of algorithm \ref{algo:basicMBO} are independent of the chosen value as long as it remains small. In \cite{Garcia-CardonaMerkurjevBertozziFlennerPercus14},  it is reported that the optimal value for the number of diffusion step iterations is $N_\tau=3$. Finally, the number of eigenvectors considered is chosen to be equal  to the number of clusters set a priori since, at a heuristic level, the eigenvalues empirically exhibit a big jump after the $K$-th term (although this may not always hold true, see \cite{von2007tutorial}). On a related note, we refer the reader to the celebrated result of Lee, Gharan, and Trevisan on higher-order Cheeger inequalities for undirected graphs with $k\geq 2$ clusters \cite{lee2014multiway}, which motivates the use of the top $k$ eigenvectors and generalizes initial results on $k=2$  \cite{Trevisan09}.
Moreover, the version of the algorithm we propose does takes into account the projection/thresholding step in a different fashion, as explained in Section \ref{sc:shortMBO}, which allows us to shorten the running time. To include various constraints, it suffices to modify only a few relevant parts of the code in algorithm \ref{algo:basicMBO}. In particular, we consider 
\begin{itemize}
\item \textbf{Must-link and cannot-link soft constraints} - 
here, we just need to use $\mathcal{B}$ in place of $\bar{B} $ in \eqref{eq:repequivMBO},
\item \textbf{Fidelity and avoidance terms} - the evolution equation given by $\bar{B}$ now includes the additional terms as in \eqref{eq:Uhalf},
\item \textbf{Anchor nodes} - the matrix $\bar{B}$ is modified, as explained in Section \ref{sc:Anchors}, to include the fixed nodes in the MBO dynamics.
\end{itemize}
In all cases, as a stop criterion we use, as in \cite{Garcia-CardonaMerkurjevBertozziFlennerPercus14},  
\begin{equation}\label{eq:errorparamep}
\frac{\max_{i}\left\Vert \underbar{\ensuremath{u}}_{i}^{n+1}-\underbar{\ensuremath{u}}_{i}^{n}\right\Vert _{2}^{2}}{\max_{i}\left\Vert \underbar{\ensuremath{u}}_{i}^{n+1}\right\Vert _{2}^{2}}<\epsilon\in\mathbb{R}_+.
\end{equation}
All the constraints can also be combined together since they act at different levels in the scheme. However, we notice that including anchors automatically rules out possible soft constraints acting on the same node.

%
%
\section{Numerical experiments for a signed stochastic block model} \label{sec:numericalSection}

This section is devoted to present the performance of the MBO scheme for signed Laplacians on synthetic data sets. This approach is useful since it allows a direct comparison of the results with the ground truth. Here, we consider a \textit{signed stochastic block model} (SSBM) followed by the definition of the cost function used to compare the output of the MBO algorithm with the ground truth.

Given a set of numbers $\left\{ c_{1},c_{2},\dotsc,c_{K}\right\} \in\mathbb{N}$ representing the size of $K$ clusters of $V$ nodes, the corresponding affinity matrix for the fully connected case is given in block form as
\begin{equation}\label{eq:grmatrix}
S\coloneqq\left(\begin{array}{cccc}
1_{c_{1}} & -1_{c_{1}\times c_{2}} & \dotsc & -1_{c_{1}\times c_{K}}\\
-1_{c_{2}\times c_{1}} & 1_{c_{2}} & \dotsc & -1_{c_{2}\times c_{K}}\\
\vdots & \vdots & \ddots & \vdots\\
-1_{c_{K}\times c_{1}} & -1_{c_{K}\times c_{2}} & \dotsc & 1_{c_{K}}
\end{array}\right)
\end{equation}
where $1_{c_{i}}$ is a square matrix of size $c_i$ whose entries are all $1$ and $-1_{c_{i}\times c_{j}}\in\left\{ -1\right\} ^{c_{i}\times c_{j}}$ is a matrix with $c_i$ rows and $c_j$ columns whose elements are all $-1$ (for consistency, we have that $\sum_{k=1}^{K}c_{k}=V$). Here, we see that all the nodes belonging to the same (respectively different) cluster have a $+1$ (respectively $-1$) link. The corresponding output vector (see Algorithm \ref{algo:basicMBO}) for the ground truth is given as
\begin{equation}\label{eq:grvector}
\underbar{\ensuremath{s}}=\left(\begin{array}{cccc}
c_{1}\underbar{\ensuremath{1}}_{c_{1}}^T, & c_{2}\underbar{\ensuremath{1}}_{c_{2}}^T, & \dotsc, & c_{K}\underbar{\ensuremath{1}}_{c_{K}}^T\end{array}\right)^T\in\left\{ 1,\dotsc,K\right\} ^{V},
\end{equation}
where each entry corresponds to a node and its value to the cluster (indexed from $1$ to $K$) the node belongs to, and its characteristic matrix is
\begin{equation}\label{eq:grcharmat}
U_{s}=\left(\begin{array}{cccc}
\underbar{1}_{c_{1}} & 0 & \dotsc & 0\\
0 & \underbar{1}_{c_{2}} & \dotsc & 0\\
\vdots & \vdots & \ddots & \vdots\\
0 & 0 & \dotsc & \underbar{1}_{c_{K}}
\end{array}\right)\in\left\{ 0,1\right\} ^{V\times K}.
\end{equation}

An instance of the SSBM is obtained from \eqref{eq:grmatrix} by introducing \textit{sparsity} (understood as null entries) and \textit{noise} (which corresponds to a change of the sign of the elements $S_{ij}$) with probabilities $\lambda$ and $\eta$, respectively. More specifically, an edge is present independently with probability $\lambda$, while the sign of an existing edge is flipped with probability $\eta$. 
In quantitative terms, an SSBM matrix $A$, which we fix to be symmetric and null-diagonal, is generated from $S$ via the following probabilistic  mixture model. Consider a row index $i=1,\dotsc,V$ and a column index $j=i,\dotsc,V$ (the $j$ index starts from $i$ so to consider only the upper triangular part in the definition of the elements of $A$), we define the affinity matrix as
\begin{equation}\label{eq:genaffmatrix}
A_{ij}= \left\{
     \begin{array}{rll}
   S_{ij}  & \;\; \text{ if } i< j,  	& \text{with probability } \left(1-\eta \right)   \lambda 	\\
 - S_{ij}  & \;\; \text{ if } i< j, 	& \text{with probability }  \eta   \lambda	\\
		     0   & \;\; \text{ if } i< j, 	& \text{with probability } 1-\lambda	\\
		     		     0   & \;\; \text{ if } i= j, 	& \text{with probability } 1	\\
     \end{array}
   \right. \text{ with } A_{ji}=A_{ij}.
\end{equation}
Then, the corresponding Laplacian can be computed as detailed in Section \ref{sc:signedLap}.

As explained in Section \ref{sc:thealgo}, the output of the MBO algorithm is a vector $\underbar{\ensuremath{v}}^{*}$ which holds the information related to the node-cluster association in the same spirit as \eqref{eq:grvector}. Therefore, it is sensible to compare $\underbar{\ensuremath{v}}^{*}$ with $\underbar{\ensuremath{s}}$ to measure the ability of MBO to retrieve the ground truth clustering. 
To quantify success in terms of  robustness to noise and sampling sparsity, 
we use the popular \textit{adjusted rand index} (ARI) defined as follows \cite{ARI_JMLR_Gates_Ahn}. Consider the vector $\underbar{\ensuremath{v}}^{*}$ and its associated nodes partition $\mathscr{P}\left(\underbar{\ensuremath{v}}^{*}\right)=\left\{ \mathscr{P}_{1}\left(\underbar{\ensuremath{v}}^{*}\right),\mathscr{P}_{2}\left(\underbar{\ensuremath{v}}^{*}\right),\dotsc,\mathscr{P}_{K}\left(\underbar{\ensuremath{v}}^{*}\right)\right\} $ into $K$ clusters, where $\mathscr{P}_i\left(\underbar{\ensuremath{v}}^{*}\right)$ is a set whose elements are those nodes belonging to cluster $i$. Also for the ground truth vector, it is possible to construct a partition $\mathscr{P}\left(\underbar{\ensuremath{s}}\right)$ which is in general different from $\mathscr{P}\left(\underbar{\ensuremath{v}}^{*}\right)$. Then, the ARI is defined as
\begin{equation}
\text{ARI}\left(\underbar{\ensuremath{v}}^{*},\underbar{\ensuremath{s}}\right)\coloneqq\frac{\sum_{ij}\binom{p_{ij}}{2}-\left[\sum_{i}\binom{a_{i}}{2}\sum_{j}\binom{b_{j}}{2}\right]/\binom{V}{2}}{\frac{1}{2}\left[\sum_{i}\binom{a_{i}}{2}+\sum_{j}\binom{b_{j}}{2}\right]-\left[\sum_{i}\binom{a_{i}}{2}\sum_{j}\binom{b_{j}}{2}\right]/\binom{V}{2}}\in\mathbb{R},
\end{equation}
where $a_i$, $b_j$ and $p_{ij}$ are extracted from the \textit{contingency table}
\begin{equation}
\begin{array}{c|cccccc|c}
{{}\atop \underbar{\ensuremath{v}}^{*}}\!\diagdown\!^{\underbar{\ensuremath{s}}} & \mathscr{P}_{1}\left(\underbar{\ensuremath{s}}\right) & \mathscr{P}_{2}\left(\underbar{\ensuremath{s}}\right) & \ldots & \mathscr{P}_{j}\left(\underbar{\ensuremath{s}}\right) & \ldots & \mathscr{P}_{K}\left(\underbar{\ensuremath{s}}\right) & \sum_{j=1}^{K}p_{ij}\\
\hline \mathscr{P}_{1}\left(\underbar{\ensuremath{v}}^{*}\right) & p_{11} & p_{12} & \ldots & p_{1j} & \ldots & p_{1K} & a_{1}\\
\mathscr{P}_{2}\left(\underbar{\ensuremath{v}}^{*}\right) & p_{21} & p_{22} & \ldots & p_{2j} & \ldots & p_{2K} & a_{2}\\
\vdots & \vdots & \vdots & \ddots & \vdots & \ddots & \vdots & \vdots\\
\mathscr{P}_{i}\left(\underbar{\ensuremath{v}}^{*}\right) & p_{i1} & p_{i2} & \ldots & p_{ij} & \ldots & p_{iK} & a_{i}\\
\vdots & \vdots & \vdots & \ddots & \vdots & \ddots & \vdots & \vdots\\
\mathscr{P}_{K}\left(\underbar{\ensuremath{v}}^{*}\right) & p_{K1} & p_{K2} & \ldots & p_{Kj} & \ldots & p_{KK} & a_{K}\\
\hline \sum_{i=1}^{K}p_{ij} & b_{1} & b_{2} & \ldots & b_{j} & \ldots & b_{K}
\end{array},
\end{equation}
given
\begin{equation}
p_{ij}\coloneqq\left|\mathscr{P}_{i}\left(\underbar{\ensuremath{v}}^{*}\right)\cap \mathscr{P}_{j}\left(\underbar{\ensuremath{s}}\right)\right|\in\mathbb{N}
\end{equation}
the number of nodes in common between $\mathscr{P}_{i}\left(\underbar{\ensuremath{v}}^{*}\right)$ and $\mathscr{P}_{j}\left(\underbar{\ensuremath{s}}\right)$.

For the numerical tests, the different types of constraints described in Section \ref{sc:GLFwithcons} can be included by extracting, at random, information from the ground truth. More specifically,   
$M$ and $C$ can be obtained from $S$ for the must/cannot-link case, and $\hat{U}$ and $\tilde{U}$ from $U_s$ for the fidelity/avoidance term etc.  We indicate the percentage of information overlapping with the ground truth using the parameter $\alpha\in\left[0,1\right]$. Since we are incorporating constraints using ground truth information, it is clear that the results will be closer to the ground truth for increasing $\alpha$. The purpose of this set of experiments is simply to provide a proof of concept for our approach.

In summary, the numerical experiments we consider are generated using the following set of parameters summarized in Table \ref{tb:onlytab}, some of which are also explained in Section \ref{sc:thealgo}.
\begin{table}[H]
\centering 
\begin{tabular}{ |p{3cm}|p{4cm}|}
 \hline
Parameter  & Value\\
 \hline
Nodes & $V=1200$\\
Clusters & $K=2$, $3$, $4$, $5$, $8$, $10$\\
Size & Equal clusters size $V/K$\\
$d\tau$ & 0.1\\
$N_\tau$ & 3\\
$m$ & Equal to no. of clusters  $K$ \\
$\epsilon$ & $10^{-7}$\\
No. Experiments & $100$\\
Sparsity & $\lambda\in\left[0.02,0.1\right]$\\
Noise & $\eta\in\left[0.05,0.45\right]$\\
Known Information & $\alpha\in\left[0.05,0.9\right]$\\
 \hline
\end{tabular}
\caption{Parameter  values used to test the MBO scheme in the SSBM experiments.} 
\label{tb:onlytab}
\end{table}
Unless stated otherwise, the initial condition we use is the clustering induced by a segmentation based on a bottom eigenvector, that is an eigenvector associated with the smallest nonzero positive eigenvalue\footnote{Since the (signed) Laplacians we use are all positive semidefinite (and not identically zero) \cite{kunegis2010spectral}, a nonzero positive eigenvalue can always be found. When the geometric multiplicity of this eigenvalue is greater than one, we pick at random one of the elements of its eigenspace.}, of the Laplacian matrix (each experiment is specified by a different Laplacian matrix which will be used for its initial condition as well as the MBO evolution). In particular, we consider the interval given by the smallest and largest entries in the aforementioned eigenvector, and create $K$ equally-sized bins spanning this interval. Finally, each node will be allocated in its corresponding bin according to its value in the eigenvector. The resulting characteristic matrix is used as the initial condition. 
\subsection{Best practice for MBO}\label{sc:bestpract}
As explained in Section \ref{sc:signedLap}, one may use different versions of the Laplacians given by \eqref{eq:signLap}, \eqref{eq:signLapRW} and \eqref{eq:signLapSym}.  In particular,  in the context of spectral clustering of (unsigned) graphs, $L_{\text{rw}}$ and $L_{\text{sym}}$ typically provide better results compared to $L$, especially in the  context of skewed degree distributions
\cite{von2007tutorial,dasgupta2004spectral}.
We observe a similar behaviour for their respective signed counterparts 
$\bar{L}_{\text{rw}}$, $\bar{L}_{\text{sym}}$ and $L$, as can be seen in Figure \ref{fig:bestLap}. We remark that in subsequent experiments, we will only use $\bar{L}_{\text{sym}}$ as it achieves the highest accuracy. Moreover, we have tested the additional degree of freedom resulting from the time discretization procedure by running MBO with different orders of magnitude for the parameter $d\tau$. Results are detailed in Figure \ref{fig:bestTau}, and the overlapping curves illustrate that the results are not sensitive to the choice of the parameter $d\tau$. 
\begin{figure}[!ht]
\centering
\subcaptionbox[]{  %
}[ 0.32\textwidth ]
{\includegraphics[width=0.35\textwidth] {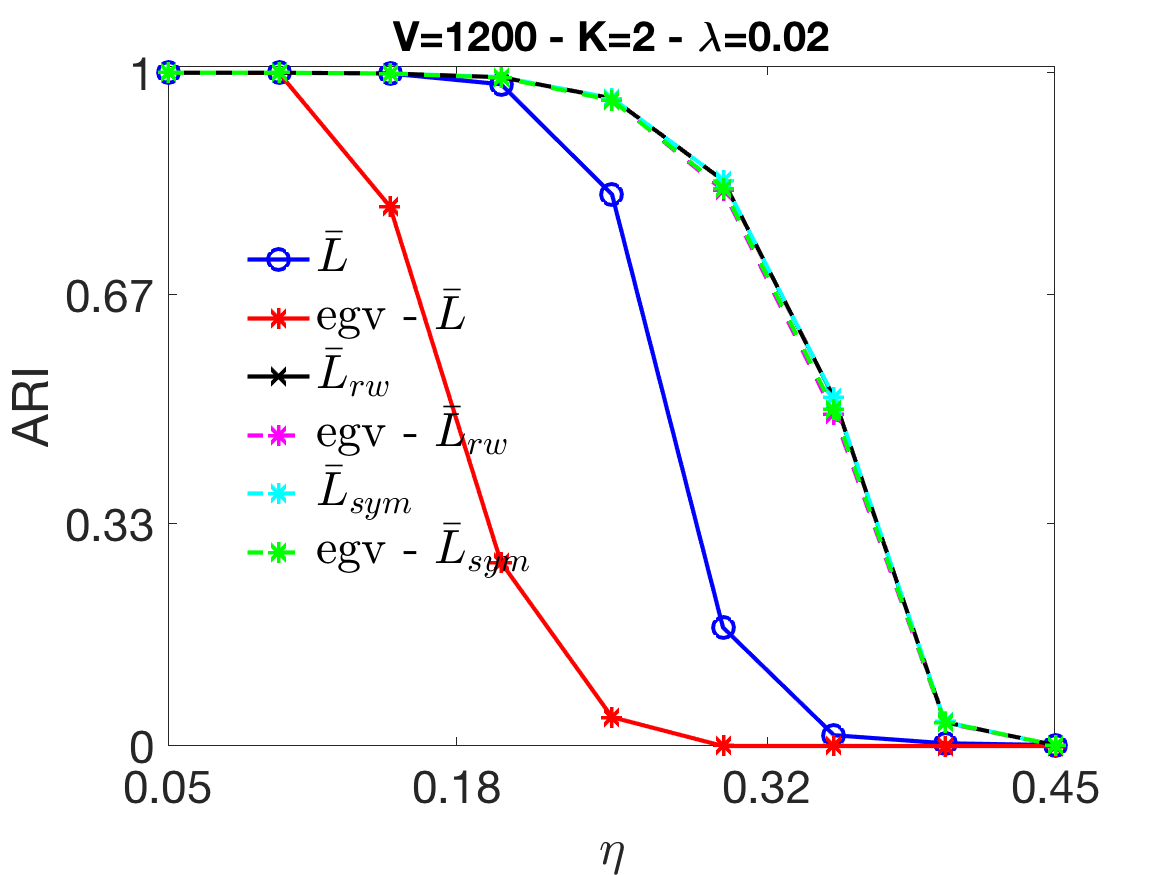} }
\subcaptionbox[]{  %
}[ 0.32\textwidth ]
{\includegraphics[width=0.35\textwidth] {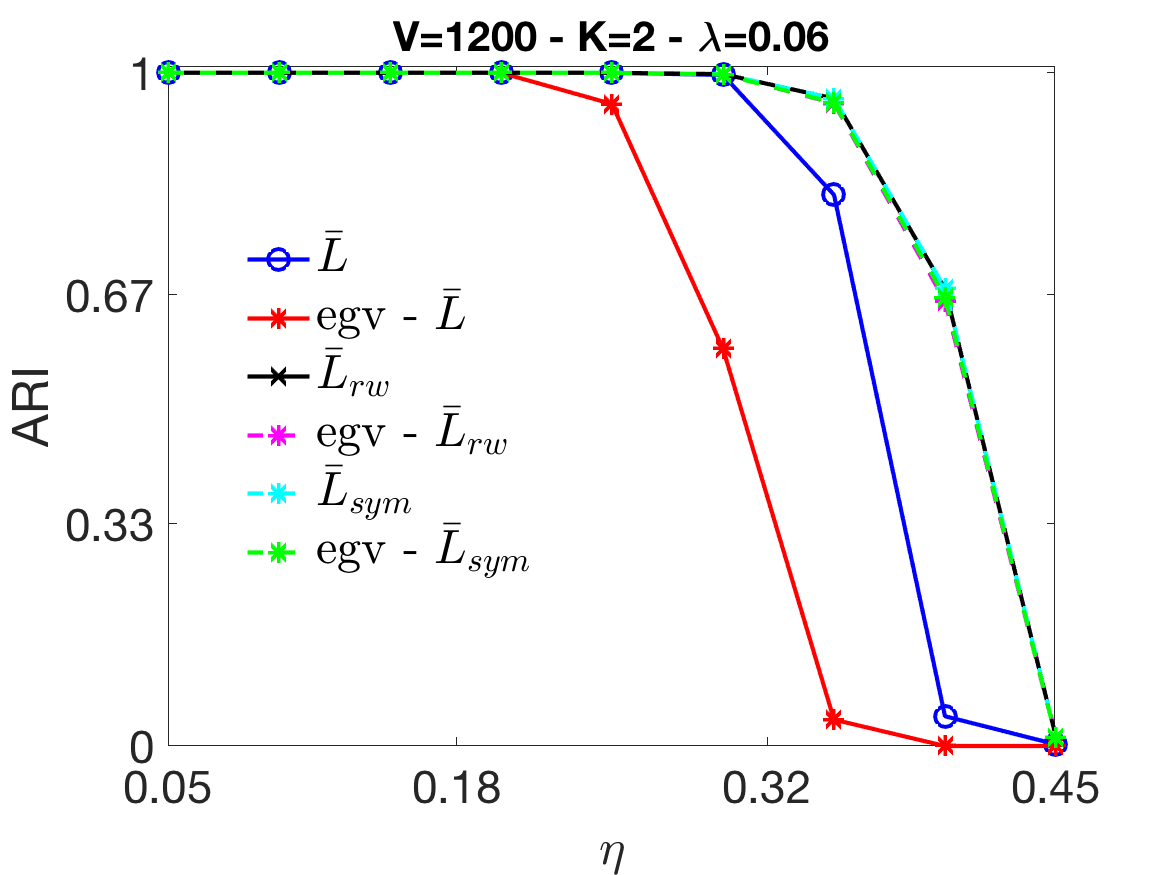} }
\subcaptionbox[]{  %
}[ 0.32\textwidth ]
{\includegraphics[width=0.35\textwidth] {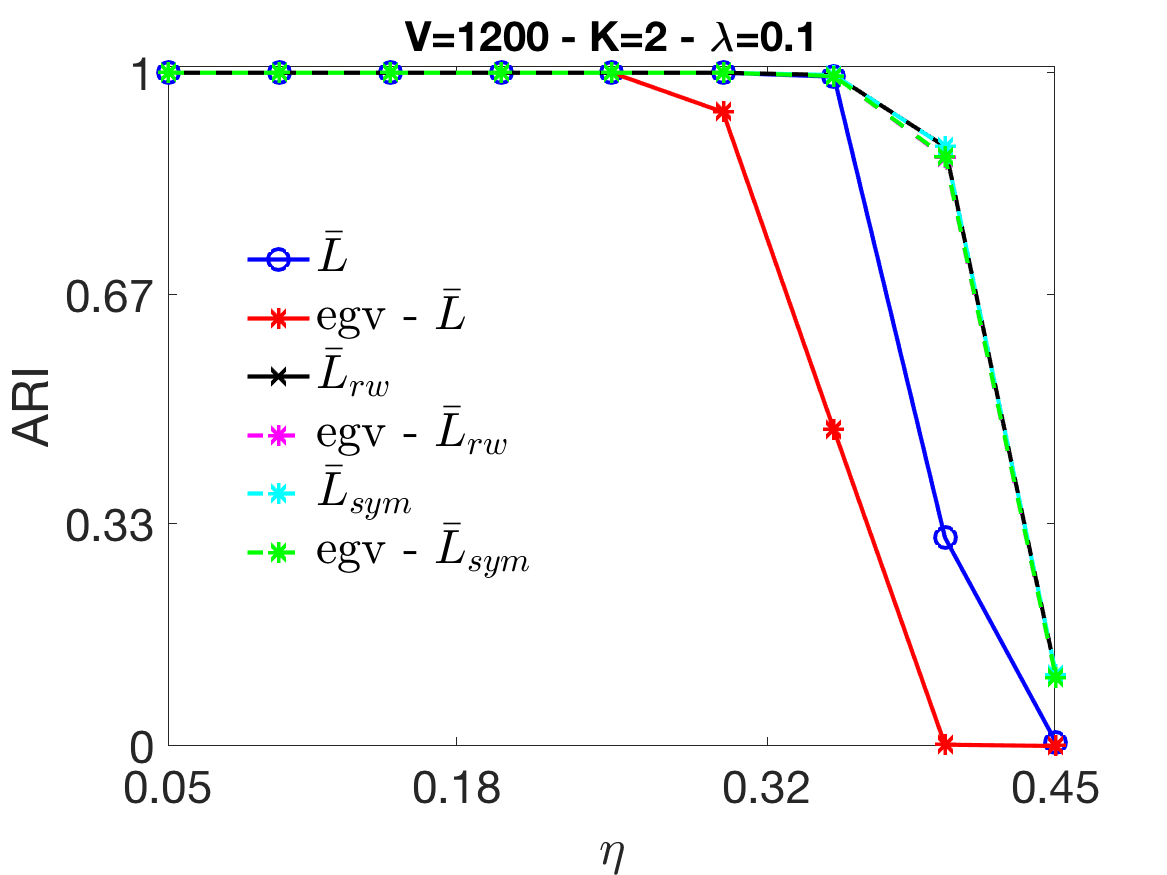} }
\subcaptionbox[]{  
}[ 0.32\textwidth ]
{\includegraphics[width=0.35\textwidth] {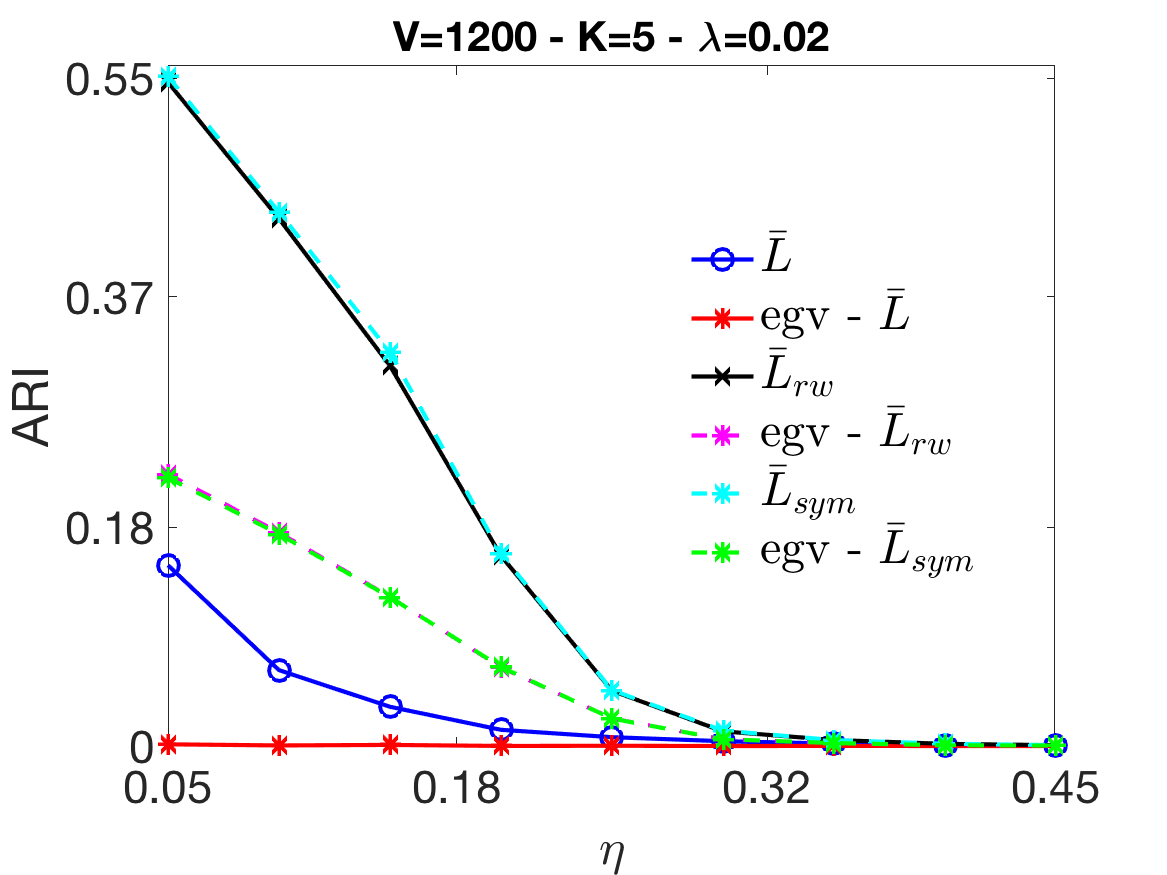} }
%
\subcaptionbox[]{  
}[ 0.32\textwidth ]
{\includegraphics[width=0.35\textwidth] {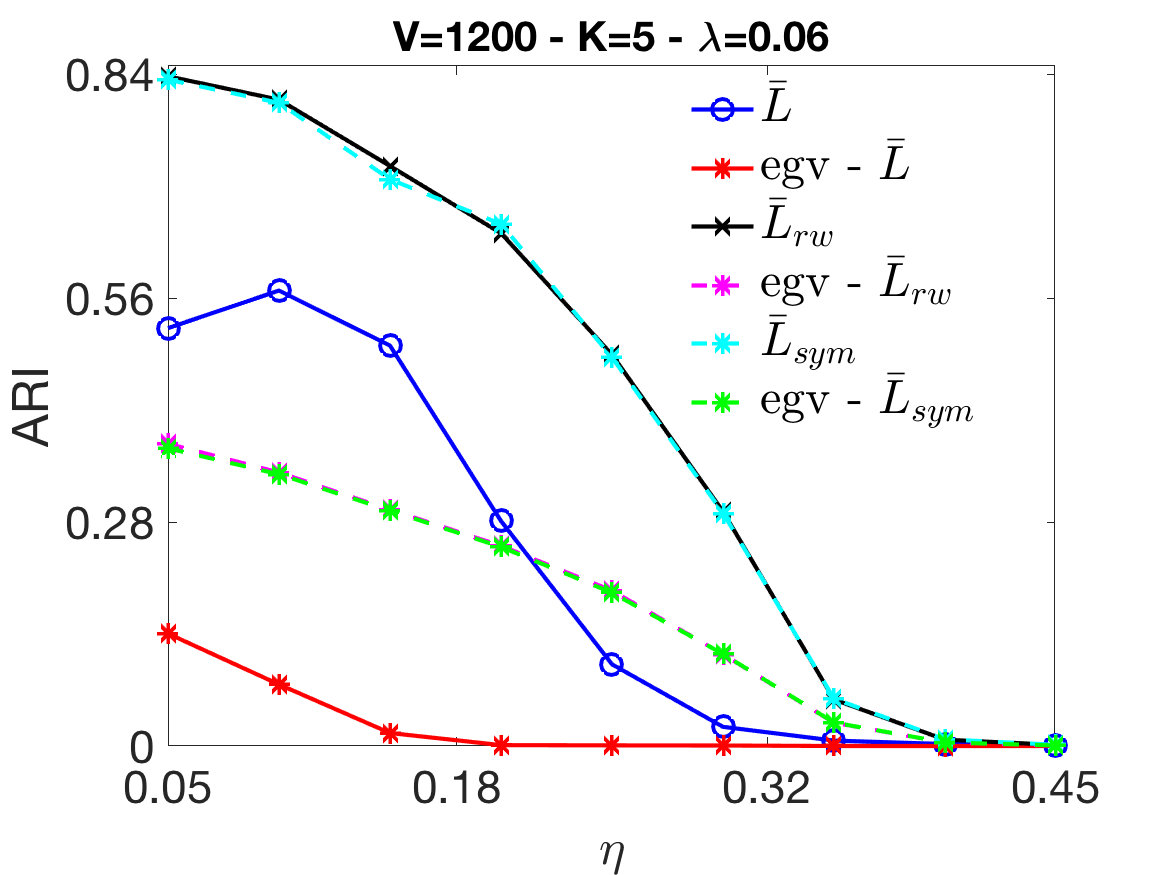} }
%
\subcaptionbox[]{  
}[ 0.32\textwidth ]
{\includegraphics[width=0.35\textwidth] {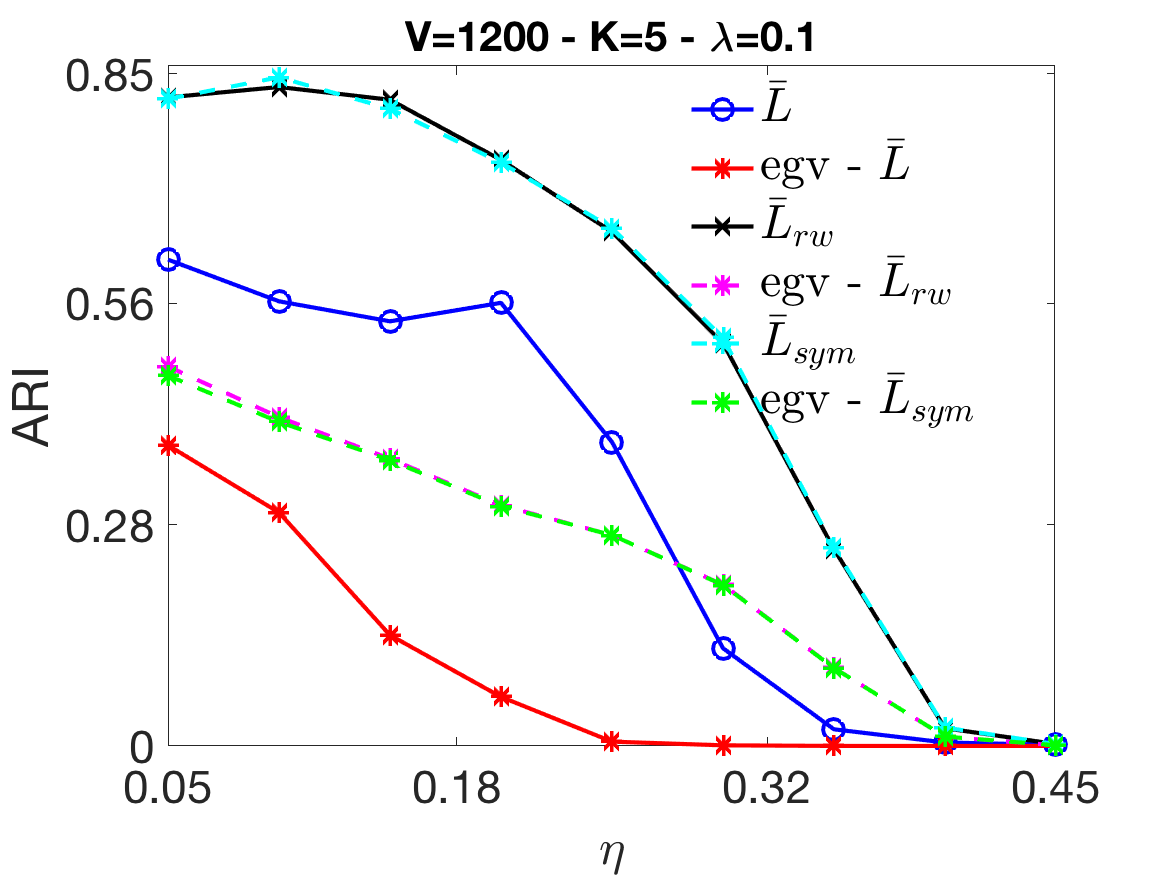} }
\subcaptionbox[]{  
}[ 0.32\textwidth ]
{\includegraphics[width=0.35\textwidth] {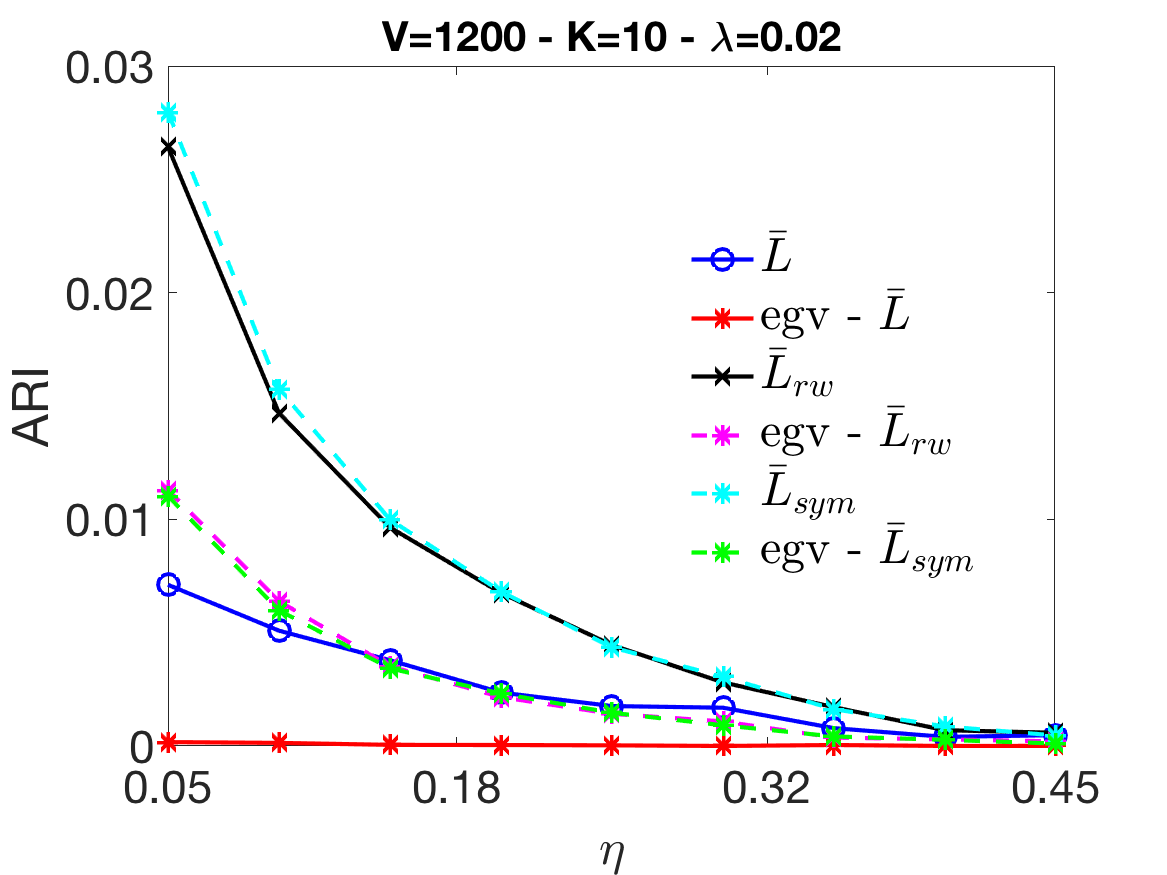} }
%
\subcaptionbox[]{  
}[ 0.32\textwidth ]
{\includegraphics[width=0.35\textwidth] {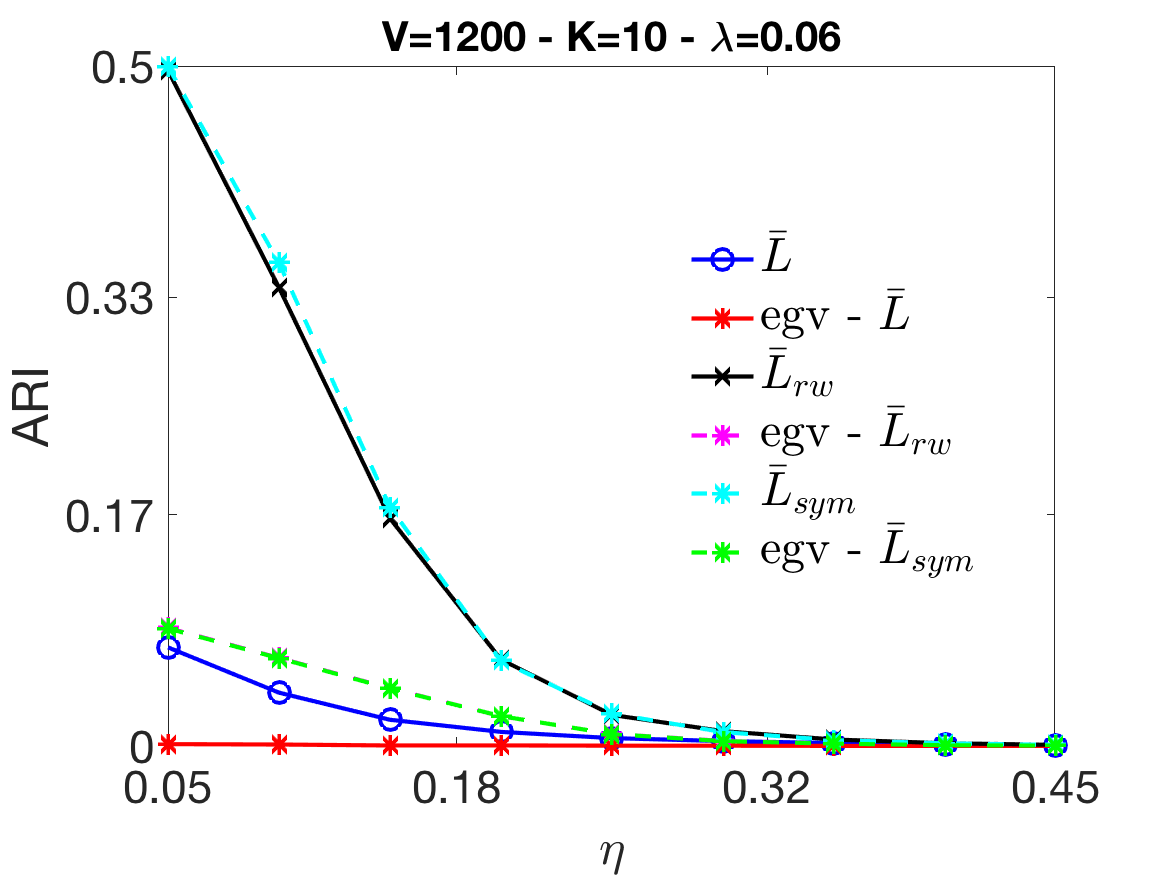} }
%
\subcaptionbox[]{  %
}[ 0.32\textwidth ]
{\includegraphics[width=0.35\textwidth] {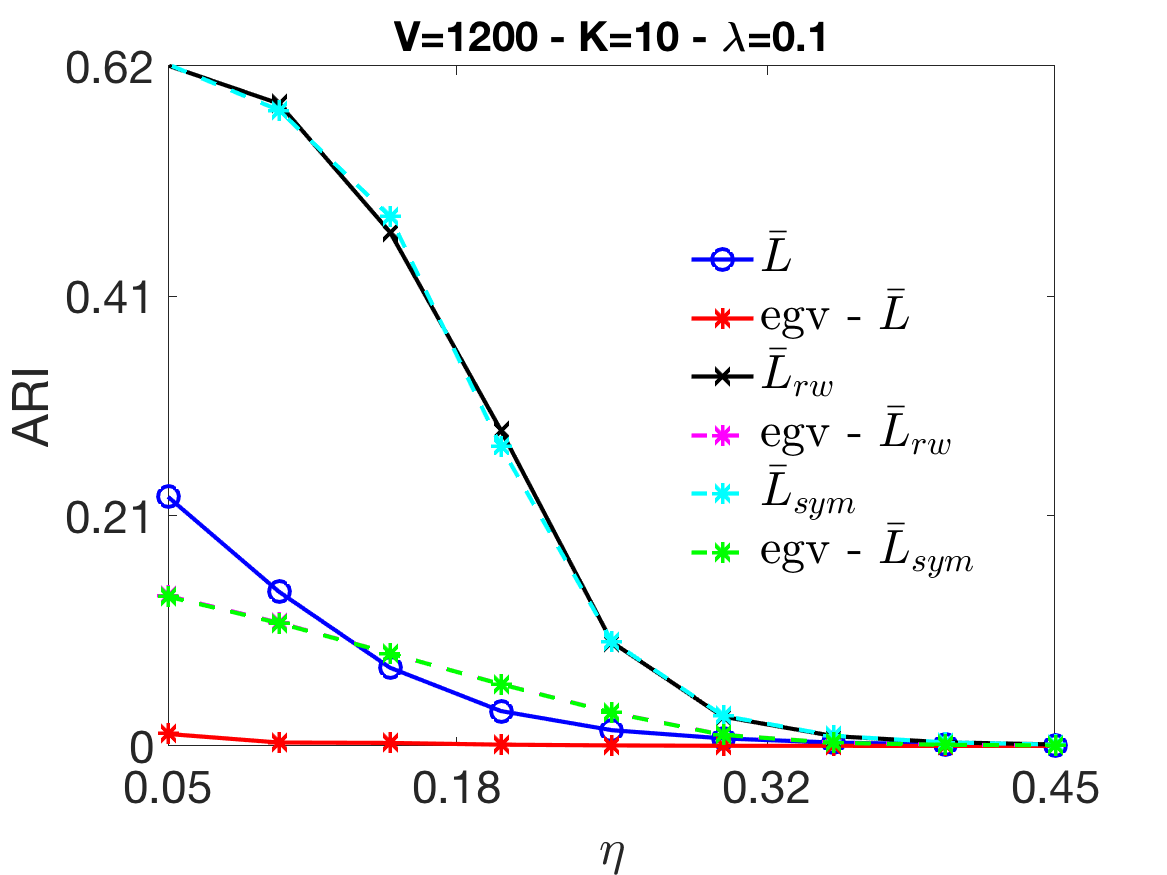} }
\captionsetup{width=0.95\linewidth}
\caption[Short Caption]{ These plots show the capability of MBO to recover the ground truth node-cluster association using the Laplacians \eqref{eq:signLap}, \eqref{eq:signLapRW} and \eqref{eq:signLapSym} at different regimes of noise, sparsity and number of clusters. As explained in the main text in Section~\ref{sec:numericalSection}, the algorithms are initialized by a partition into $K$ sets based on an eigenvector associated with the smallest positive eigenvalue. For this reason, we have included the ARI value of the initial condition to see how MBO improves on its starting point. The first row shows results for $K=2$. As we can see, MBO gives better results than the initial condition for $\bar{L}$, while $\bar{L}_{\text{rw}}$ and $\bar{L}_{\text{sym}}$ do not improve over their eigenvectors but they give better results than $\bar{L}$. For $K=5$ and $K=10$, MBO always gives better results than the initial condition for all values of $\lambda$ and $\eta$.
}
\label{fig:bestLap}
\end{figure}
\begin{figure}[!ht]
\centering
\subcaptionbox[]{  
}[ 0.32\textwidth ]
{\includegraphics[width=0.35\textwidth] {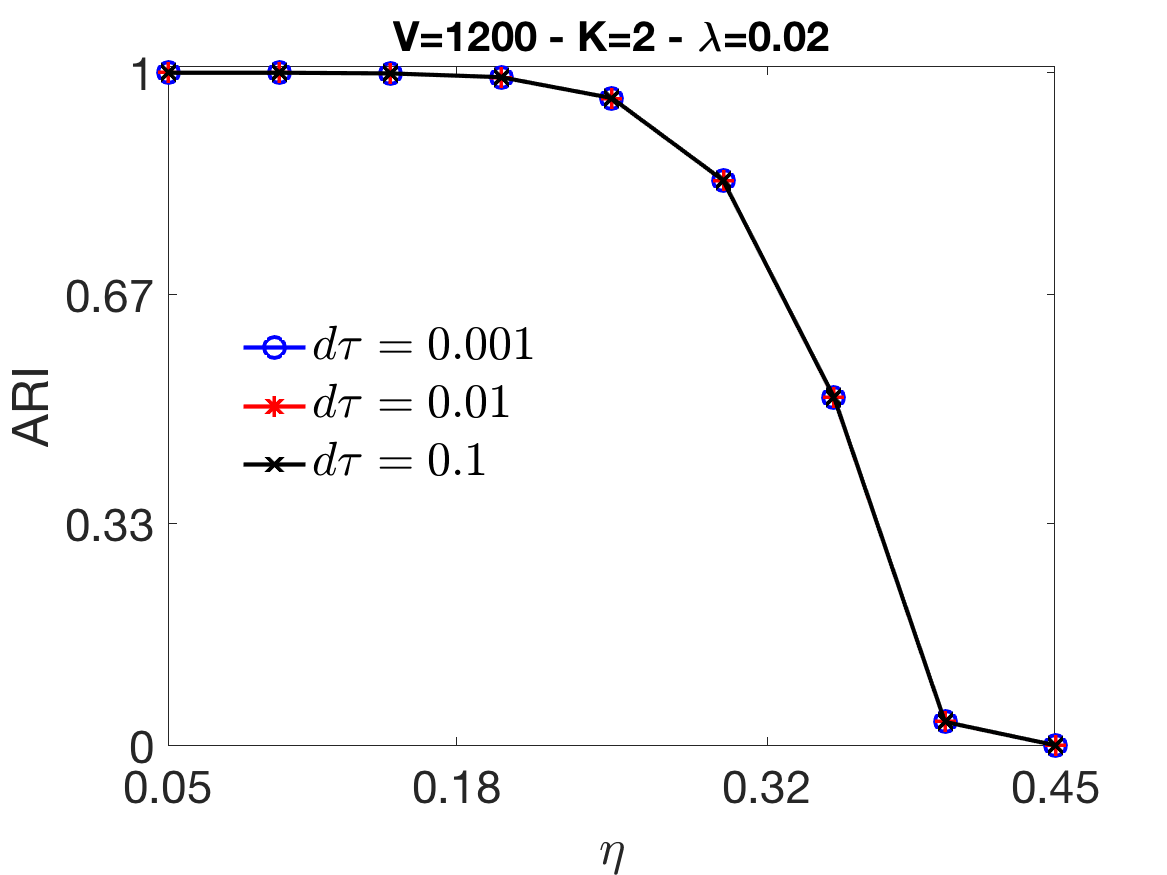} }
%
\subcaptionbox[]{  
}[ 0.32\textwidth ]
{\includegraphics[width=0.35\textwidth] {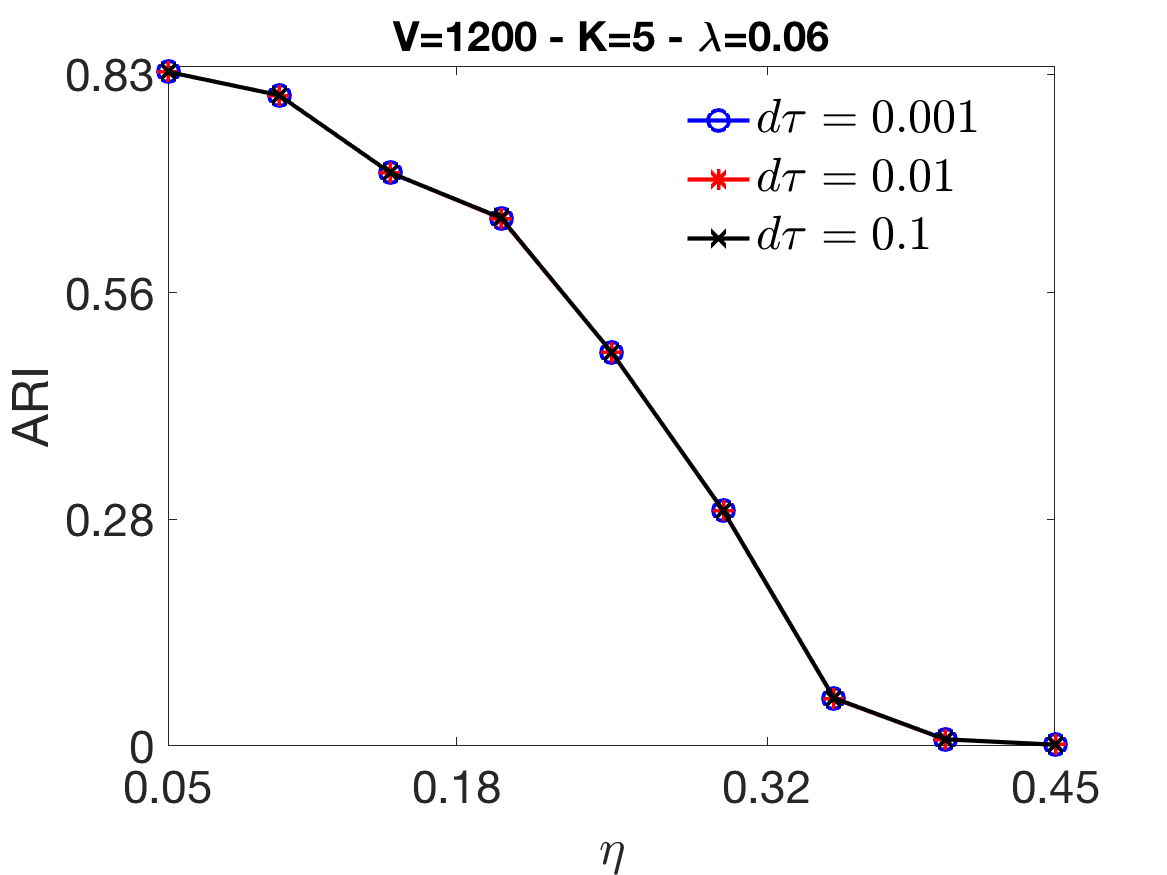} }
%
\subcaptionbox[]{  
}[ 0.32\textwidth ]
{\includegraphics[width=0.35\textwidth] {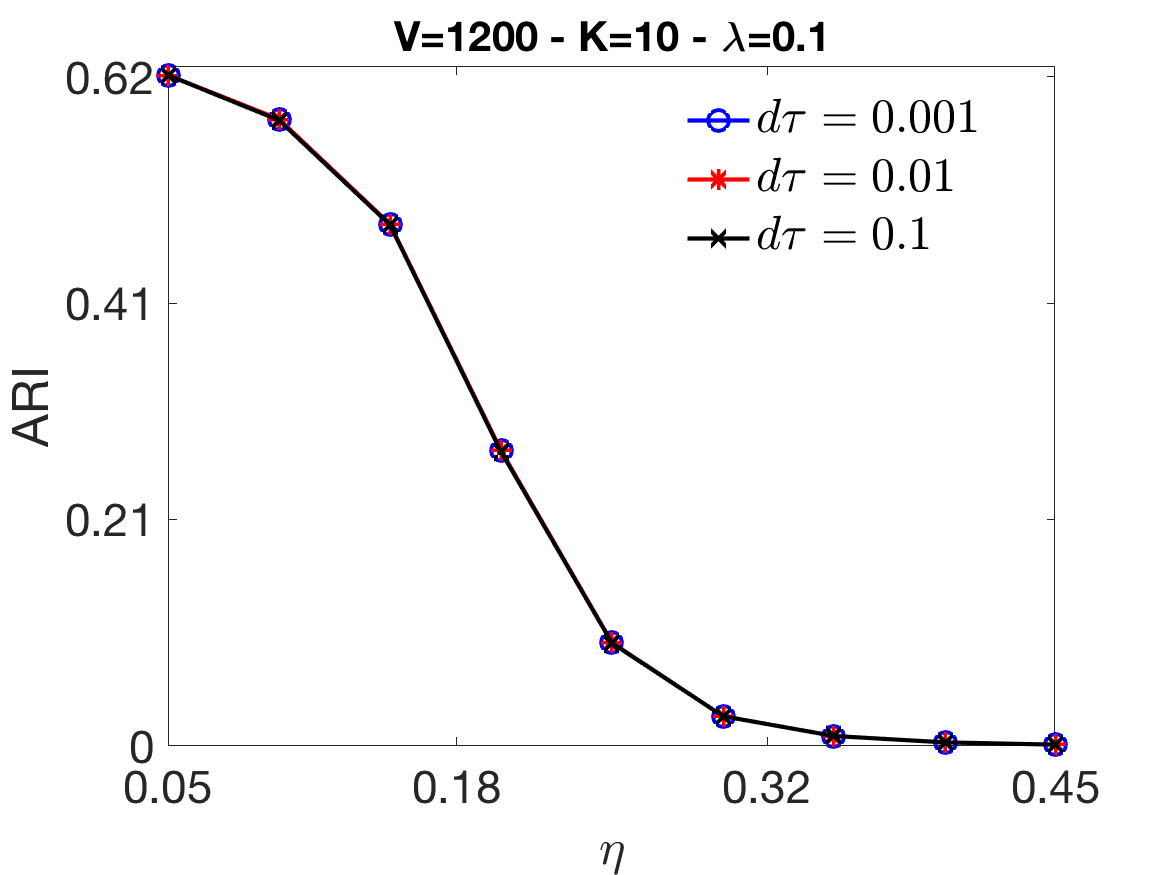} }
%
\captionsetup{width=0.95\linewidth}
\caption[Short Caption]{ The plots show different instances of MBO performed using different choices for the time discretization but with the same Laplacian $L_{\text{sym}}$. Here we can see that there is no difference in the choice of $d\tau$ in the whole range of values considered in Table \ref{tb:onlytab}.
}
\label{fig:bestTau}
\end{figure}
\subsection{Comparison to other methods}
The performance of our MBO algorithm is compared to that of the Balanced Normalized Cut (BNC) \cite{DhillonBalNormCut}, and with the results achieved by standard Kmeans$++$ clustering \cite{kmeans_pp, kmeanspp_algoMAT}. In this respect, as can be seen in Figure \ref{fig:comparison}, MBO always provides better performance than Kmeans$++$,  and improves upon BNC when sparsity and noise are high.
%
\begin{figure}[!ht]
\centering
\subcaptionbox[]{  
}[ 0.32\textwidth ]
{\includegraphics[width=0.35\textwidth] {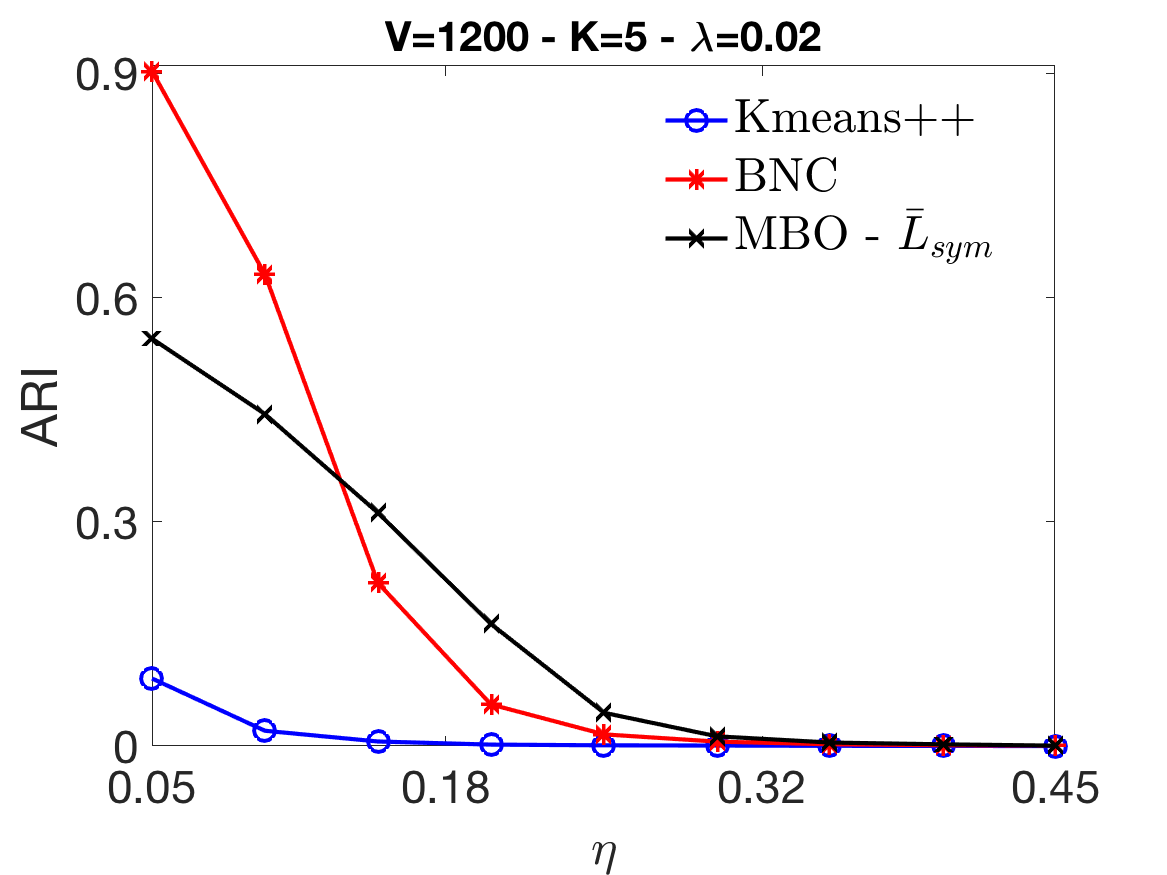} }
%
\subcaptionbox[]{  
}[ 0.32\textwidth ]
{\includegraphics[width=0.35\textwidth] {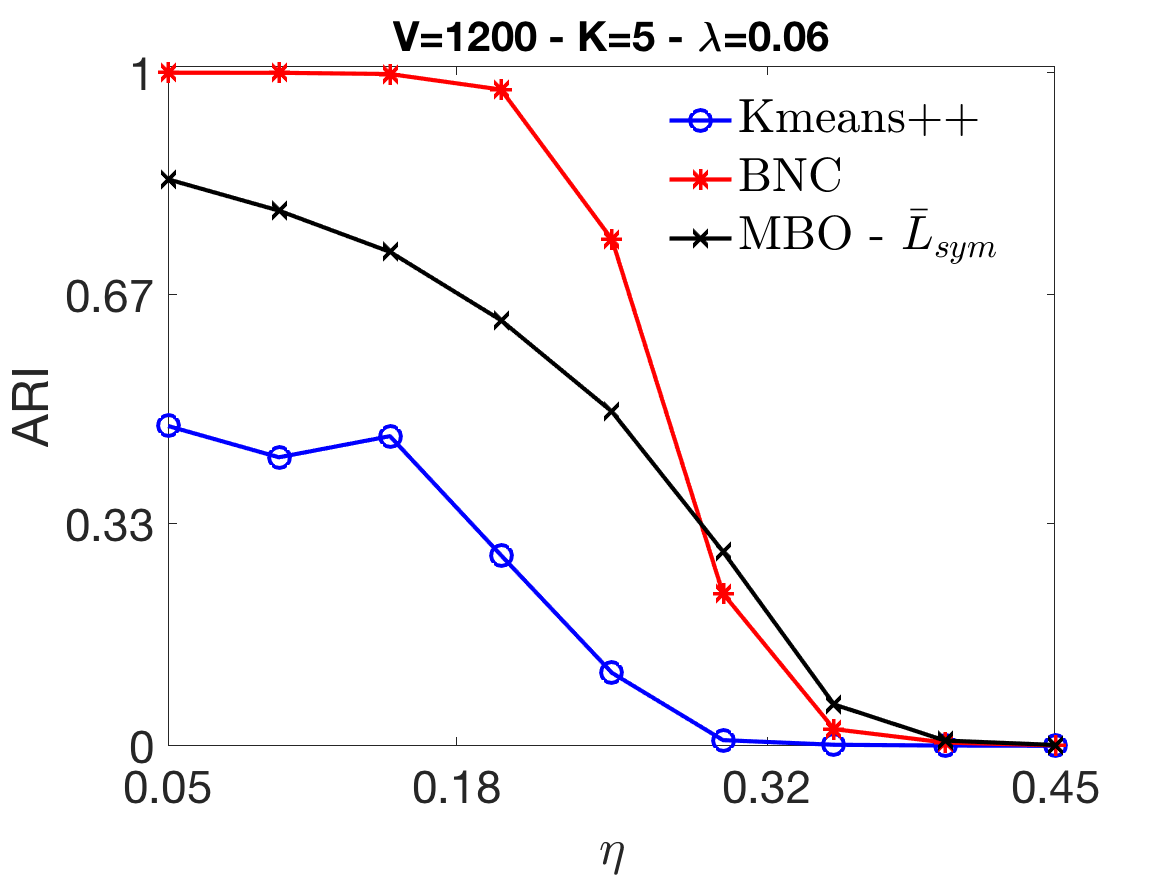} }
%
\subcaptionbox[]{  
}[ 0.32\textwidth ]
{\includegraphics[width=0.35\textwidth] {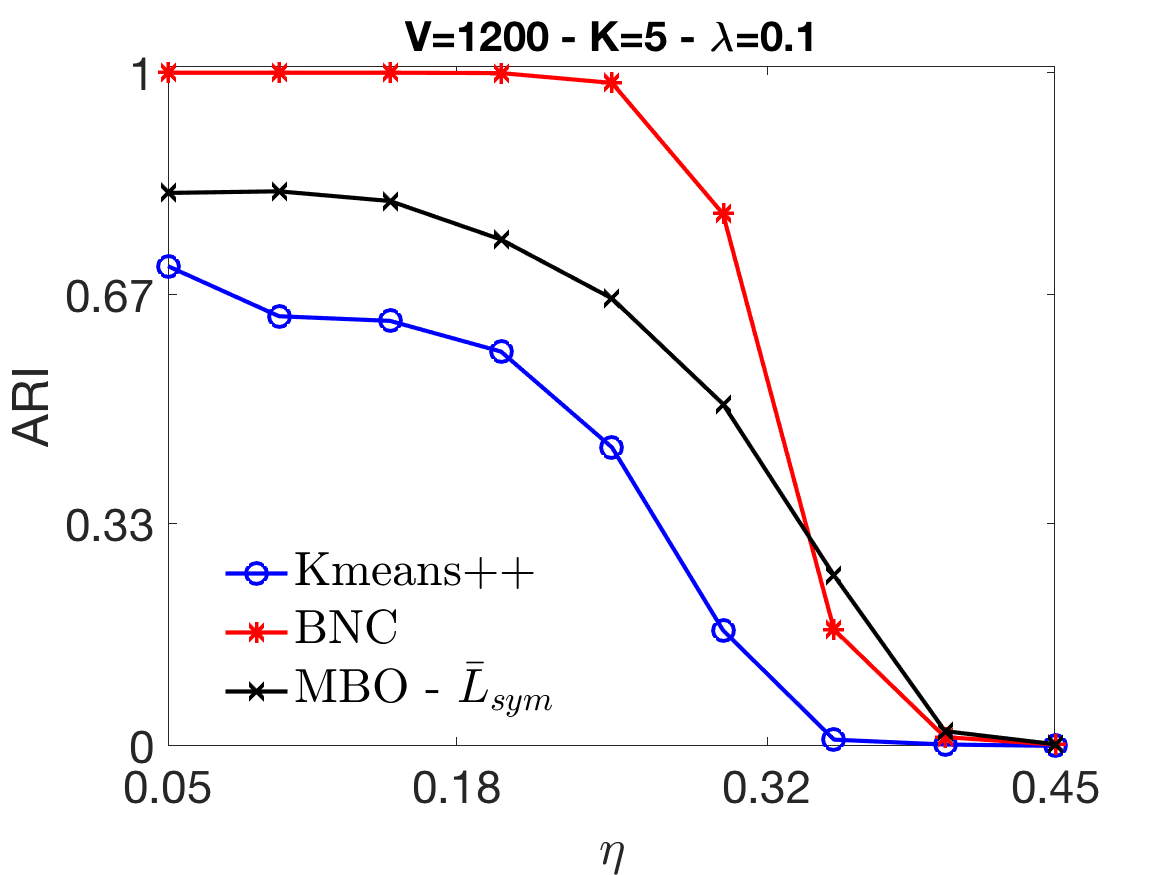} }
\subcaptionbox[]{  
}[ 0.32\textwidth ]
{\includegraphics[width=0.35\textwidth] {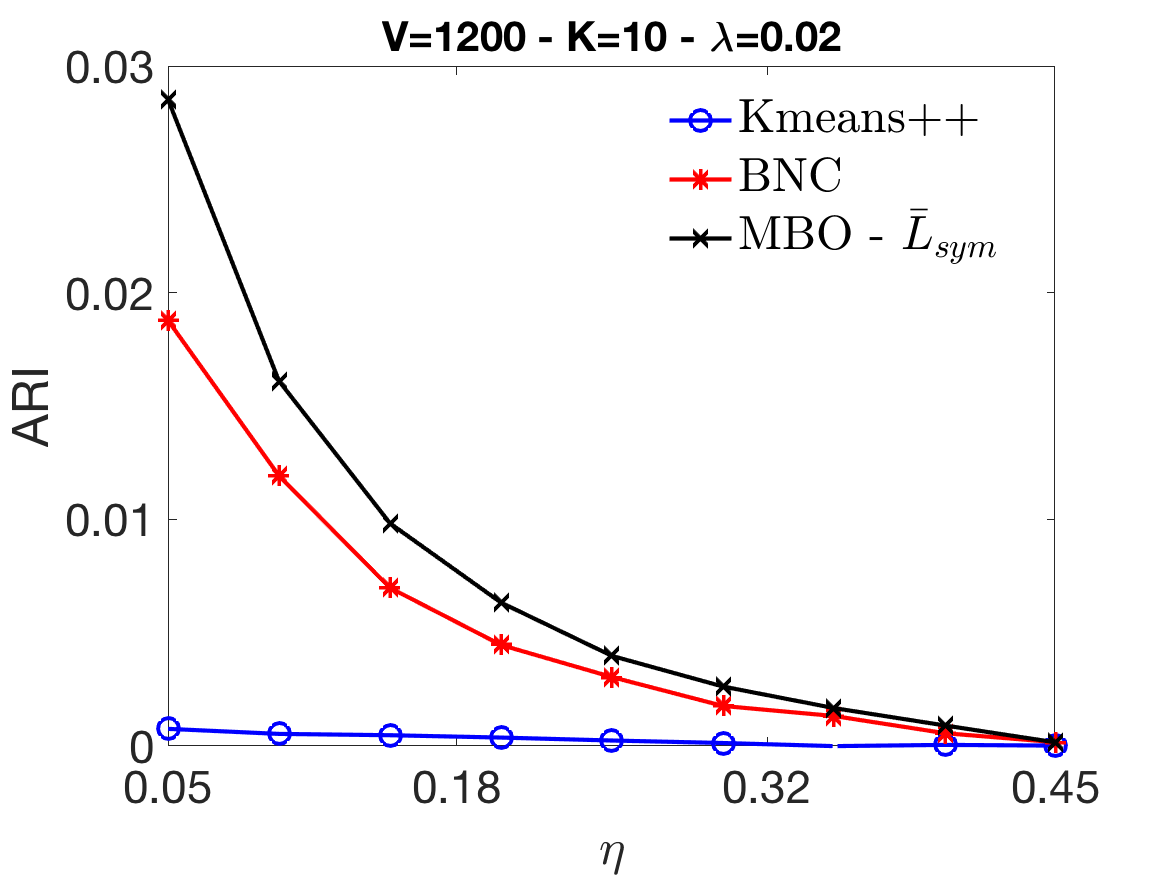} }
%
\subcaptionbox[]{  
}[ 0.32\textwidth ]
{\includegraphics[width=0.35\textwidth] {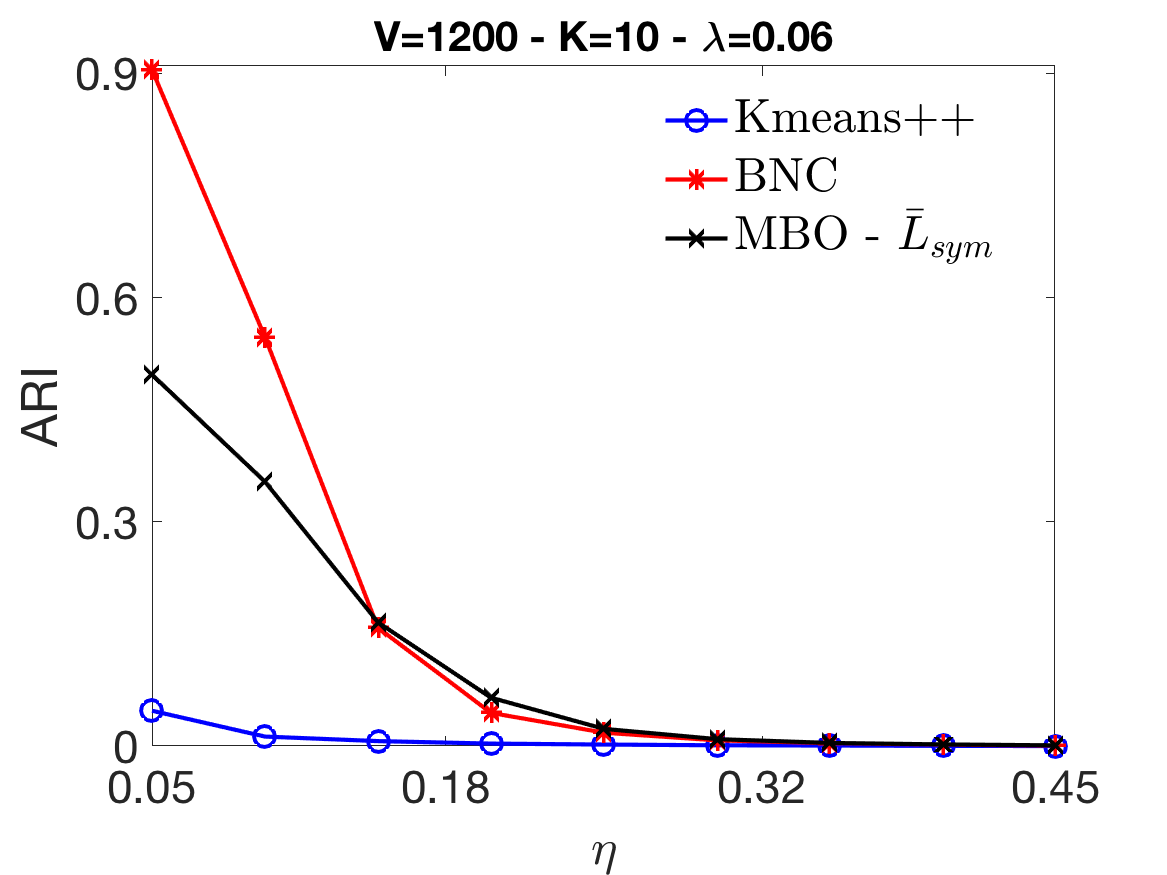} }
%
\subcaptionbox[]{  
}[ 0.32\textwidth ]
{\includegraphics[width=0.35\textwidth] {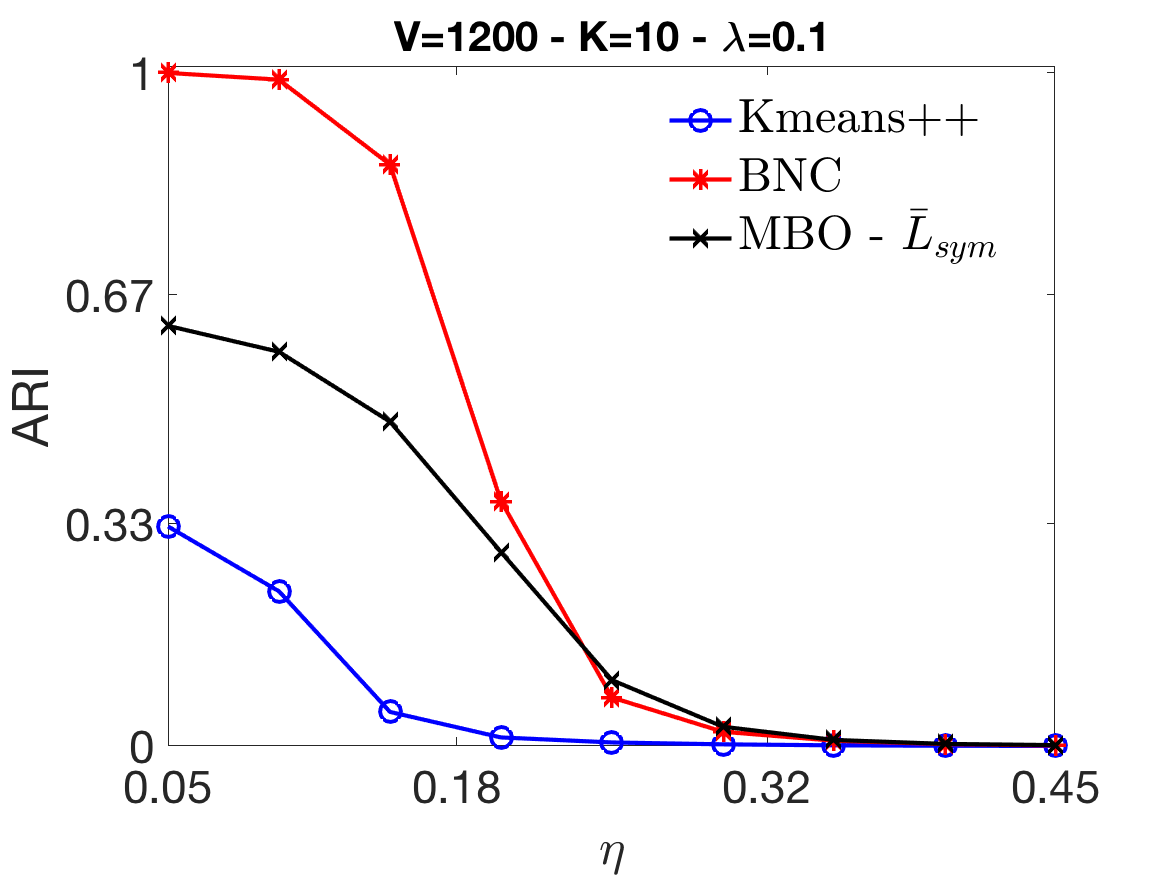} }
%
\captionsetup{width=0.95\linewidth}
\caption[Short Caption]{We have compared the node-cluster association recovery ability of MBO for signed Laplacians with two popular methods from the literature, Kmeans$++$ \cite{kmeans_pp} and BNC \cite{DhillonBalNormCut}. From the plots, it is possible to see that Kmeans$++$ always performs worse than the other methods. MBO delivers better results than BNC for high noise level in the sparsity regime considered.
}
\label{fig:comparison}
\end{figure}

\subsection{Must-link and cannot-link}
In this section, we detail how MBO can be applied in the context of constrained clustering. We have run the MBO algorithm for $\mathcal{L}=\mathcal{L}^+$, corresponding to the must-link scenario, and $\mathcal{L}=\bar{L}+\mathcal{Q}^-$, corresponding to the case in which only cannot-links are included as additional information in the definition of the Laplacian. In practice, we have considered the symmetric version of $\mathcal{L}$ in the same fashion as in \eqref{eq:signLapSym}. In the plots shown in Figure \ref{fig:constclust},  we see that, for high sparsity and noise, the constrained clustering improves upon the standard signed Laplacian in both cases. In addition, the two approaches can be combined to obtain better results. In this case, we have included the known information up to $\alpha=0.1$ to not 
modify considerably  the sparsity of the affinity matrix.
\begin{figure}[!ht]
\centering
\subcaptionbox[]{  
}[ 0.32\textwidth ]
{\includegraphics[width=0.35\textwidth] {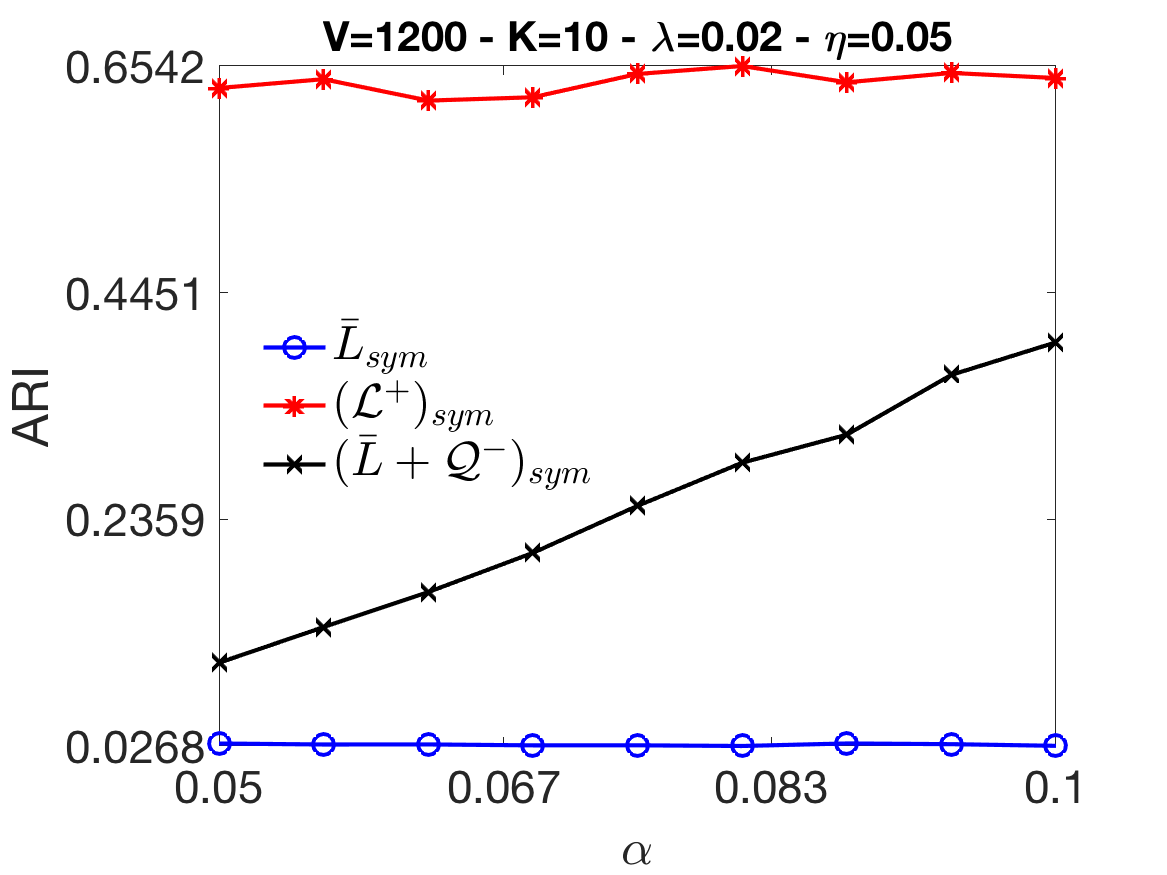} }
%
\subcaptionbox[]{  
}[ 0.32\textwidth ]
{\includegraphics[width=0.35\textwidth] {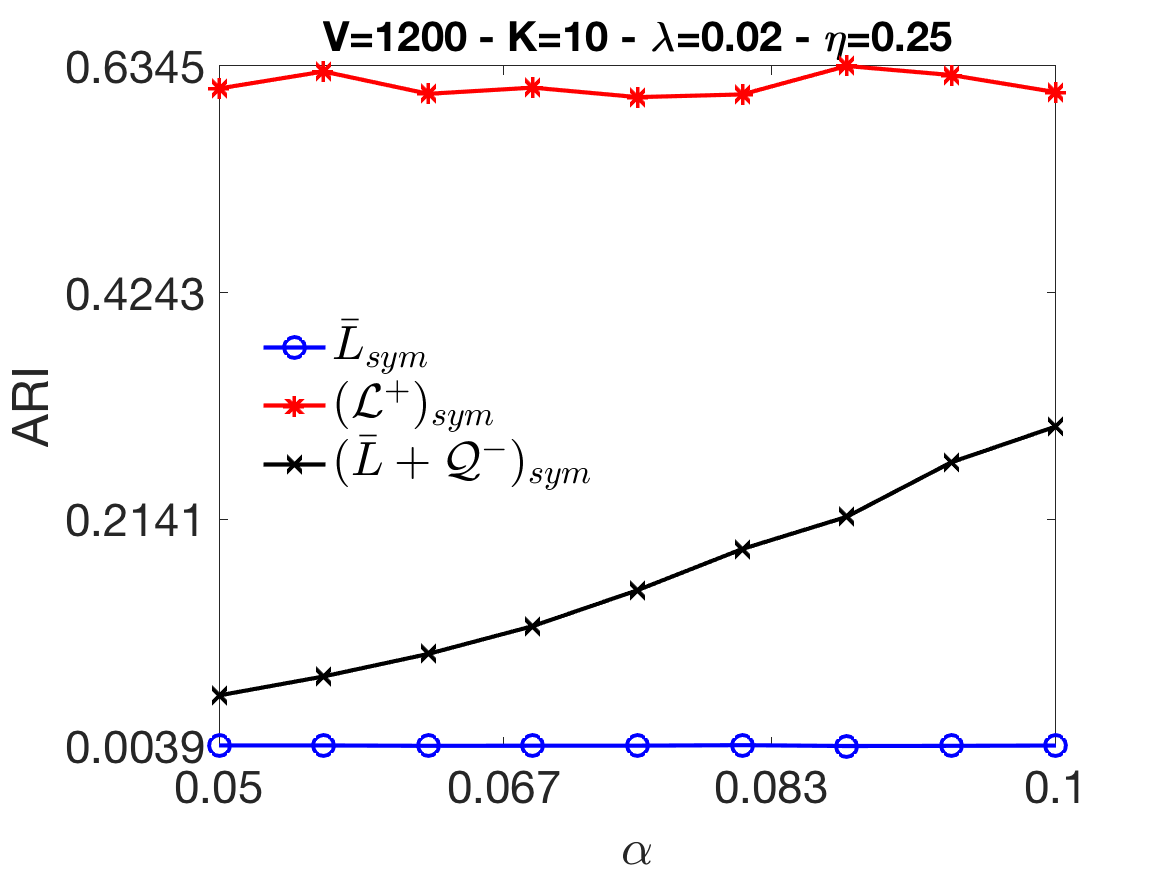} }
%
\subcaptionbox[]{  
}[ 0.32\textwidth ]
{\includegraphics[width=0.35\textwidth] {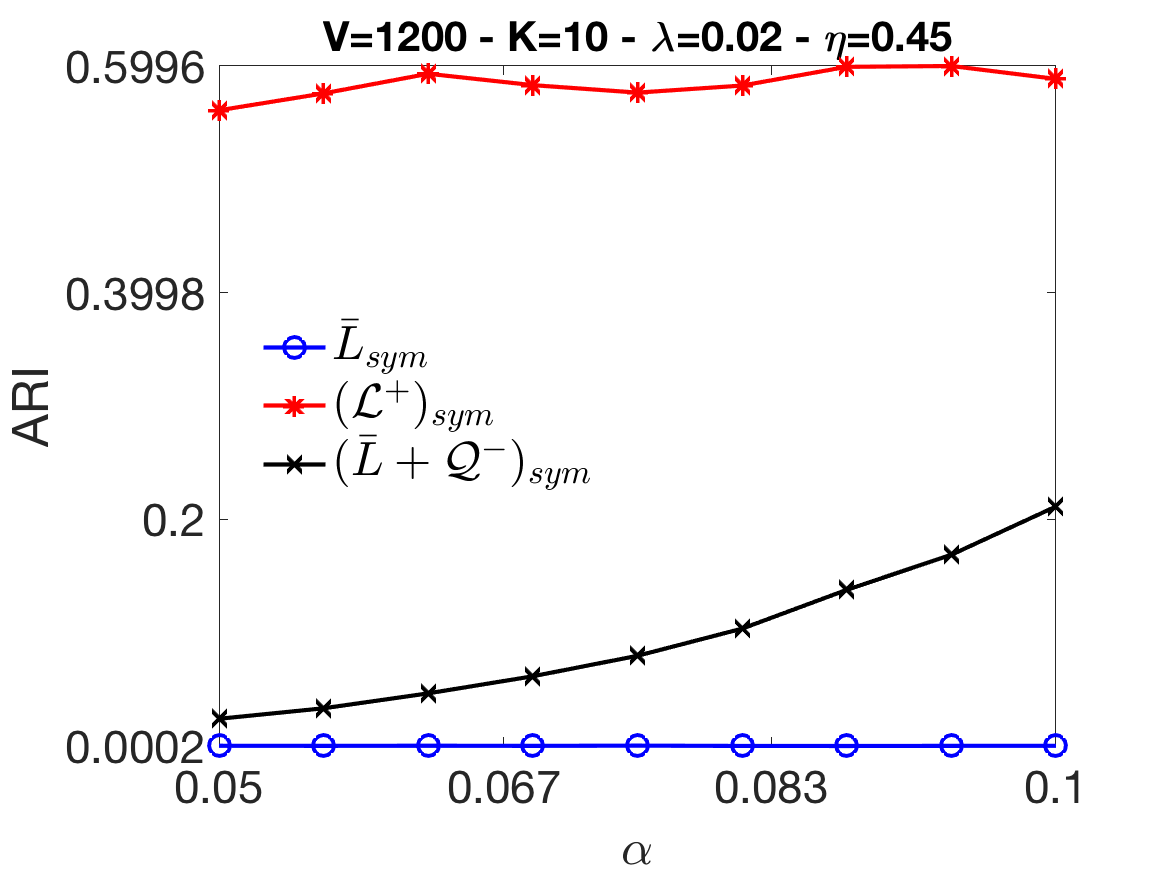} }
\subcaptionbox[]{  
}[ 0.32\textwidth ]
{\includegraphics[width=0.35\textwidth] {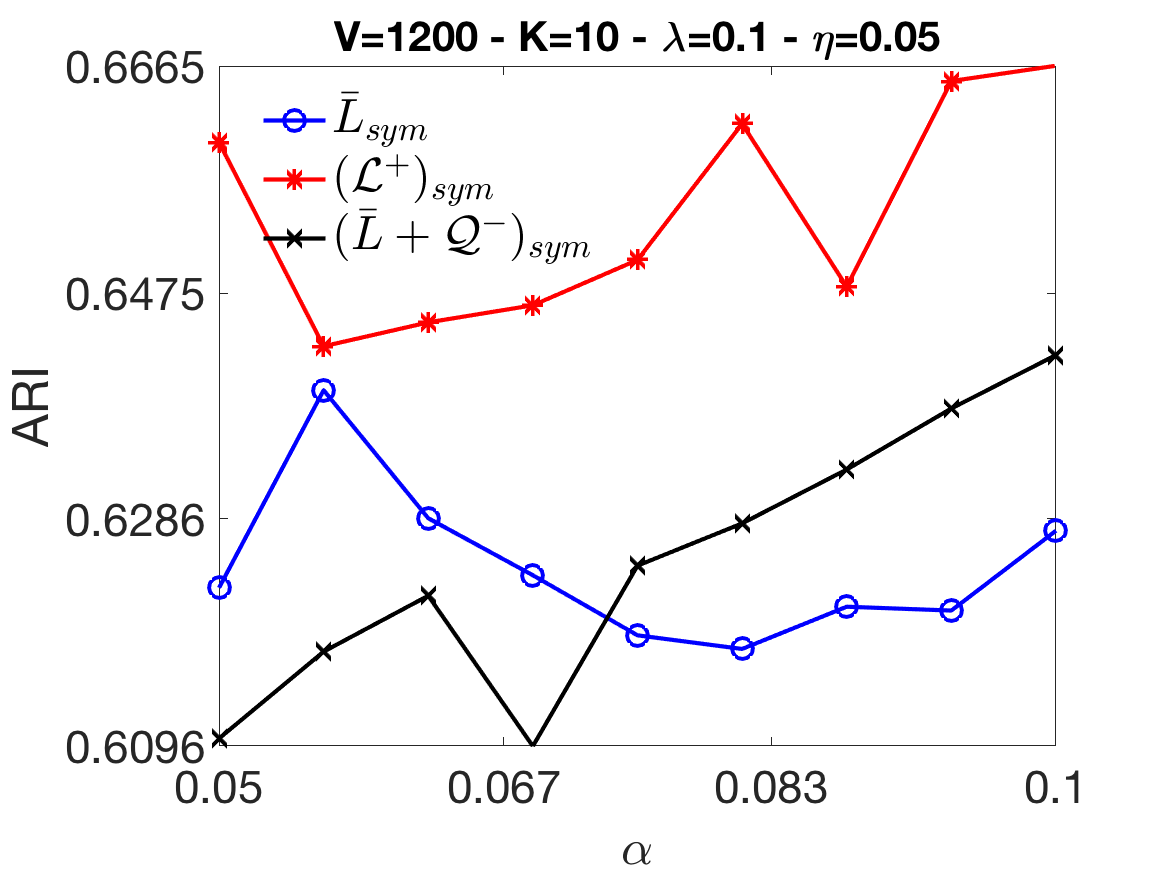} }
%
\subcaptionbox[]{  
}[ 0.32\textwidth ]
{\includegraphics[width=0.35\textwidth] {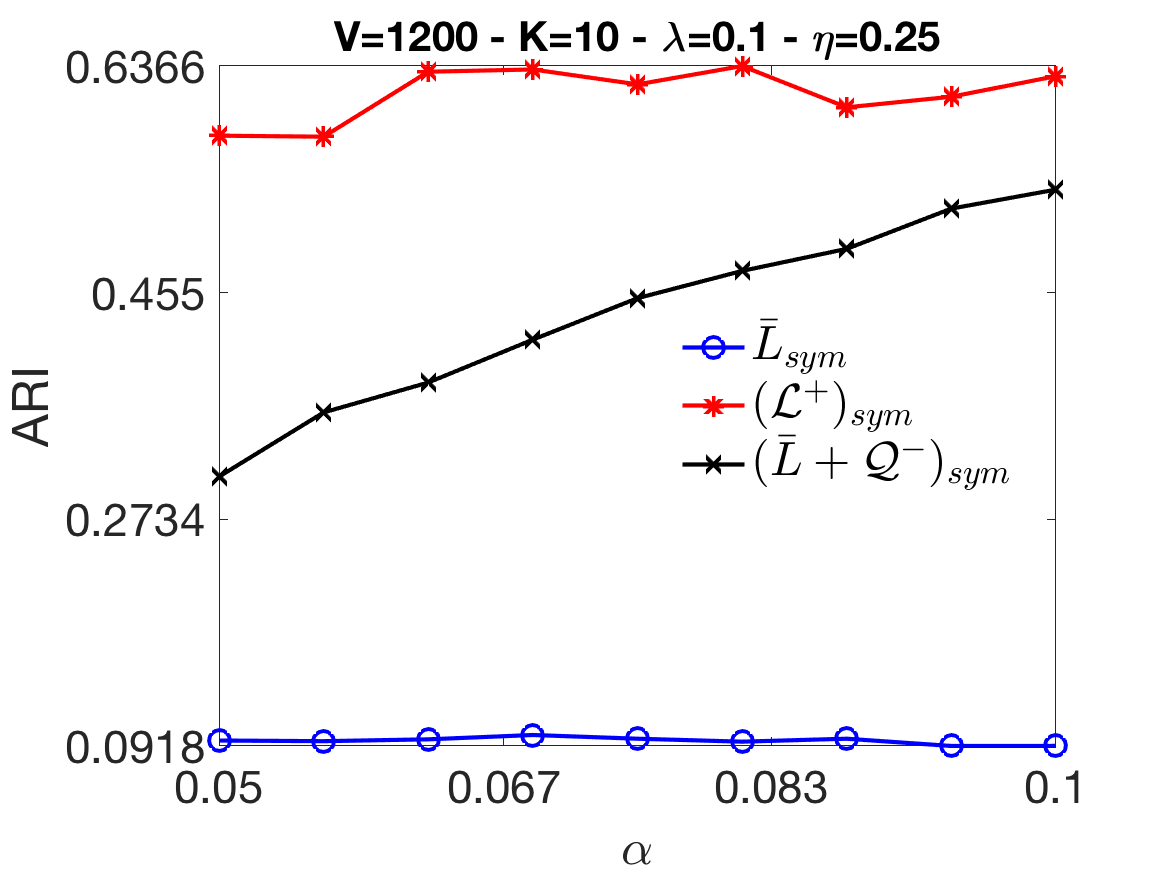} }
%
\subcaptionbox[]{  
}[ 0.32\textwidth ]
{\includegraphics[width=0.35\textwidth] {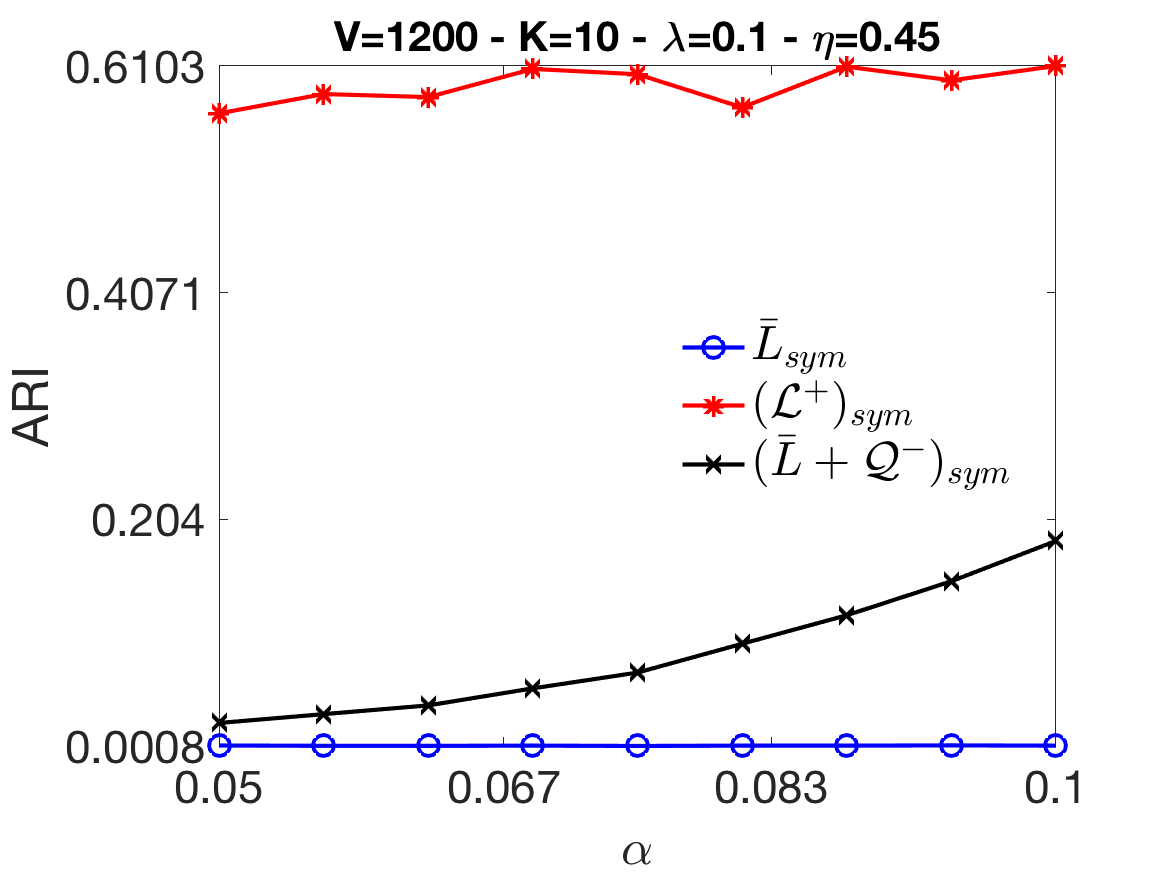} }
%
\captionsetup{width=0.95\linewidth}
\caption[Short Caption]{We compare MBO run over an instance of the affinity matrix generated from \eqref{eq:genaffmatrix}, against the same matrix where additional information extracted from the ground truth has been included in the form of must-links and cannot-links. This corresponds  to adding soft constraints to the system in the form of edge weights. The parameter $\alpha$ is the fraction of links of the ground truth used to define the must- and cannot-link matrices. In this case, we have considered $\alpha$ up to $0.1$ to avoid large variations in the sparsity of the matrix compared to the original one. For each Laplacian, we have considered its symmetric version  \eqref{eq:signLapSym}. The algorithms are initialized using different eigenvectors since they are specified by different Laplacians. At any value of $\alpha$, adding must-links and cannot-links significantly improves the recovery of the regular MBO for most sparsity and noise levels.
}
\label{fig:constclust}
\end{figure}

\subsection{Fidelity and avoidance terms}
The second soft constraint we include in the MBO scheme is at the level of the characteristic matrix. The fidelity term pushes the characteristic matrix to be similar to $\hat{U}$ and increases the energy for every row $U$ which is different from $\hat{U}$. Instead, for the avoidance term the energy of the GL functional is increased only when there is an exact match between two rows of $U$ and $\tilde{U}$. This means that, in general, the effect of the avoidance term will be less influential since it provides an energy increase fewer times. This can be seen in Figure \ref{fig:fidelavoid}, where the black curve is always below  the red one. We add that, in order to make the effect of the fidelity and avoidance terms more evident, we fix the nonzero diagonal elements of  
the matrices $R$ and $Q$ in \eqref{eq:Uhalf} equal to $30$.  

\begin{figure}[!ht]
\centering
\subcaptionbox[]{  
}[ 0.32\textwidth ]
{\includegraphics[width=0.35\textwidth] {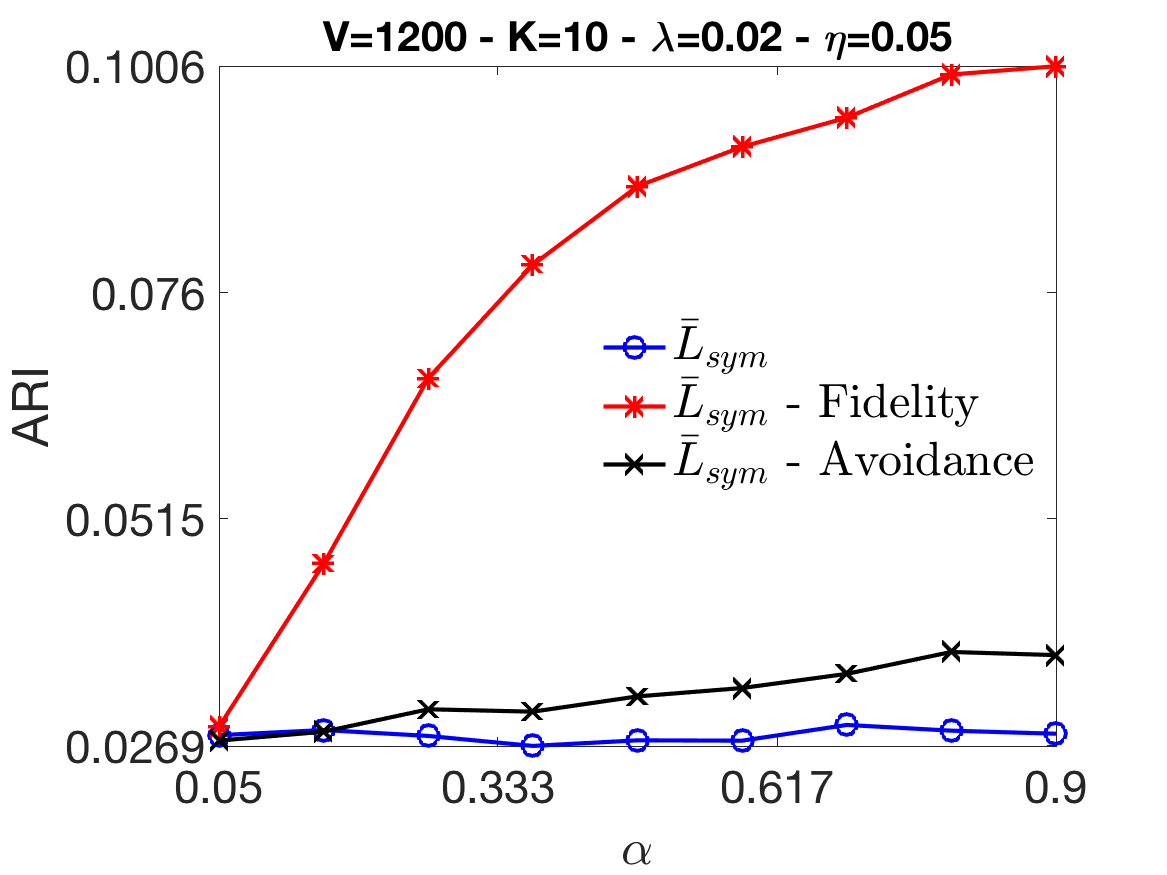} }
%
\subcaptionbox[]{  
}[ 0.32\textwidth ]
{\includegraphics[width=0.35\textwidth] {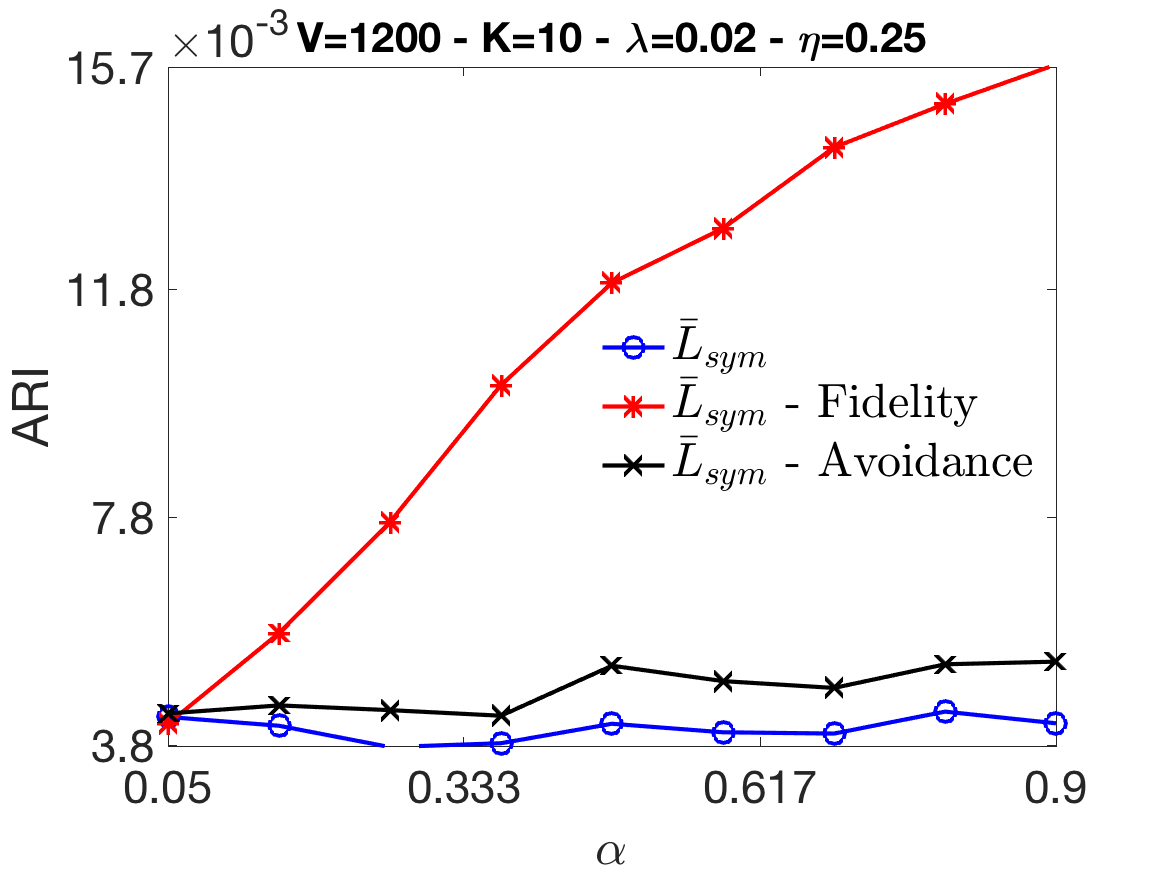} }
%
\subcaptionbox[]{  
}[ 0.32\textwidth ]
{\includegraphics[width=0.35\textwidth] {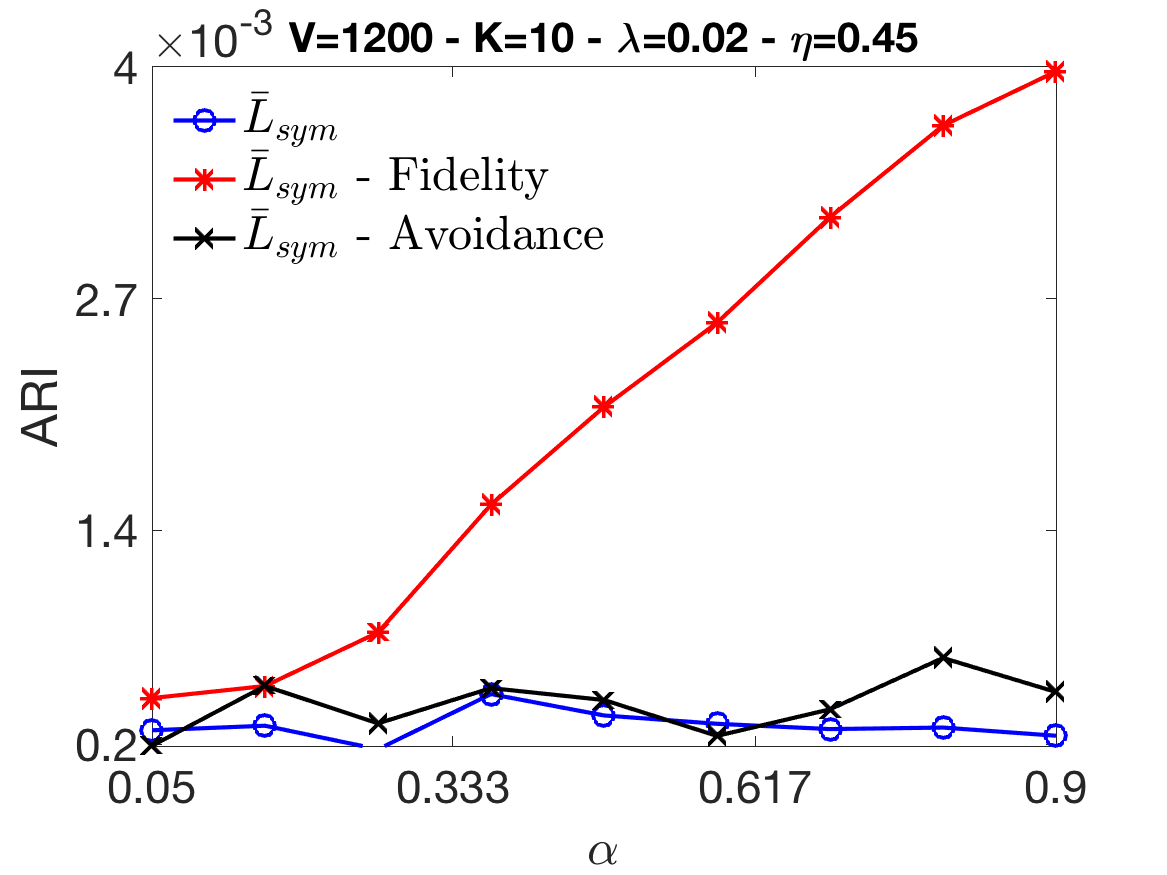} }
\subcaptionbox[]{  
}[ 0.32\textwidth ]
{\includegraphics[width=0.35\textwidth] {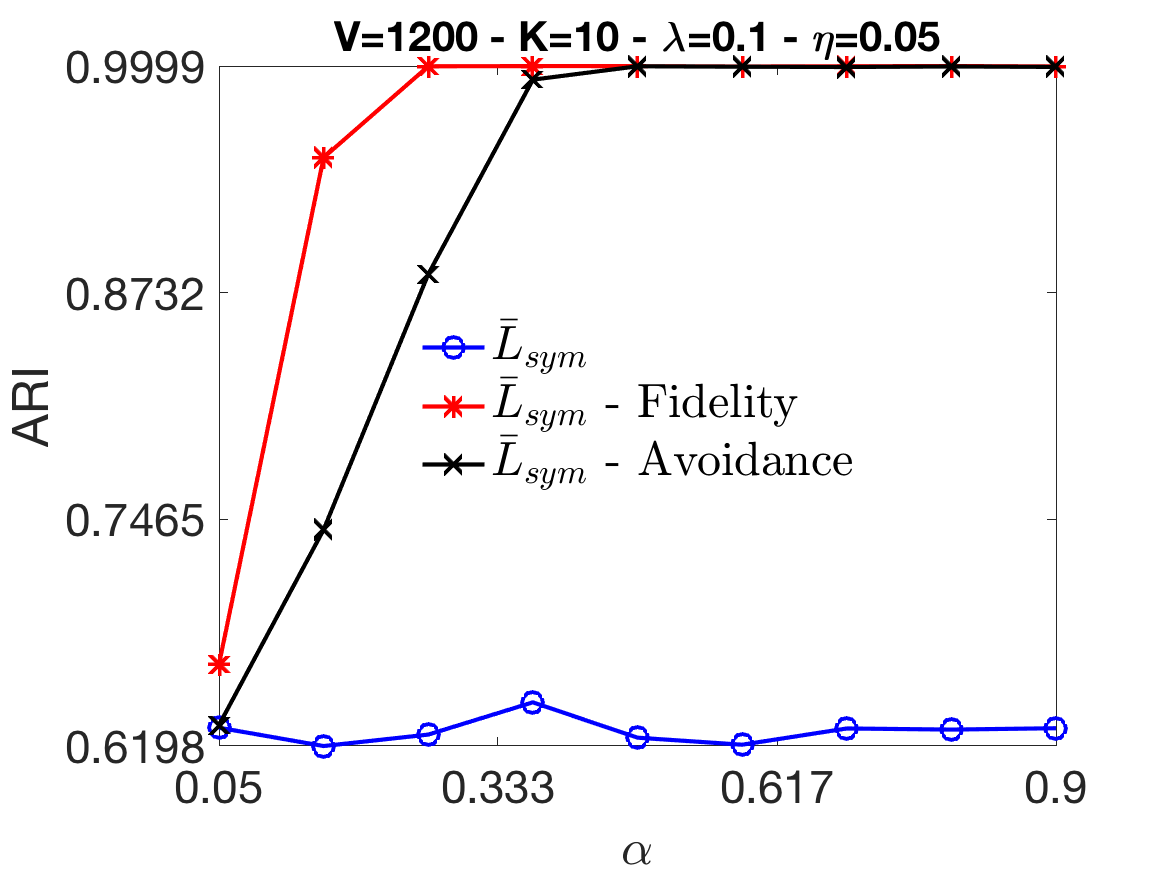} }
%
\subcaptionbox[]{  
}[ 0.32\textwidth ]
{\includegraphics[width=0.35\textwidth] {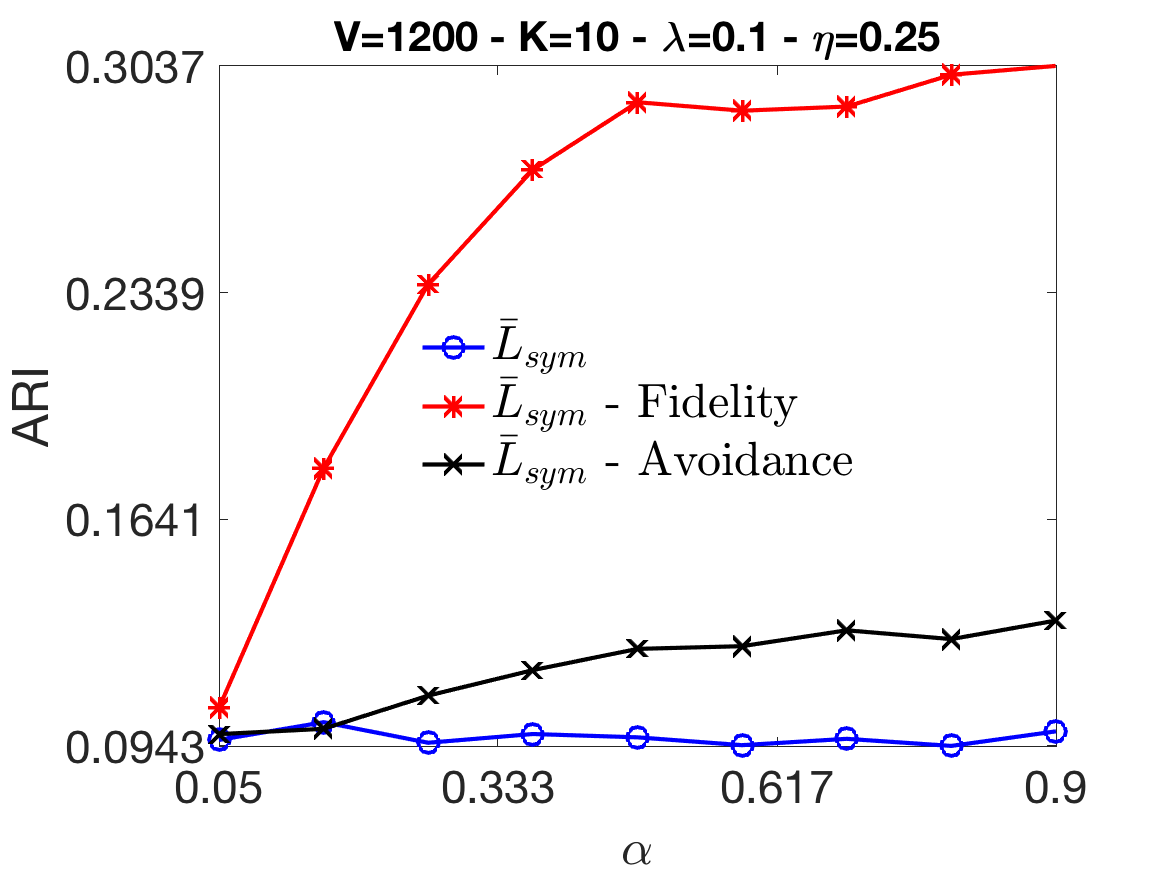} }
%
\subcaptionbox[]{  
}[ 0.32\textwidth ]
{\includegraphics[width=0.35\textwidth] {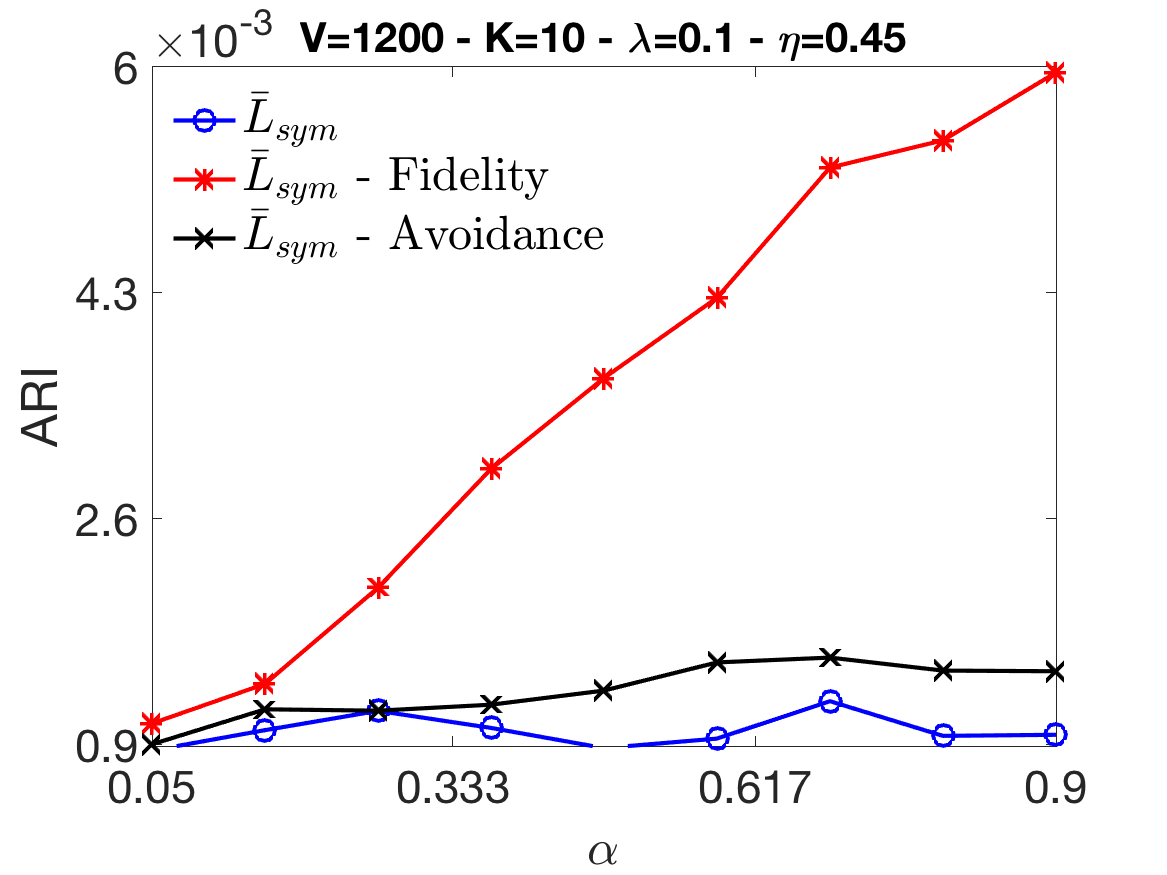} }
%
\captionsetup{width=0.95\linewidth}
\caption[Short Caption]{We consider two soft constraints at the level of the characteristic matrix $U$ as expressed in \eqref{eq:Uhalf}. The fidelity/avoidance constraints are extracted at random from the ground truth for each experiment. For the fidelity/avoidance case, the relevant entries of the initial characteristic matrix are modified to be consistent with the constraints. As observed, adding the two constraints improves the results compared to those of the regular MBO procedure for all values of $\alpha$. We remind that the matrices $R$ and $Q$ in \eqref{eq:Uhalf} are multiplied by a coefficient equal to $30$ in order to magnify the effect of the constraint (without varying the relative order of position of the three curves).
}
\label{fig:fidelavoid}
\end{figure}

\subsection{Anchors}
In this setting, we are introducing in the MBO scheme a hard constraint, as we need to  a priori modify the matrix responsible for the evolution of $U$, as detailed in Section \ref{sc:Anchors}. As shown in Figure \ref{fig:anchors}, anchors always improve upon the unconstrained signed Laplacian, as expected.
\begin{figure}[!ht]
\centering
\subcaptionbox[]{  
}[ 0.32\textwidth ]
{\includegraphics[width=0.35\textwidth] {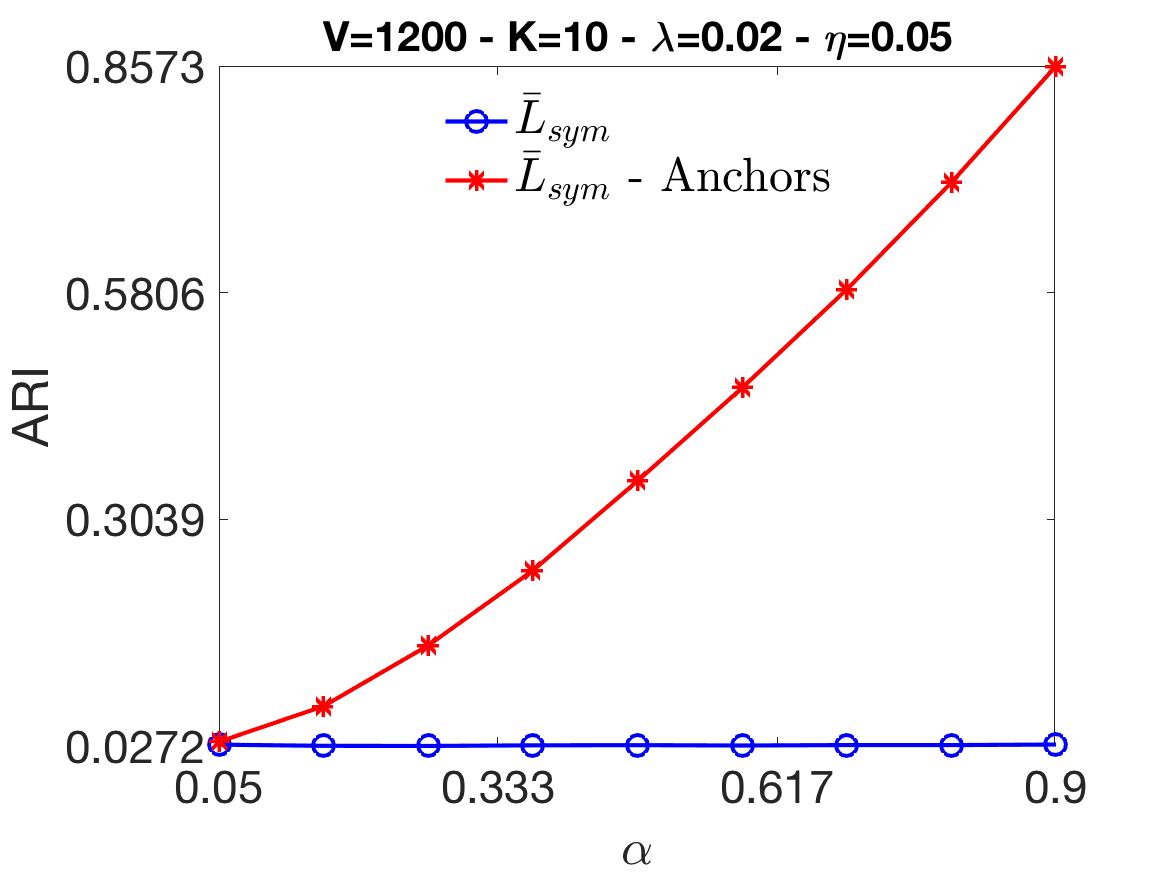} }
%
\subcaptionbox[]{  
}[ 0.32\textwidth ]
{\includegraphics[width=0.35\textwidth] {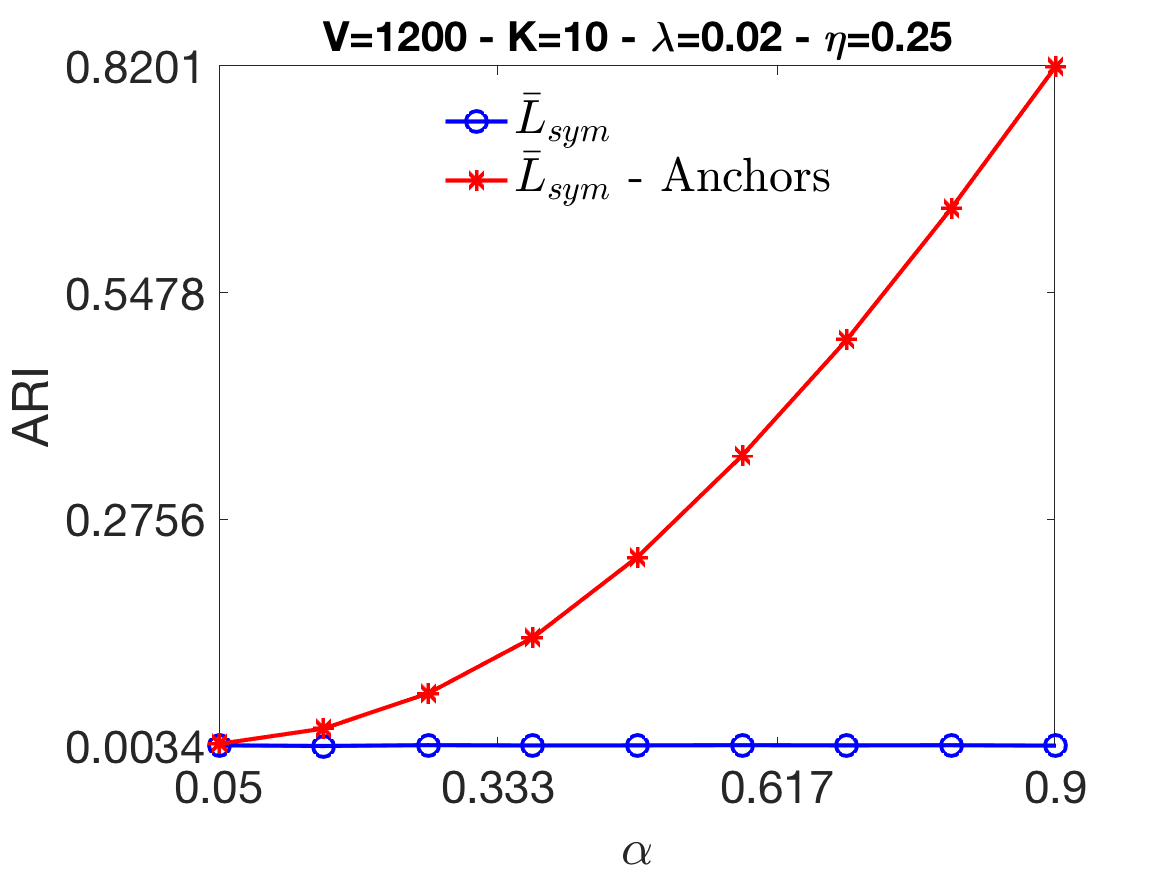} }
%
\subcaptionbox[]{  
}[ 0.32\textwidth ]
{\includegraphics[width=0.35\textwidth] {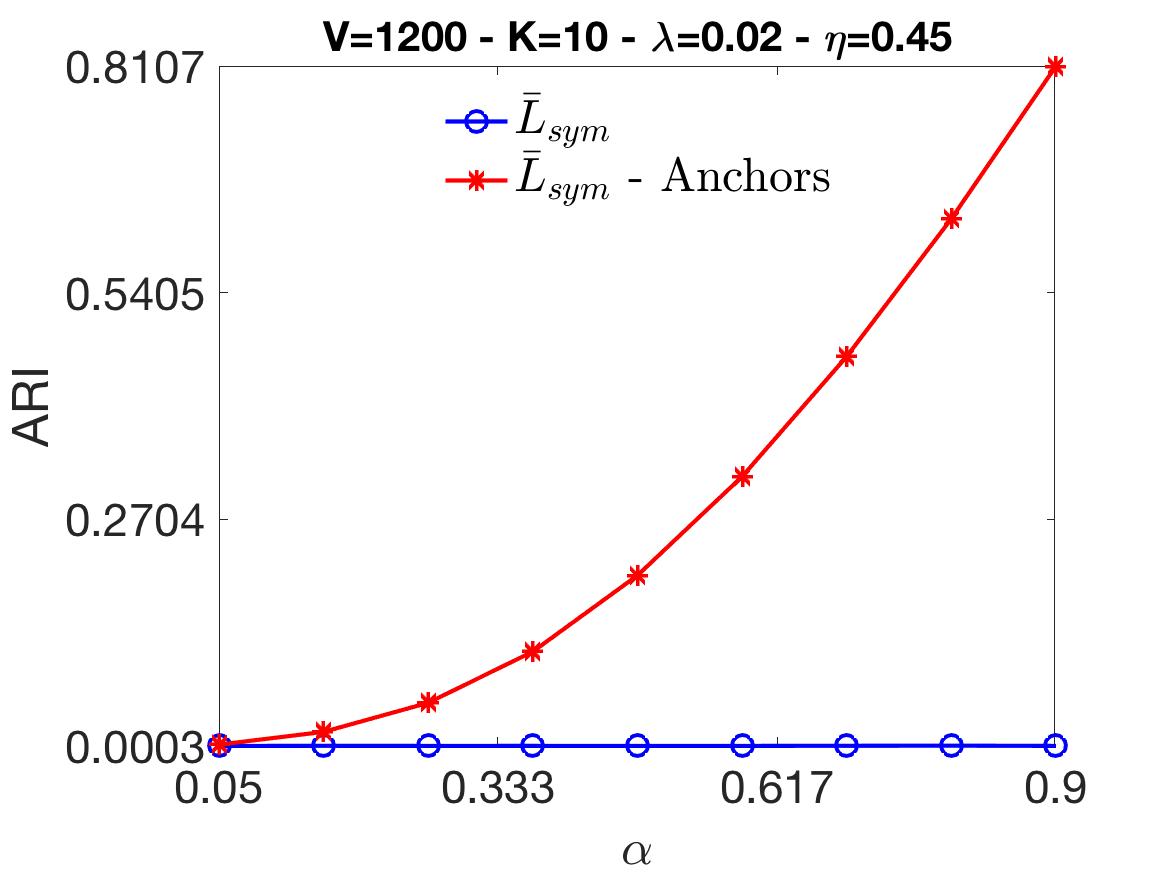} }
\subcaptionbox[]{  
}[ 0.32\textwidth ]
{\includegraphics[width=0.35\textwidth] {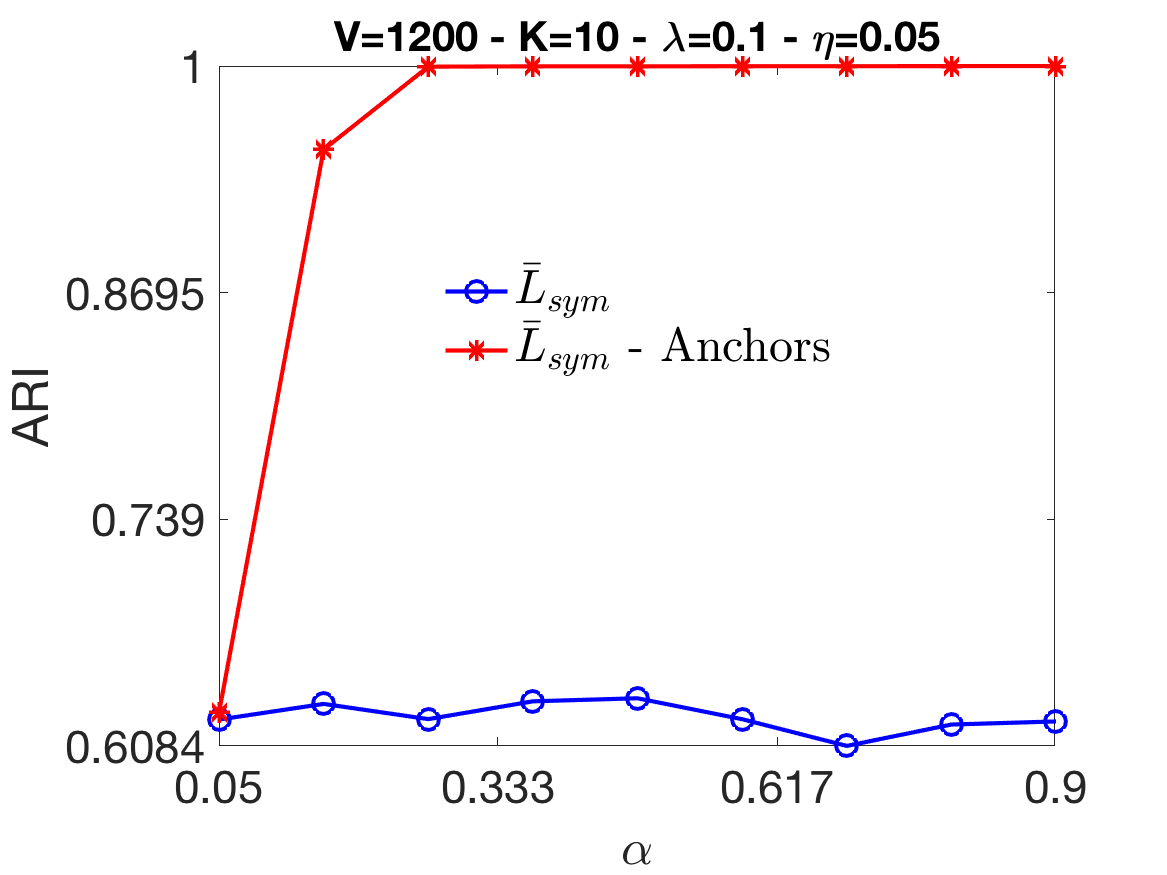} }
%
\subcaptionbox[]{  
}[ 0.32\textwidth ]
{\includegraphics[width=0.35\textwidth] {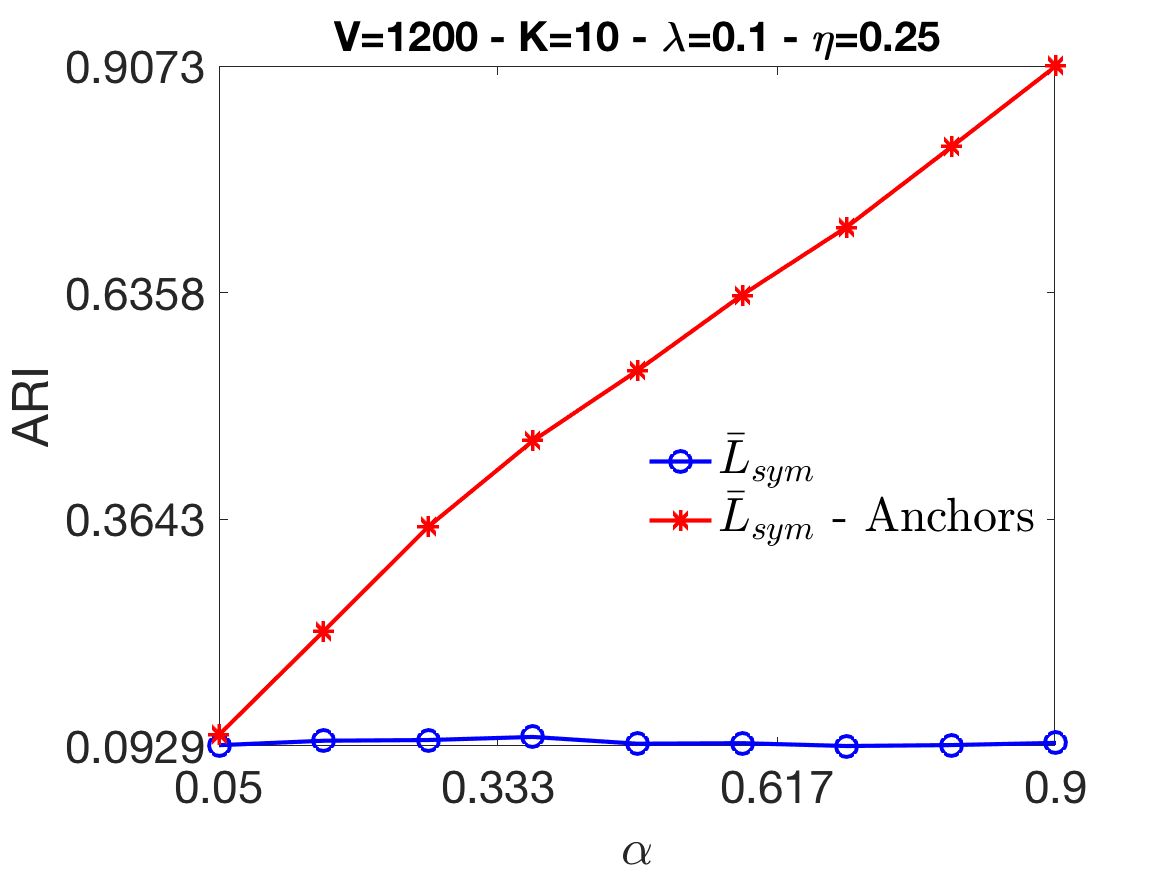} }
%
\subcaptionbox[]{  
}[ 0.32\textwidth ]
{\includegraphics[width=0.35\textwidth] {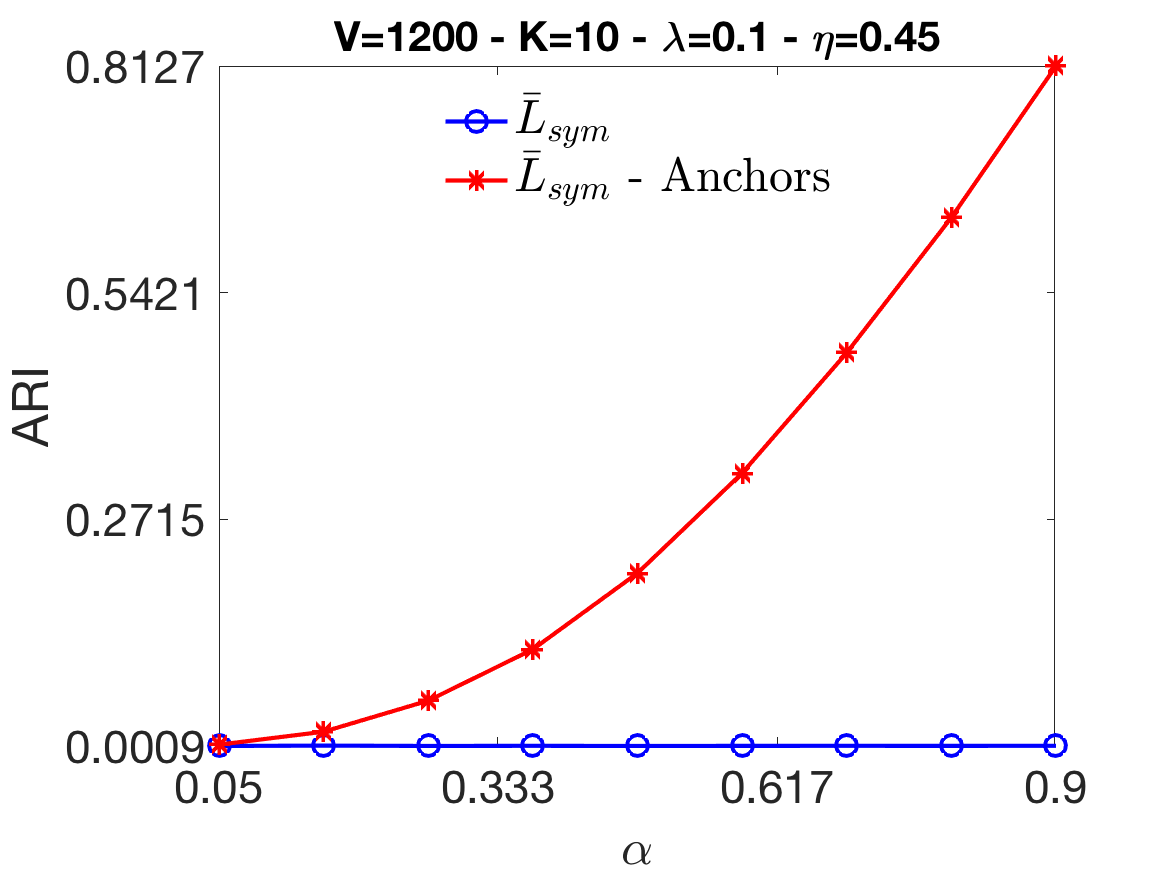} }
%
\captionsetup{width=0.95\linewidth}
\caption[Short Caption]{ The plots show the behaviour of the MBO scheme in the presence of hard constraints in the form of anchors as specified in Section \ref{sc:Anchors}. The anchors are extracted at random from the ground truth for each experiment. For the anchor case, the relevant entries of the initial characteristic matrix are modified to be consistent with the constraints. As expected, for any value of $\alpha$, adding fixed points in the dynamics improves the recovery of the regular MBO for most sparsity and noise levels.
}
\label{fig:anchors}
\end{figure}

%

%
%

\section{Numerical experiments for the Barab\'asi--Albert model}   \label{sec:expBAmodel}
In this section, we apply the MBO scheme to signed networks for which the underlying  measurement graph (indicating the presence or absence of an edge) is generated from the \textit{Barab\'asi--Albert} (BA) model \cite{Barabasi99emergenceScaling}. In a nutshell, the algorithm to generate the BA network begins with a connected network with $V_0$ nodes. At each time step, a new node is added to the graph and is connected to $\nu\leqslant V_0$ pre-existing nodes. The probability that a link from the new node connects to a generic node $i$  is given by 
\begin{equation}
p_i=\frac{D_{ii}}{\sum_j D_{jj}}
\end{equation}
where the sum indexed by $j$ runs over all pre-existing nodes. We call $H\in\left\{ 0,1\right\} ^{V\times V}$ the resulting BA network which can be used to build the relevant signed affinity matrix in the following way. As before, consider a row index $i=1,\dotsc,V$ and a column index $j=i,\dotsc,V$ (the $j$ index starts from $i$ so to consider only the upper triangular part in the definition of the elements of $A$), we define the signed BA network as
\begin{equation}\label{eq:baAffMat}
A_{ij}= \left\{
     \begin{array}{rll}
   S_{ij}H_{ij}  & \;\; \text{ if } i< j,  	& \text{with probability } \left(1-\eta \right) 	\\
 - S_{ij}H_{ij}  & \;\; \text{ if } i< j, 	& \text{with probability }  \eta   	\\
		     		     0   & \;\; \text{ if } i= j, 	& \text{with probability } 1	\\
     \end{array}
   \right. \quad\text{ with } A_{ji}=A_{ij},
\end{equation}
where the sparsity of the BA network is included via $H_{ij}$ (the noise probability $\eta$ is understood as in \eqref{eq:genaffmatrix}) and $S_{ij}$ are the elements of $S$ in equation \eqref{eq:grmatrix}. We compared the numerical performance of MBO with that of BNC and Kmeans$++$, in the same spirit as in Section \ref{sc:bestpract}. The results are given in Figure \ref{fig:barabalbert}, and are consistent with those obtained for the SSBM model. 
\begin{figure}[!ht]
\centering
\subcaptionbox[]{  
}[ 0.32\textwidth ]
{\includegraphics[width=0.35\textwidth] {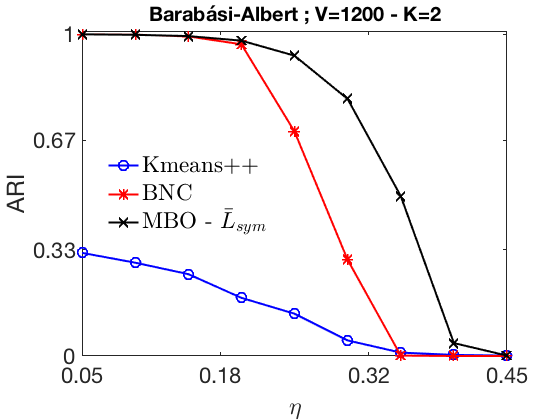} }
%
\subcaptionbox[]{  
}[ 0.32\textwidth ]
{\includegraphics[width=0.35\textwidth] {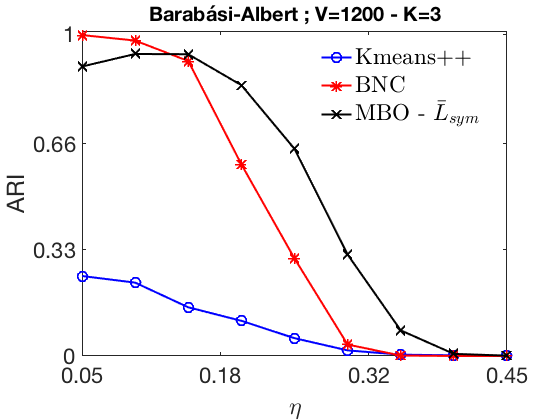} }
%
\subcaptionbox[]{  
}[ 0.32\textwidth ]
{\includegraphics[width=0.35\textwidth] {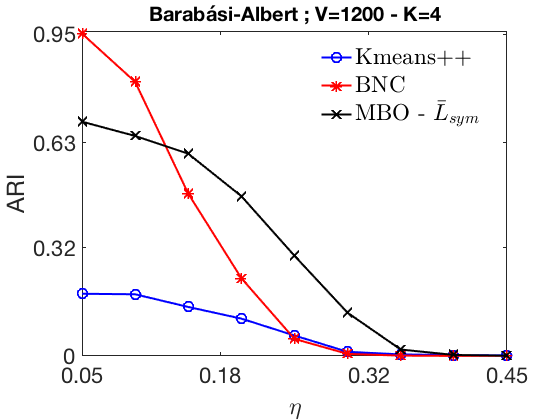} }
\subcaptionbox[]{  
}[ 0.32\textwidth ]
{\includegraphics[width=0.35\textwidth] {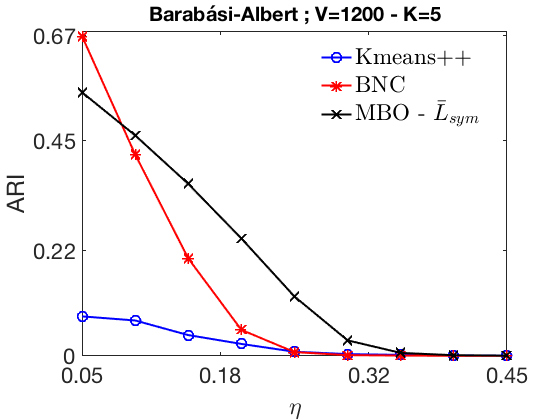} }
%
\subcaptionbox[]{  
}[ 0.32\textwidth ]
{\includegraphics[width=0.35\textwidth] {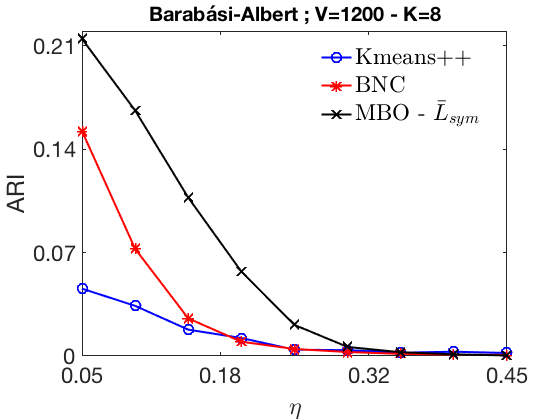} }
%
\subcaptionbox[]{  
}[ 0.32\textwidth ]
{\includegraphics[width=0.35\textwidth] {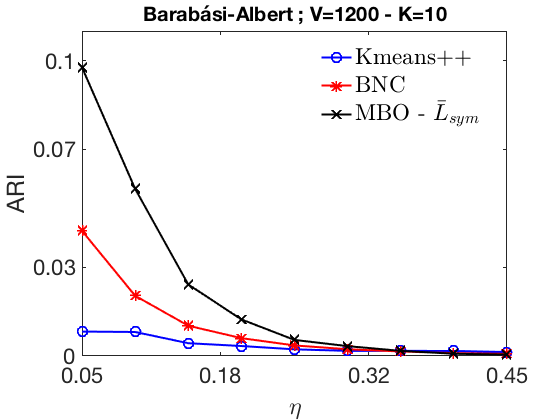} }
%
\captionsetup{width=0.95\linewidth}
\caption[Short Caption]{ These plots compare the ground truth retrieval capabilities of different algorithms using as affinity matrix a signed version of the Barab\'asi--Albert model, as explained in \eqref{eq:baAffMat}. As we can see, MBO dominates in the high noise regime for every sparsity level and number of clusters $K\in\{2,3,4\}$ we consider.  In addition, we observe that for larger number of clusters, MBO performs better than the BNC.
}
\label{fig:barabalbert}
\end{figure}

%
%
\section{Image segmentation}  \label{sec:imageSeg}

This section considers a proof-of-concept application of the Ginzburg--Landau functional under different types of constraints, as introduced in Section \ref{sc:GLFwithcons}, in the context of image segmentation. As an example, we consider a picture of Cape Circeo in Sabaudia (LT, Italy) \cite{circeolatoday}, Figure \ref{fig:sab_constclust} (a), in which it is possible to distinguish four main clusters which are the sand, sea, mountain, and sky. 

The network associated to the picture has been obtained in the following way. First, each pixel has a one-to-one correspondence with a node in the resulting graph. Second, each pixel is linked in the network only to a subset of its nearest-neighbouring pixels (in particular, we have considered each pixel to be connected to all the others which are present within a radius of $5$, as specified in the MATLAB Package \textit{The graph analysis toolbox: Image processing on arbitrary graphs}, see \cite{grady2003graph, gradyphdthesis}). Third, the weights of the affinity matrix corresponding to the picture can be obtained by considering the weight of the link between two pixels as a function of their color difference. In fact, considering the colors red, green, and blue (RGB) as the basis of a 3-dimensional vector space, each color can be described in terms of its RGB components in the same way as a vector can be characterized as the vector sum of the canonical basis elements weighted by some coefficients. In this way, we can naturally define the distance of two colors as a function of the Euclidean distance of their vectors in the RGB space. 
We used the distance function  
\begin{equation}\label{eq:distcolor}
d\left(i,j\right)\coloneqq
\begin{dcases}
e^{-b\left\Vert col\left(i\right)-col\left(j\right)\right\Vert _{2}}\;\in\left(0,1\right] & i\neq j\\
0 & i=j
\end{dcases},
\end{equation}
where $b$ is an external parameter (which we fix to $b=14$) and $col\left(i\right)$ is the corresponding color vector of pixel $i$ (note that the norm in \eqref{eq:distcolor} is not squared). The set of all the differences $\left\Vert col\left(i\right)-col\left(j\right)\right\Vert$ in a picture is normalized according to their \textit{z-score} using the MATLAB command \textit{normalize} \cite{normMatlab}. The MBO schemes used for this image segmentation problem run using $K=4$ clusters as an input parameter.
For constrained clustering, we have considered both must-links and cannot-links, as highlighted in Figure \ref{fig:sab_constclust}. 
One iteration (multiple iterations do not change the result) of the MBO scheme yields the clustering highlighted in Figure \ref{fig:sab_constclust_res} (a). Afterwards, the resulting $U$ has been used as the initial condition for two different constrained clustering optimizations, whose results are shown in Figure \ref{fig:sab_constclust_res}  (b) and (c) for the must-links and cannot-links, respectively. It can be inferred that the cluster recovery works well, except for a small area in the middle of the picture; however, we remark that even in the original picture (that is Figure \ref{fig:sab_constclust} (a) without blue and red square) it is difficult to the naked eye to understand whether that patch (that is the bit of sea to the bottom right of the mountain) corresponds to sea or mountain, just by looking at its color. 

\begin{figure}[!ht]
\centering
\subcaptionbox[]{  Known information
}[ 0.32\textwidth ]
{\includegraphics[width=0.35\textwidth] {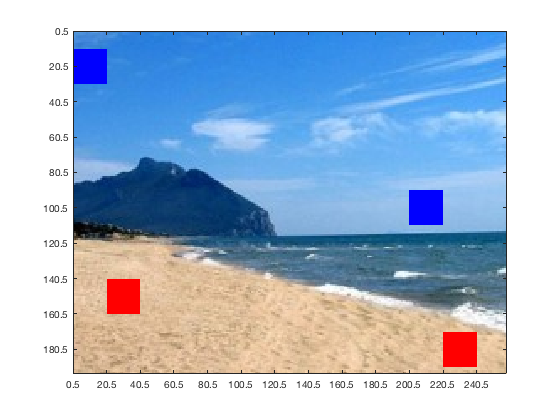} }
%
\subcaptionbox[]{  Must-links 
}[ 0.32\textwidth ]
{\includegraphics[width=0.35\textwidth] {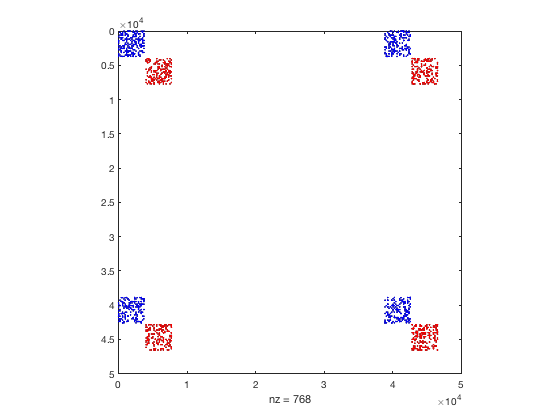} }
%
\subcaptionbox[]{  Cannot-links 
}[ 0.32\textwidth ]
{\includegraphics[width=0.35\textwidth] {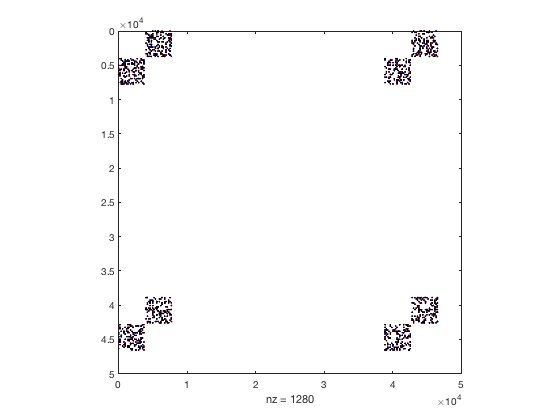} }
\captionsetup{width=0.95\linewidth}
\caption[Short Caption]{ From a picture of Cape Circeo in Sabaudia (LT, Italy) (a) With the same color we highlight the patches in the image which we would like to belong to the same cluster as a result of the MBO scheme. In terms of must-links, this means that all the nodes corresponding to pixels in the blue squares should be pairwise positively connected, and so should those corresponding to the red squares. For example, the top left blue patch in (a) corresponds to the top left blue patch in (b); the other blue patch in (a) corresponds to the bottom right patch in (b); the remaining blue patches in (b) pertain to the cross interactions between the pixels in the blue patches in (a).  Similar correspondences hold for the red patches. 
This can be seen in the affinity matrix in picture (b) where we have used the same color scheme as in (a). Instead, each blue/red node pair should be negatively connected since they belong to different clusters, as shown in figure (c). For the present case, we have weighted both must- and cannot-links with a prefactor of $2$. The reason why squares in (b) are not fully filled with color as in (a) is because, to reduce computational time when calculating eigenvectors, we have removed half of the must-link edges at random and half of the cannot-link edges at random (white color corresponds to missing edges).
}
\label{fig:sab_constclust}
\end{figure}

\begin{figure}[!ht]
\centering
\subcaptionbox[]{  No constraints
}[ 0.32\textwidth ]
{\includegraphics[width=0.35\textwidth] {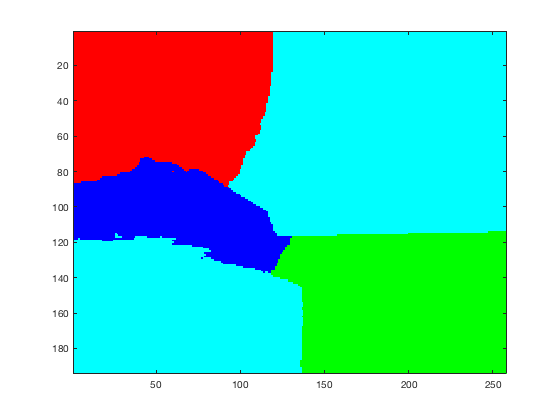} }
%
\subcaptionbox[]{  Must-links
}[ 0.32\textwidth ]
{\includegraphics[width=0.35\textwidth] {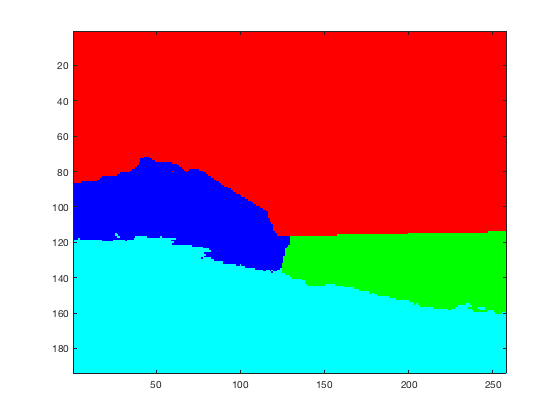} }
%
\subcaptionbox[]{  Cannot-links 
}[ 0.32\textwidth ]
{\includegraphics[width=0.35\textwidth] {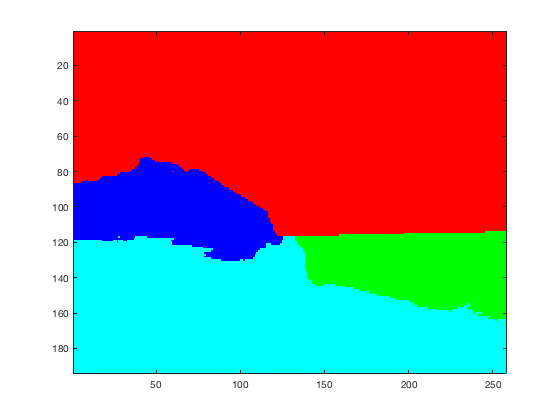} }
%
\captionsetup{width=0.95\linewidth}
\caption[Short Caption]{Segmentation results of the MBO scheme for the setting of (a) no constraints, (b) must-links, and (c) cannot-links, as chosen  in Figure \ref{fig:sab_constclust}.
} 
\label{fig:sab_constclust_res}
\end{figure}

In Figure \ref{fig:sab_fidelavoid}, we have taken into account fidelity and avoidance terms as highlighted in plot (a). The same information has been used for both the fidelity and the avoidance parts. For the former, the rows of the matrix $\hat{U}$ associated to the nodes in the colored patches have entries equal to $1$ at the corresponding cluster column (we have picked column $1$ for red, $2$ for blue, $3$ for cyan, and $4$ for green). All the remaining entries of $\hat{U}$ are equal to $0$. For the latter, we have considered an energy increase every time nodes are attributed to clusters to which they should not belong. This results in a constraint of the form $\tilde{U}=\hat{U}\left(1_{K}-I_{K}\right)$.
Starting from a random configuration of $U$, a single application of the MBO scheme using $\hat{U}$ yields the segmentation shown in Figure \ref{fig:sab_fidelavoid} (b) and a separate run of the algorithm using $\tilde{U}$ yields to (c). We note that the node-cluster association recovery results  agree  with the expected boundaries present within the image, with a minor uncertainty in the central area (which is a feature  present also in the constrained clustering case).

\begin{figure}[!ht]
\centering
\subcaptionbox[]{  Known information
}[ 0.32\textwidth ]
{\includegraphics[width=0.35\textwidth] {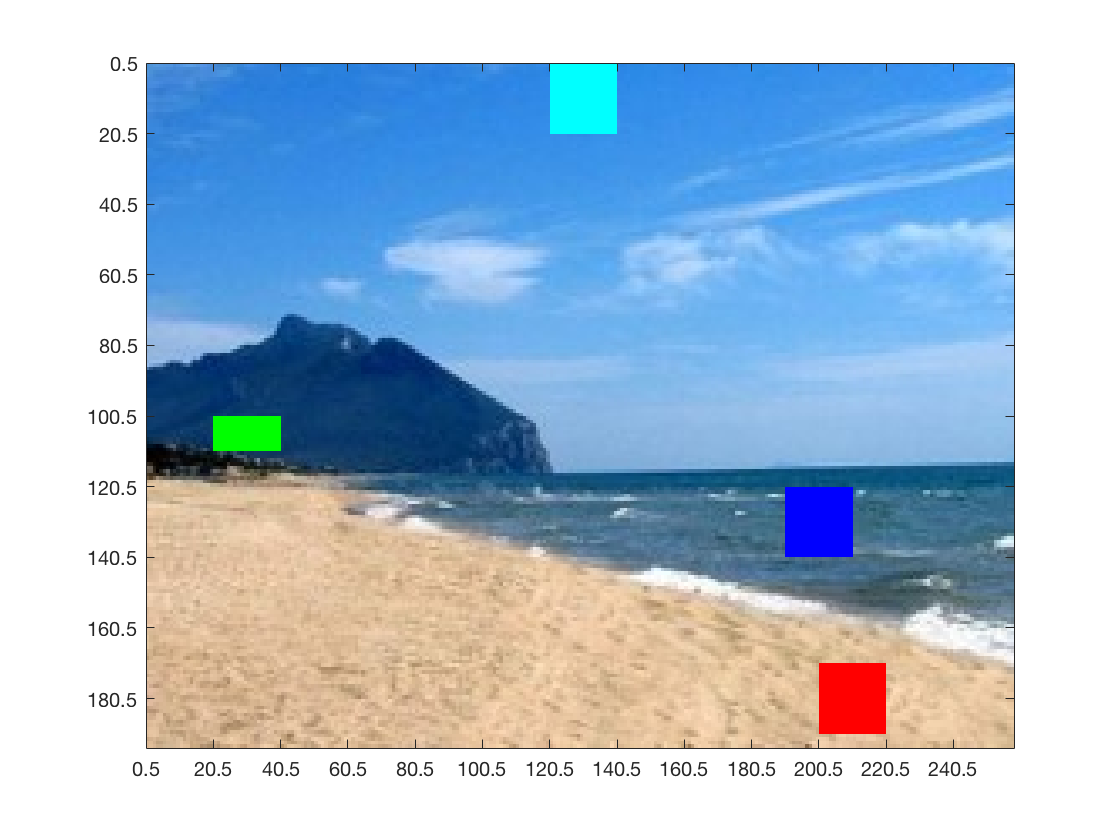} }
%
\subcaptionbox[]{  Fidelity 
}[ 0.32\textwidth ]
{\includegraphics[width=0.35\textwidth] {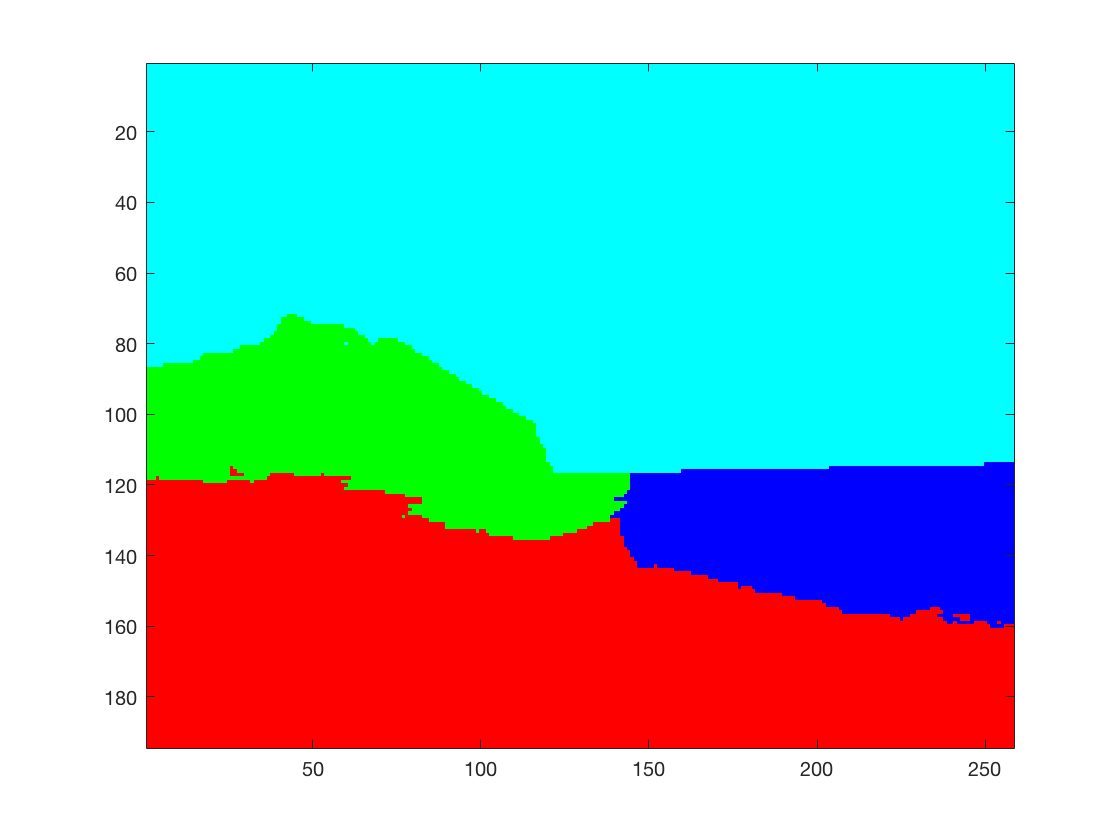} }
%
\subcaptionbox[]{  Avoidance 
}[ 0.32\textwidth ]
{\includegraphics[width=0.35\textwidth] {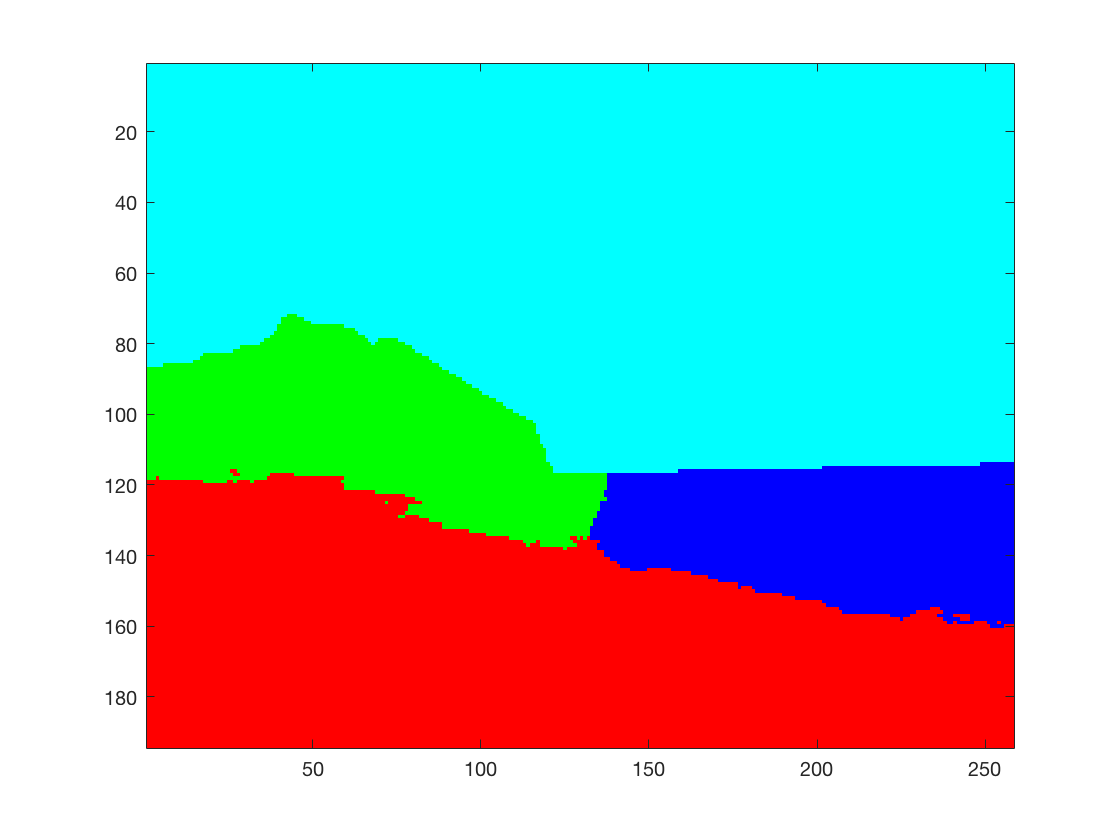} }
\captionsetup{width=0.95\linewidth}
\caption[Short Caption]{In plot (a), we color those pixels whose cluster membership is considered to be known a priori, with each color corresponding to a different cluster. Contrary to the constrained clustering case, the cluster assignment of these pixels is chosen a priori since it results from inserting the known information directly at the level of the characteristic matrix.
}
\label{fig:sab_fidelavoid}
\end{figure}
We have also considered the clusters highlighted in Figure \ref{fig:sab_fidelavoid} (a) as anchors, and the results are identical to those shown in Figure \ref{fig:sab_fidelavoid}  (b).

%
%
\section{Time series correlation clustering}  \label{sec:realData}

This section details the outcomes of our numerical experiments on two time series data sets. Since ground truth data is not available, we compare the output of the algorithms by plotting the network adjacency matrix sorted by cluster membership.

\subsection{S\&P 1500 Index}
We use time series daily prices for $n=1500$ financial instruments, constituents of the S\&P 1500 Composite Index, during 2003-2015, containing approximately $ n_d = 3000$ trading days \cite{sp500yahoo}. We  consider daily log returns of the prices, 
     \begin{equation}
        R_{i,t} = \log \frac{ P_{i,t} }{ P_{i,t-1}},
     \end{equation} 
where $P_{i,t}$ denotes the market close price of instrument $i$ on day $t$. 
In order to promote both positive and negative weights in the final correlation matrix we aim to cluster, we also compute the daily market excess return for each instrument,
 \begin{equation} 
    \tilde{R}_{i,t} = R_{i,t} -  R_{ \text{SPY},t},  \forall i=1, \ldots, n, \; t=1,\ldots,n_d,
\end{equation} 
where $R_{\text{SPY},t}$ denotes the daily log return of  \textsc{SPY}, the S\&P 500 index ETF which is used as a proxy (weighted average) for the market. 
Next, we calculate the Pearson correlation between historical returns for each pair of instruments $(i,j)$, which ultimately  provides the edge weight $A_{ij}$ in our signed graph.
Figure \ref{fig:SP1500} shows a segmentation of the US market into  $K \in \{ 10, 30 \}$ clusters. The cluster structure is more obvious to the naked eye for the BNC and MBO methods, as opposed to the plain Kmeans$++$ clustering. The large cluster recovered by MBO could potentially correspond to the 1000 instruments which are within the S\&P 1500 Index and  outside of the S\&P 500, but we defer this investigation to future work.

\begin{figure}[!ht]
\centering
\subcaptionbox[]{  Kmeans$++$ ; $K=10$
}[ 0.32\textwidth ]
{\includegraphics[width=0.35\textwidth] {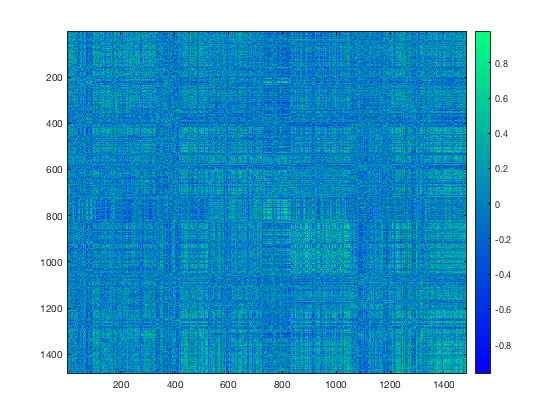} }
%
\subcaptionbox[]{  BNC ; $K=10$
}[ 0.32\textwidth ]
{\includegraphics[width=0.35\textwidth] {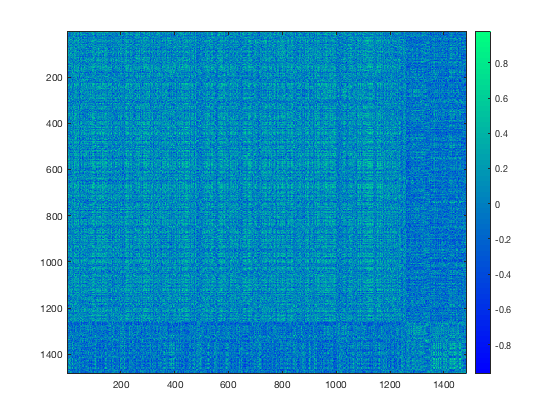} }
%
\subcaptionbox[]{  MBO ; $K=10$
}[ 0.32\textwidth ]
{\includegraphics[width=0.35\textwidth] {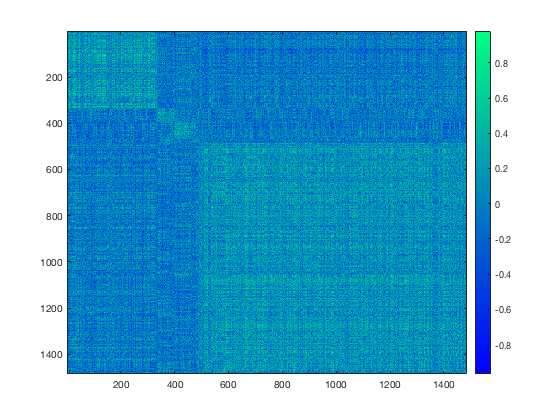} }
\subcaptionbox[]{  Kmeans$++$ ; $K=30$
}[ 0.32\textwidth ]
{\includegraphics[width=0.35\textwidth] {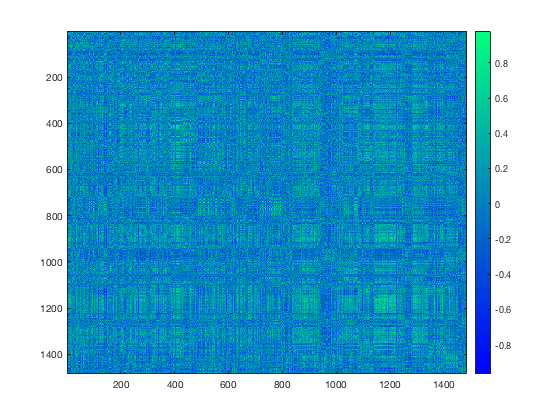} }
%
\subcaptionbox[]{  BNC ; $K=30$
}[ 0.32\textwidth ]
{\includegraphics[width=0.35\textwidth] {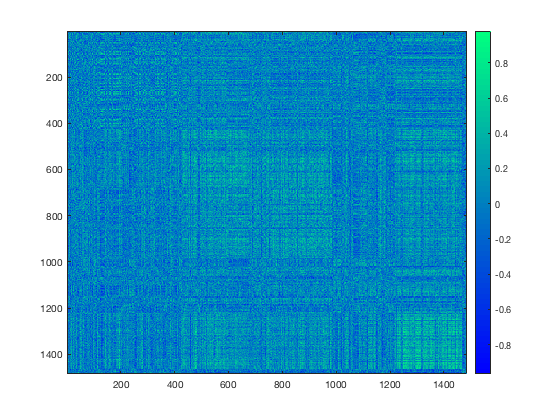} }
%
\subcaptionbox[]{  MBO ; $K=30$
}[ 0.32\textwidth ]
{\includegraphics[width=0.35\textwidth] {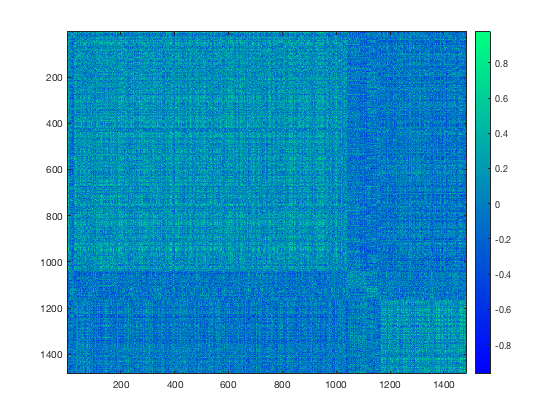} }
%
\captionsetup{width=0.95\linewidth}
\caption[Short Caption]{ The adjacency (correlation) matrix of the S\&P 1500 time series data set, sorted by cluster membership, for $K \in \{10,30\}$ clusters.
}
\label{fig:SP1500}
\end{figure}

\subsection{Foreign exchange}
This experiment pertains to clustering a foreign exchange 
FX currency correlation matrix, based on daily special drawing rights (SDR)  exchange value rates \cite{FX_IMF_SDR}. As shown in Figure \ref{fig:forex}, only BNC and MBO are able to recover four clusters, associated respectively to the EURO \euro, US Dollar \$, UK Pound Sterling \textsterling, and Japanese Yen  \yen. These are precisely the four currencies that, in certain percentage weights, define the value of the SDR reserve. To the naked eye, the adjacency matrix (sorted by cluster membership) recovered by MBO exhibits sharper transitions between clusters, when compared to BNC and especially Kmeans$++$. 

\begin{figure}[!ht]
\centering
\subcaptionbox[]{  Kmeans$++$
}[ 0.32\textwidth ]
{\includegraphics[width=0.35\textwidth] {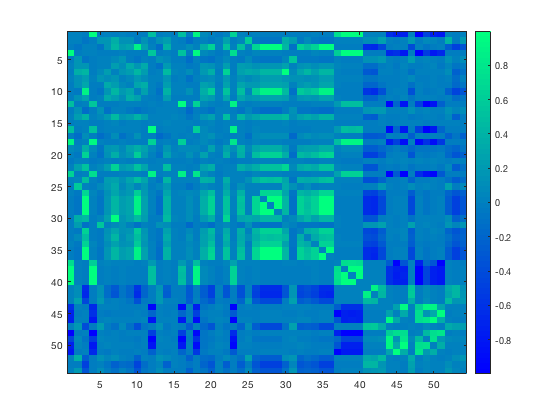} }
%
\subcaptionbox[]{  BNC 
}[ 0.32\textwidth ]
{\includegraphics[width=0.35\textwidth] {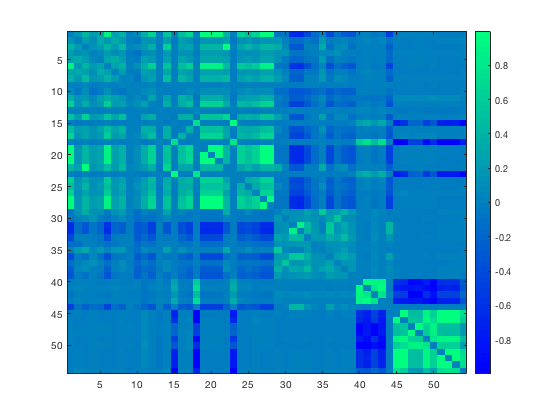} }
%
\subcaptionbox[]{  MBO 
}[ 0.32\textwidth ]
{\includegraphics[width=0.35\textwidth] {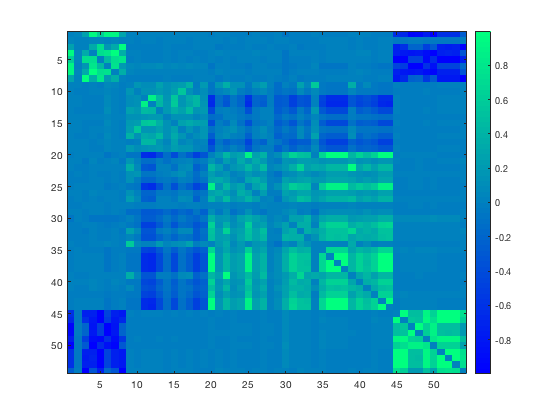} }
\captionsetup{width=0.95\linewidth}
\caption[Short Caption]{The adjacency (correlation) matrix of the Foreign Exchange time series data set, sorted by cluster membership, for $K = 4$ clusters. We remark that MBO exhibits sharper transitions between the four clusters (as highlighted by the green submatrices on the diagonal blocks), when compared to BNC and especially  Kmeans$++$.
}
\label{fig:forex}
\end{figure}

%
%
\section{US migration}  \label{sec:migration}
The final real world data set we consider is a migration network based on the 2000 US Census report, 
which captures migration patterns  between all pairs of counties in the US during the 1995-2000 time frame \cite{census,census_rep}. The network has  $V=3107$  nodes, where each node corresponds to a county in mainland US. The adjacency matrix of the given network is asymmetric, with $M_{ij} \neq M_{ji}, \; 1 \leq i,j \leq V$. 
Similar to the approach in \cite{belgium2010}, we first symmetrize the input matrix, and let $\tilde{M}_{ij} = M_{ij} + M_{ji}$ denote  the total number of people that migrated between county $i$ and county $j$ (thus $\tilde{M}_{ij} = \tilde{M}_{ji}$) during the five-year period.  Finally, we turn the network into a signed one by approximately centering the distribution of edge weights at 0. More specifically, from each nonzero entry we remove the median value $\tilde{m}$ of the distribution of nonzero edge weights in $\tilde{M}$, and denote the resulting symmetric  matrix by $A$, with $A_{ij} = \tilde{M}_{ij} -\tilde{m} $.  
In Figure \ref{fig:migration}, we compare the outputs of plain Kmeans++, BNC, and our MBO approach, for a varying number of clusters $K\in\{5,10,15,20,30,40,50\}$. It can be seen that the plain Kmeans++ approach is not able to return any meaningful results across all values of $k$, and a similar statement holds true for BNC for low values of $K\in\{5,10,15,20\}$: in both cases, a single cluster captures almost all the nodes in the network. BNC returns meaningful results ---in the sense that there are no clusters which contain only a few nodes--- for larger numbers of clusters  $K\in\{30,40,50\}$, aligning well with state boundaries. On the other hand, our proposed MBO algorithm returns meaningful results across all values of $K$, and the clusters align well with the true 
boundaries between the states, as previously observed in \cite{belgium2010}.

\newcommand{\wid}{1.7in}
\newcolumntype{C}{>{\centering\arraybackslash}m{\wid}}
\begin{table*}[!htp]\sffamily
\hspace{-1mm}
\begin{tabular}{l*3{C}@{}}
   & Kmeans++ & BNC & MBO  \\  
$k=5$
& \includegraphics[width=\wid]{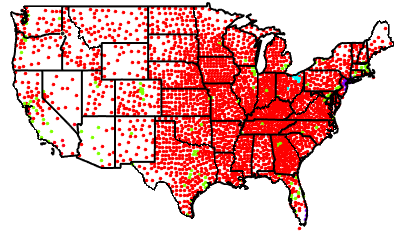} 
& \includegraphics[width=\wid]{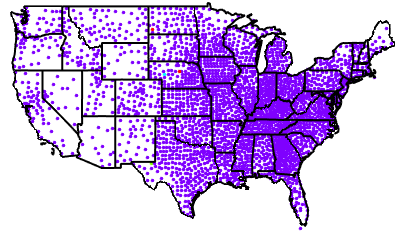}
& \includegraphics[width=\wid]{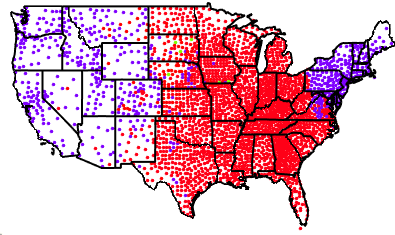} \\ 
$k=10$
& \includegraphics[width=\wid]{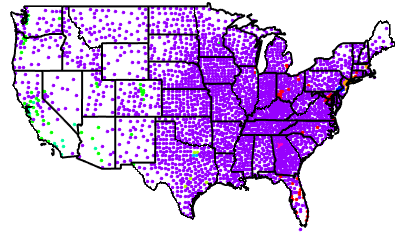} 
& \includegraphics[width=\wid]{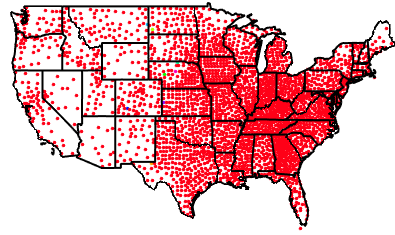}
& \includegraphics[width=\wid]{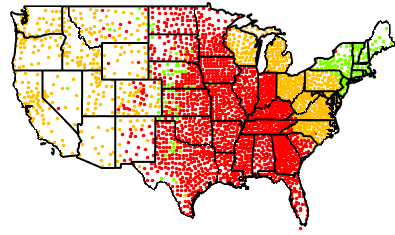}  \\
$k=15$
& \includegraphics[width=\wid]{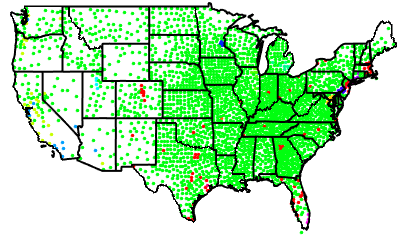} 
& \includegraphics[width=\wid]{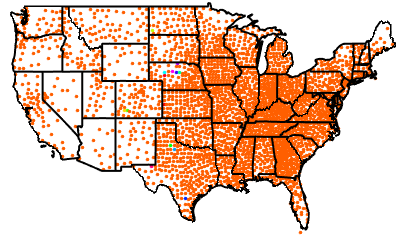}
& \includegraphics[width=\wid]{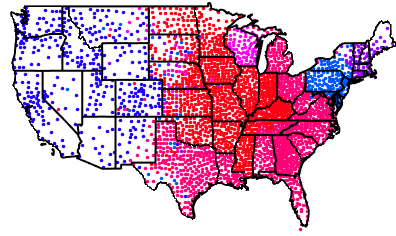}  \\
$k=20$
& \includegraphics[width=\wid]{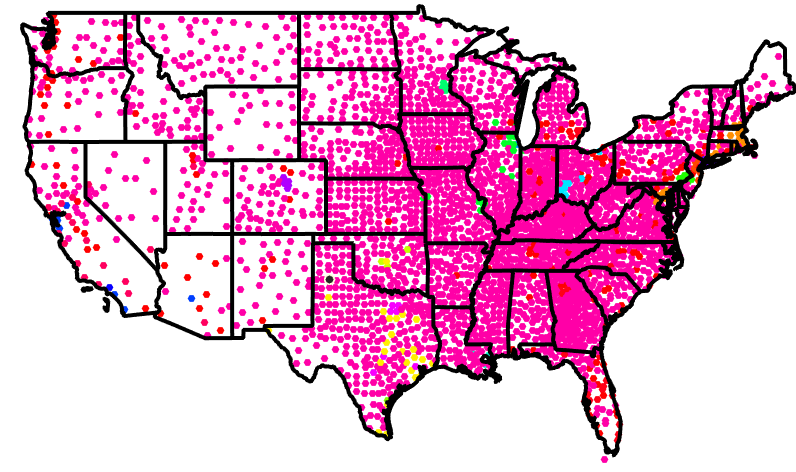} 
& \includegraphics[width=\wid]{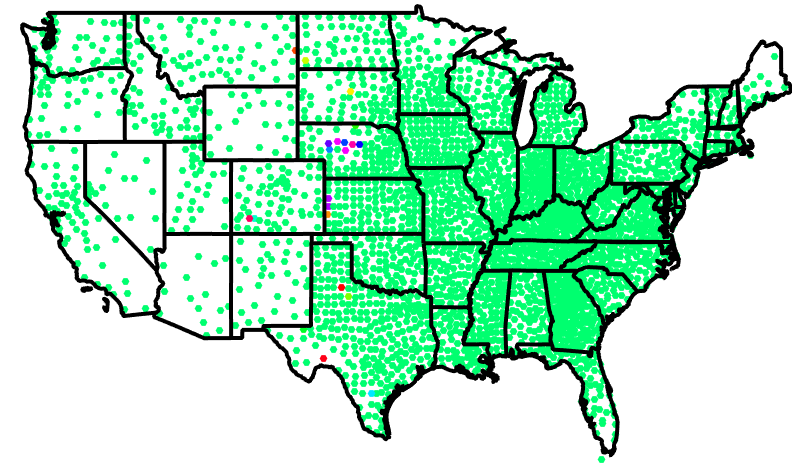}
& \includegraphics[width=\wid]{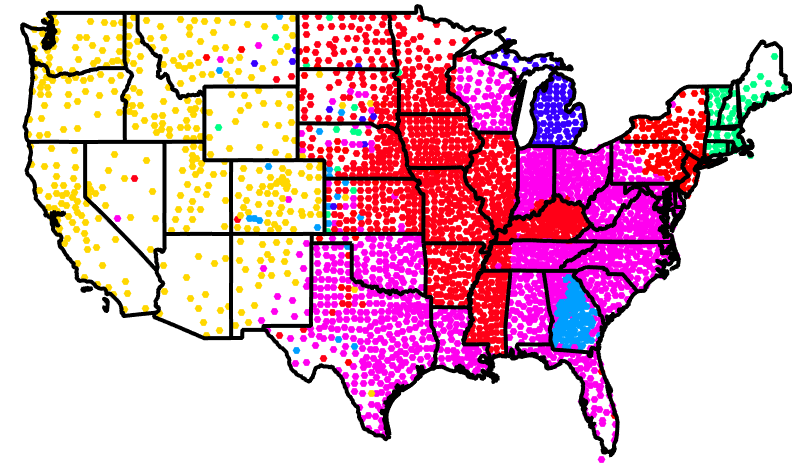}  \\
$k=30$
& \includegraphics[width=\wid]{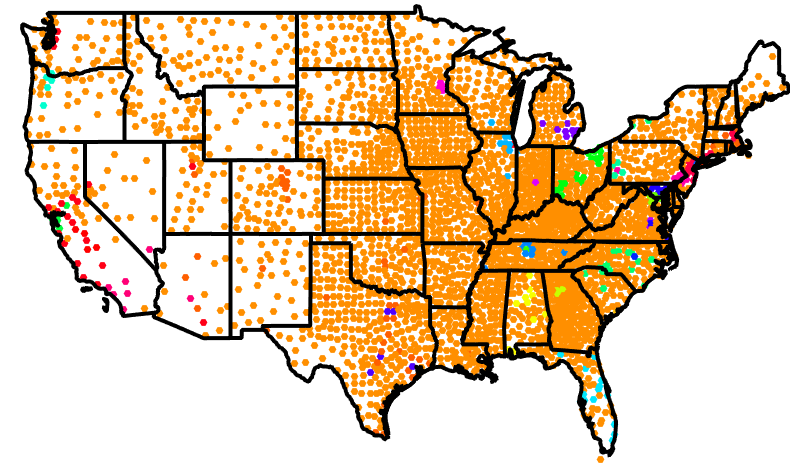} 
& \includegraphics[width=\wid]{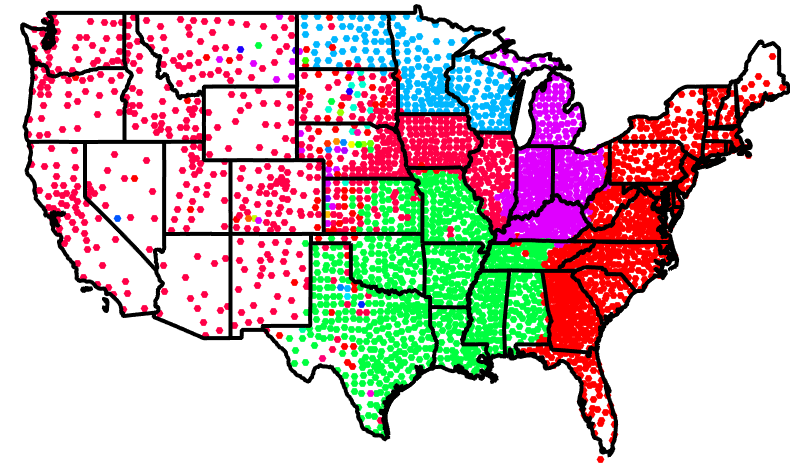}
& \includegraphics[width=\wid]{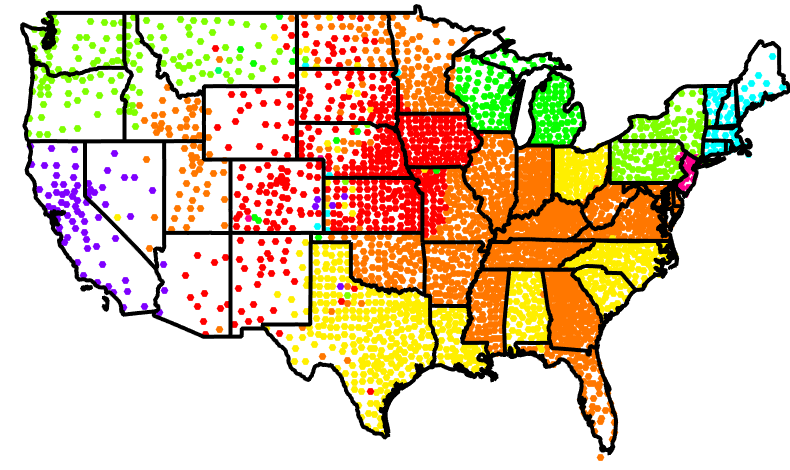}  \\
$k=40$
& \includegraphics[width=\wid]{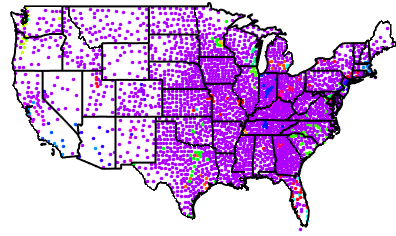} 
& \includegraphics[width=\wid]{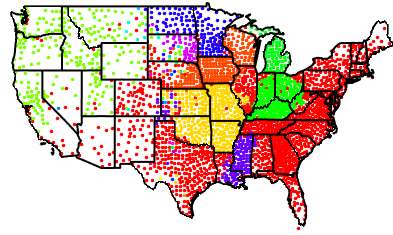}
& \includegraphics[width=\wid]{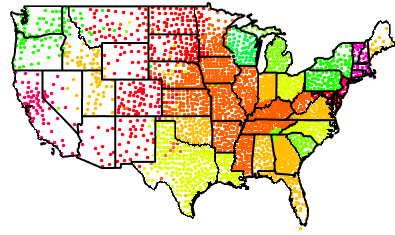}  \\
$k=50$
& \includegraphics[width=\wid]{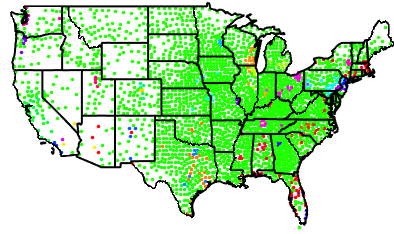} 
& \includegraphics[width=\wid]{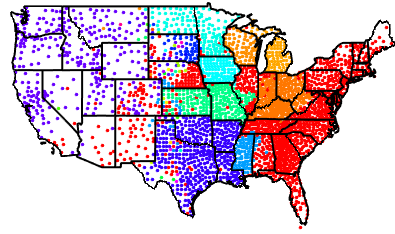}
& \includegraphics[width=\wid]{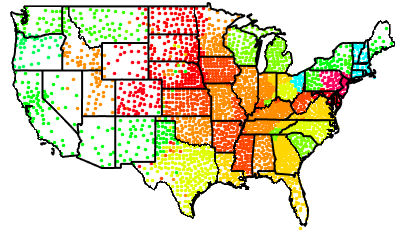}  
\end{tabular}
\captionsetup{width=0.99\columnwidth}
\vspace{-1mm}
\captionof{figure}{  Clusterings computed by Kmeans++, BNC, and MBO for the signed US Migration network, for varying number of clusters $k$. Each color, extracted from a continuous color map, corresponds to a different cluster.
}
\label{fig:migration}	
\end{table*}

\section{Summary and future directions}   \label{sec:conclusion}

We introduced graph-based diffuse interface models utilizing the Ginzburg--Landau functionals, using an adaptation of the classic numerical Merriman--Bence--Osher (MBO) scheme for minimizing such graph-based functionals. Our approach has the advantage that it can easily incorporate various types of labelled data available to the user, in the context of semi-supervised clustering.

We have provided extensive numerical experiments that demonstrate the robustness to noise and sampling sparsity of the MBO scheme, under a signed stochastic block model. In particular, for sparse graphs and larger number of clusters, we often obtain more accurate results compared to state-of-the-art methods such as the BNC and the signed Laplacian.

Our work opens up a number of interesting directions  for future investigation. It would be insightful to investigate the performance of the MBO scheme in the setting of very sparse graphs in the stochastic block model with $p = \Theta(\frac{1}{n})$, thus below the connectivity threshold, and potentially incorporate   regularization terms considered in the recent literature \cite{joseph2016impact, le2015sparse_Vershynin}.

As a by-product of our research, we expect that the low-dimensional embeddings of signed  networks  (e.g., correlation matrices or related similarity/dissimilarity measures) arising from high-dimensional time series data obtained via the MBO scheme, could be of independent interest  in the context of  robust dimensionality reduction in multivariate time series analysis. Such embeddings could be used to learn nonlinear mappings from multivariate time series data, which has broad applications ranging from financial data analysis to biomedical research.  Such mappings could prove useful for exploiting weak temporal correlations inherent in sequential data, and can be subsequently augmented with a regression or classification step, with the end goal of improving out-of-sample prediction. In a financial setting, this corresponds to building nonlinear risk/factor models for financial instruments, a task of paramount importance when it comes to understanding the risks that govern a financial system  \cite{lalouxBouchaud2000random,ArbitrageMultiFactor}.

An interesting potential future direction is the development of an MBO scheme for detecting structures in directed graphs. The vast majority of methods for structural analysis of graphs, for tasks such as clustering and community detection, have been developed for undirected graphs, and a number of challenges arise when considering directed graphs, complicating the immediate extension of existing algorithms for the undirected case. Various directed extensions typically consider a symmetrized version of the adjacency matrix \cite{pentney2005spectralMeila, gleich2006hierarchical,meila2007clusteringDirected,sussman2012consistent}. In ongoing work by a subset of the authors \cite{DirSpectralClustHerm}, we consider the problem of clustering directed graphs, where the clustering structure arises due to the existence of certain pairs of imbalanced cuts in the graph. More specifically, an imbalanced cut between two sets of vertices X and Y is one where the majority of the edges are directed from X to Y, with few edges being directed in the opposite direction from Y to X. The goal is to find $k$ clusters such that the sum of all (or many of) the ${k \choose 2}$ pairwise imbalanced cuts is maximized, while also penalizing the existence of small clusters in order to obtain a more balanced partition. In \cite{DirSpectralClustHerm}, we consider a Hermitian matrix derived directly from the skew-symmetric adjacency matrix, and recover an embedding from the spectrum of its corresponding (Hermitian) graph connection Laplacian, potentially after normalization \cite{asap2d, Singer_Hautieng_VDM}. Extending the Ginzburg--Landau functional and the MBO scheme to the setting of directed graphs and Hermitian Laplacians is an appealing future research direction.

%
%
%
%
\section*{Acknowledgements}
MC and AP acknowledge support from The Alan Turing Institute EPSRC grant EP/N510129/1 and seed funding  project SF029 \textit{``Predictive graph analytics and propagation of information in networks"}. AP also acknowledges support from the National Group of Mathematical Physics (GNFM-INdAM), by Imperial College together with the Data Science Institute and Thomson-Reuters Grant No. 4500902397-3408 and EPSRC grant EP/P002625/1. YvG did most of the work which has contributed to this paper when he was a lecturer at the University of Nottingham. YvG acknowledges that this project has received funding from the European Union's Horizon 2020 research and innovation programme under the Marie Sk\l{}odowska-Curie grant agreement No 777826. Francesco Cosentino is acknowledged for his contribution in the derivation of the proof of Lemma \ref{lm:sec}.

%
%
%
%

\appendix

\section*{Appendix}
\addcontentsline{toc}{section}{Appendix}

\setcounter{section}{1}

\subsection{Projection and Thresholding on $\Sigma_K$ and $\Sigma_K^{\pm}$}\label{app:projsimplex}
Let $K\in \mathbb{N}_+$. In this section, we show that computation of the projection on $\Sigma_{K}$ (or $\Sigma_{K}^{\pm}$) followed by the thresholding step in the MBO scheme in \eqref{eq:projOLD}--\eqref{eq:threOLD} is equivalent to determining a maximal component of every vector $\underbar{\ensuremath{u}}_i^{n+\frac{1}{2}}$. To fix notation, the vertices of $\Sigma_K$ are the elements of the canonical basis in $\mathbb{R^K}$ (for instance, for $K=3$ we have $\underbar{\ensuremath{e}}_1=\left(1,0,0\right)^T$, $\underbar{\ensuremath{e}}_2=\left(0,1,0\right)^T$ and $\underbar{\ensuremath{e}}_3=\left(0,0,1\right)^T$), while those of $\Sigma_{K}^{\pm}$ are defined as $\underbar{\ensuremath{e}}_{k}^{\pm}\coloneqq2\underbar{\ensuremath{e}}_{k}-\underbar{1}_K\in\left\{ -1;1\right\} ^{V}$ (that is, for the $K=3$ case, we have $\underbar{\ensuremath{e}}^{\pm}_1=\left(1,-1,-1\right)^T$, $\underbar{\ensuremath{e}}^{\pm}_2=\left(-1,1,-1\right)^T$ and $\underbar{\ensuremath{e}}^{\pm}_3=\left(-1,-1,1\right)^T$).

We first remind the reader of a useful equality for vectors $\underbar{\ensuremath{a}}$, $\underbar{\ensuremath{b}} \in \mathbb{R}^K$:
\begin{equation}\label{eq:normequality}
\|\underbar{\ensuremath{a}}-\underbar{\ensuremath{b}}\|_2^2 - \|\underbar{\ensuremath{a}}-\underbar{\ensuremath{c}}\|_2^2 = \|\underbar{\ensuremath{a}}\|_2^2 + \|\underbar{\ensuremath{b}}\|_2^2 - 2\langle \underbar{\ensuremath{a}}, \underbar{\ensuremath{b}}\rangle_2 - \|\underbar{\ensuremath{a}}\|_2^2 - \|\underbar{\ensuremath{c}}\|_2^2 + 2 \langle \underbar{\ensuremath{a}}, \underbar{\ensuremath{c}}\rangle_2 = \|\underbar{\ensuremath{b}}\|_2^2 - \|\underbar{\ensuremath{c}}\|_2^2 + 2 \langle \underbar{\ensuremath{a}}, \underbar{\ensuremath{c}}-\underbar{\ensuremath{b}}\rangle_2,
\end{equation}
where we have used the standard inner product in $\R^K$: $\langle \underbar{\ensuremath{a}}, \underbar{\ensuremath{b}}\rangle_2 := \sum_{k=1}^K a_k b_k$. The following lemma lists three direct consequences of \eqref{eq:normequality}.
\begin{lemma}\label{lm:consequences}
\begin{enumerate}
\item Given $\underbar{\ensuremath{a}}$, $\underbar{\ensuremath{b}} \in \mathbb{R}^K$ and $i, j\in \{1, \ldots, K\}$, define $\underbar{\ensuremath{b}}^*\in \mathbb{R}^K$ by
\begin{equation}
b_k^*:=\begin{cases}
b_k, &\text{if } i\neq k \neq j,\\
b_j, &\text{if } k=i,\\
b_i, &\text{if } k=j.
\end{cases}
\end{equation}
Then
\begin{equation}
\|\underbar{\ensuremath{a}}-\underbar{\ensuremath{b}}^*\|_2^2 = \|\underbar{\ensuremath{a}}-\underbar{\ensuremath{b}}\|_2^2 + 2 (a_i-a_j) (b_i-b_j).
\end{equation}

\item Let $\underbar{\ensuremath{x}}\in\mathbb{R}^{K}$ and let $\underbar{\ensuremath{e}}_{i},  \underbar{\ensuremath{e}}_{j} \in\mathbb{R}^{K}$ be two vertices of the simplex $\Sigma_K$. Then $\left\Vert \underbar{\ensuremath{x}}-\underbar{\ensuremath{e}}_{i}\right\Vert_2 \geqslant\left\Vert \underbar{\ensuremath{x}}-\underbar{\ensuremath{e}}_{j}\right\Vert_2 $ iff $x_{j}\geqslant x_{i}$, with equality between the norms iff $x_i=x_j$.

\item Let $\underbar{\ensuremath{x}}\in\mathbb{R}^{K}$ and let $\underbar{\ensuremath{e}}_{i}^\pm,  \underbar{\ensuremath{e}}_{j}^\pm \in\mathbb{R}^{K}$ be two vertices of the simplex $\Sigma_K^\pm$. Then $\left\Vert \underbar{\ensuremath{x}}-\underbar{\ensuremath{e}}_{i}^{\pm}\right\Vert_2 \geqslant\left\Vert \underbar{\ensuremath{x}}-\underbar{\ensuremath{e}}_{j}^{\pm}\right\Vert_2 $ iff $x_{j}\geqslant x_{i}$, with equality between the norms iff $x_i=x_j$.
\end{enumerate}
\end{lemma}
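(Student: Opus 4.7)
The plan is to derive all three parts as consequences of the identity \eqref{eq:normequality}, with part 1 being the main tool and parts 2 and 3 following by specializing to the simplex vertices.

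For part 1, I would set $\underbar{\ensuremath{c}}=\underbar{\ensuremath{b}}^*$ in \eqref{eq:normequality}. The key observations are: (i) $\underbar{\ensuremath{b}}^*$ is a coordinate permutation of $\underbar{\ensuremath{b}}$, so $\|\underbar{\ensuremath{b}}^*\|_2^2=\|\underbar{\ensuremath{b}}\|_2^2$, killing the first two terms on the right-hand side; and (ii) the difference $\underbar{\ensuremath{b}}^*-\underbar{\ensuremath{b}}$ is supported on coordinates $i$ and $j$ only, with $(\underbar{\ensuremath{b}}^*-\underbar{\ensuremath{b}})_i=b_j-b_i$ and $(\underbar{\ensuremath{b}}^*-\underbar{\ensuremath{b}})_j=b_i-b_j$. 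Expanding $\langle \underbar{\ensuremath{a}},\underbar{\ensuremath{b}}^*-\underbar{\ensuremath{b}}\rangle_2$ and rearranging yields $\|\underbar{\ensuremath{a}}-\underbar{\ensuremath{b}}^*\|_2^2-\|\underbar{\ensuremath{a}}-\underbar{\ensuremath{b}}\|_2^2=2(a_i-a_j)(b_i-b_j)$, as claimed.

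For part 2, I would apply part 1 with $\underbar{\ensuremath{a}}=\underbar{\ensuremath{x}}$ and $\underbar{\ensuremath{b}}=\underbar{\ensuremath{e}}_j$, noting that swapping coordinates $i$ and $j$ of $\underbar{\ensuremath{e}}_j$ produces precisely $\underbar{\ensuremath{e}}_i$. Since $(e_j)_i-(e_j)_j=-1$, the formula gives $\|\underbar{\ensuremath{x}}-\underbar{\ensuremath{e}}_i\|_2^2-\|\underbar{\ensuremath{x}}-\underbar{\ensuremath{e}}_j\|_2^2=2(x_j-x_i)$, from which the equivalence and the equality case follow immediately. Part 3 is analogous: applying part 1 with $\underbar{\ensuremath{b}}=\underbar{\ensuremath{e}}_j^\pm$ and swapping coordinates $i,j$ gives $\underbar{\ensuremath{e}}_i^\pm$; the relevant difference is now $(e_j^\pm)_i-(e_j^\pm)_j=-2$, yielding $\|\underbar{\ensuremath{x}}-\underbar{\ensuremath{e}}_i^\pm\|_2^2-\|\underbar{\ensuremath{x}}-\underbar{\ensuremath{e}}_j^\pm\|_2^2=4(x_j-x_i)$, which again gives the desired inequality and equality case.

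I do not anticipate a genuine obstacle here; the lemma is essentially a bookkeeping exercise built on the polarization-style identity \eqref{eq:normequality}. The only thing requiring a small amount of care is verifying that the permutation invariance $\|\underbar{\ensuremath{b}}^*\|_2^2=\|\underbar{\ensuremath{b}}\|_2^2$ is used correctly, and making sure the sign in the factor $(a_i-a_j)(b_i-b_j)$ is right, since \eqref{eq:normequality} is written with $\underbar{\ensuremath{c}}-\underbar{\ensuremath{b}}$ rather than $\underbar{\ensuremath{b}}-\underbar{\ensuremath{c}}$. Once part 1 is established, parts 2 and 3 are immediate corollaries, and together they supply the justification that the projection/thresholding procedure \eqref{eq:projOLD}--\eqref{eq:threOLD} in the MBO scheme can be replaced by the single $\mathrm{argmax}$ operation used in \eqref{eq:SIGNargmax}.
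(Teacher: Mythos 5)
Your proof is correct and follows essentially the same route as the paper: everything rests on the identity \eqref{eq:normequality}, with the permutation invariance of the norm and the two-coordinate support of $\underbar{\ensuremath{b}}^*-\underbar{\ensuremath{b}}$ doing the work in part 1, and the signs and the resulting differences $2(x_j-x_i)$ and $4(x_j-x_i)$ in parts 2 and 3 matching the paper's. The only cosmetic difference is that you obtain parts 2 and 3 as corollaries of part 1 by viewing $\underbar{\ensuremath{e}}_i$ as the $i$--$j$ coordinate swap of $\underbar{\ensuremath{e}}_j$, whereas the paper re-applies \eqref{eq:normequality} directly with $\underbar{\ensuremath{b}}=\underbar{\ensuremath{e}}_i$, $\underbar{\ensuremath{c}}=\underbar{\ensuremath{e}}_j$; the underlying computation is identical.
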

\begin{proof}
\begin{enumerate}
\item We apply \eqref{eq:normequality} with $\underbar{\ensuremath{c}} = \underbar{\ensuremath{b}}^*$. Since $\|\underbar{\ensuremath{b}}\|_2 = \|\underbar{\ensuremath{b}}^*\|_2$ and 
\begin{equation}
\langle \underbar{\ensuremath{a}}, \underbar{\ensuremath{c}}-\underbar{\ensuremath{b}}\rangle_2 = a_i (b_j-b_i) + a_j (b_i-b_j) = (a_i-a_j)(b_j-b_i),
\end{equation}
the result follows.

\item We apply \eqref{eq:normequality} with $\underbar{\ensuremath{a}} = \underbar{\ensuremath{x}}$, $\underbar{\ensuremath{b}} = \underbar{\ensuremath{e}}_i$, and $\underbar{\ensuremath{c}} = \underbar{\ensuremath{e}}_j$. Since $\|\underbar{\ensuremath{e}}_i\|_2 = 1 = \|\underbar{\ensuremath{e}}_j\|_2$ and $\langle \underbar{\ensuremath{x}}, \underbar{\ensuremath{e}}_j-\underbar{\ensuremath{e}}_i\rangle_2 = x_j-x_i$, it follows that
\begin{equation}
\|\underbar{\ensuremath{x}}-\underbar{\ensuremath{e}}_i\|_2^2 - \|\underbar{\ensuremath{x}}-\underbar{\ensuremath{e}}_j\|_2^2 = 2(x_j-x_i),
\end{equation}
and thus the result follows.

\item This follows from a proof completely analogous to the previous one, once we note that $\|\underbar{\ensuremath{e}}^\pm_i\|_2 = \sqrt K = \|\underbar{\ensuremath{e}}^\pm_j\|_2$ and $\langle \underbar{\ensuremath{x}}, \underbar{\ensuremath{e}}^ \pm_j-\underbar{\ensuremath{e}}^\pm_i\rangle_2 = 2(x_j-x_i)$ where $\underbar{\ensuremath{x}}\in\mathbb{R}^K$.
\end{enumerate}
\end{proof}

\begin{lemma}\label{lm:sec}
Let $\underbar{\ensuremath{x}}\in\mathbb{R}^{K}$ and 
\begin{equation}\label{eq:defproj}
\underbar{\ensuremath{x}}^{\perp}\coloneqq\argmin_{\underbar{\ensuremath{y}}\in\Sigma_K}\left\Vert \underbar{\ensuremath{x}}-\underbar{\ensuremath{y}}\right\Vert_2 
\end{equation}
be its projection on $\Sigma_K$\footnote{Existence and uniqueness of $\underbar{\ensuremath{x}}^{\perp}$ is a well-known result, see for instance Lemma 2.2 in \cite{bagirov2014introduction}}. Given two components $x_j$ and $x_i$ of $\underbar{\ensuremath{x}}$, if $x_{j}\geqslant x_{i}$ then $x_{j}^{\perp}\geqslant x_{i}^{\perp}$. The analogous result holds for the projection of $\underbar{\ensuremath{x}}$ on $\Sigma_K^\pm$.
\end{lemma}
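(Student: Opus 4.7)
The plan is to argue by contradiction, leveraging the permutation-invariance of $\Sigma_K$ (and $\Sigma_K^\pm$) together with the swap identity from Lemma \ref{lm:consequences}(1). Suppose, toward a contradiction, that $x_j \geq x_i$ but $x_j^\perp < x_i^\perp$. I would define $\underbar{\ensuremath{x}}^{\perp*}\in\mathbb{R}^K$ to be the vector obtained from $\underbar{\ensuremath{x}}^\perp$ by swapping its $i$-th and $j$-th entries. The first key observation is that $\underbar{\ensuremath{x}}^{\perp*} \in \Sigma_K$: indeed, $\Sigma_K$ is invariant under coordinate permutations since both constraints defining it (entries in $[0,1]$ and coordinate sum equal to $1$) are symmetric in the coordinates. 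Hence $\underbar{\ensuremath{x}}^{\perp*}$ is an admissible candidate in the minimization problem \eqref{eq:defproj}.

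Next, I would apply Lemma \ref{lm:consequences}(1) with $\underbar{\ensuremath{a}} = \underbar{\ensuremath{x}}$ and $\underbar{\ensuremath{b}} = \underbar{\ensuremath{x}}^\perp$ (so $\underbar{\ensuremath{b}}^* = \underbar{\ensuremath{x}}^{\perp*}$), yielding
\begin{equation}
\|\underbar{\ensuremath{x}} - \underbar{\ensuremath{x}}^{\perp*}\|_2^2 = \|\underbar{\ensuremath{x}} - \underbar{\ensuremath{x}}^\perp\|_2^2 + 2(x_i - x_j)(x_i^\perp - x_j^\perp).
\end{equation}
By assumption $x_i - x_j \leq 0$ and $x_i^\perp - x_j^\perp > 0$, so the cross term is non-positive, giving $\|\underbar{\ensuremath{x}} - \underbar{\ensuremath{x}}^{\perp*}\|_2 \leq \|\underbar{\ensuremath{x}} - \underbar{\ensuremath{x}}^\perp\|_2$. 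Now split into cases. If $x_j > x_i$, the inequality is strict, directly contradicting the minimality of $\underbar{\ensuremath{x}}^\perp$ in \eqref{eq:defproj}. If $x_j = x_i$, equality holds, so $\underbar{\ensuremath{x}}^{\perp*}$ is also a minimizer; by uniqueness of the projection onto the closed convex set $\Sigma_K$, we conclude $\underbar{\ensuremath{x}}^{\perp*} = \underbar{\ensuremath{x}}^\perp$, i.e.\ $x_i^\perp = x_j^\perp$, contradicting the strict inequality $x_j^\perp < x_i^\perp$.

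For the analogous statement about $\Sigma_K^\pm$, exactly the same argument carries over: $\Sigma_K^\pm$, as defined in \eqref{eq:2mKsimplex}, is also closed, convex, and symmetric under coordinate permutations (the box constraint $[-1,1]^K$ and the linear constraint $\sum_k \sigma_k^\pm = 2-K$ are both permutation-invariant), so the swap $\underbar{\ensuremath{x}}^{\perp*}$ of the two relevant entries of the projection remains feasible, and the algebra of Lemma \ref{lm:consequences}(1) together with uniqueness of projection onto a closed convex set completes the argument.

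The main potential obstacle is the boundary case $x_i = x_j$, which is not handled by the strict-inequality contradiction; the resolution is to invoke uniqueness of the orthogonal projection onto a closed convex set to force $\underbar{\ensuremath{x}}^{\perp*} = \underbar{\ensuremath{x}}^\perp$. Aside from this, the proof only requires checking that the simplices in question are symmetric under the relevant transposition, which is essentially immediate from their definitions.
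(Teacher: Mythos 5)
Your proof is correct and follows essentially the same route as the paper's: a contradiction argument using the entry-swap vector $\underbar{\ensuremath{x}}^{\perp*}$, the identity from Lemma~\ref{lm:consequences}(1), feasibility of the swap by permutation-invariance of the simplex, and uniqueness of the projection to dispose of the equality case. The only cosmetic difference is that you split into the cases $x_j>x_i$ and $x_j=x_i$, whereas the paper handles both at once by combining the inequality $\leq 0$ from the swap identity with the inequality $\geq 0$ from minimality and then invoking uniqueness.
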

\begin{proof}
We will prove the result for the projection on $\Sigma_K$. A completely analogous proof gives the result for $\Sigma_K^\pm$.

First note that the result is trivially true if $i=j$, hence we assume that $i\neq j$. For a proof by contradiction, assume that $x_j \geq x_i$ and $x_j^\perp < x_i^\perp$. Define $\underbar{\ensuremath{x}}^{\perp*}$ by 
\begin{equation}
x^{\perp*}_k := \begin{cases}
x^\perp_k, &\text{if } i\neq k \neq j,\\
x^\perp_j, &\text{if } k=i,\\
x^\perp_i, &\text{if } k=j.
\end{cases}
\end{equation}
Then, by the first result of Lemma~\ref{lm:consequences} (with $\underbar{\ensuremath{a}}=\underbar{\ensuremath{x}}$ and $\underbar{\ensuremath{b}}=\underbar{\ensuremath{x}}^\perp$), we have that
\begin{equation}
\|\underbar{\ensuremath{x}}-\underbar{\ensuremath{x}}^{\perp*}\|_2^2 - \|\underbar{\ensuremath{x}}-\underbar{\ensuremath{x}}^\perp\|_2^2 = 2 (x_i-x_j) (x_i^\perp - x_j^\perp) \leq 0,
\end{equation}
The inequality follows from our assumptions on $x_i$, $x_j$, $x_i^\perp$, and $x_j^\perp$. Since $\underbar{\ensuremath{x}}^\perp \in \Sigma_K$, we have $\sum_{k=1}^K x^{\perp*}_k = \sum_{k=1}^K x^\perp_k = 1$, hence $\underbar{\ensuremath{x}}^{\perp*} \in \Sigma_K$. Thus, per definition \eqref{eq:defproj}, we have $\|\underbar{\ensuremath{x}}-\underbar{\ensuremath{x}}^{\perp*}\|_2^2 - \|\underbar{\ensuremath{x}}-\underbar{\ensuremath{x}}^\perp\|_2^2 \geq 0$. It follows that $\|\underbar{\ensuremath{x}}-\underbar{\ensuremath{x}}^{\perp*}\|_2 = \|\underbar{\ensuremath{x}}-\underbar{\ensuremath{x}}^\perp\|_2$. By uniqueness of the minimizer in \eqref{eq:defproj} we deduce that $\underbar{\ensuremath{x}}^{\perp*} = \underbar{\ensuremath{x}}^\perp$. In particular that means that $x^\perp_j = x^{\perp*}_j = x^\perp_i$, which is a contradiction.
\end{proof}
Applying Lemma~\ref{lm:consequences} (result 2) to the thresholding step in \eqref{eq:threOLD}, we find that $k$ in \eqref{eq:threOLD} is such that the $k^{\text{th}}$ entry of $\underbar{v}_i$ is not smaller than all its other entries. Furthermore, Lemma~\ref{lm:sec} applied to \eqref{eq:projOLD} tells us that $\underbar{\ensuremath{v}}_i$ and $\left(\underbar{\ensuremath{u}}_{i}^{n+\frac{1}{2}}\right)$ have their largest entries at the same positions. Thus \eqref{eq:projOLD} and \eqref{eq:threOLD} are equivalent to
$\underbar{\ensuremath{u}}_{i}^{n+1}=\underbar{\ensuremath{e}}_{k}$, where $k$ is such that the $k^{\text{th}}$ entry of $\left(\underbar{\ensuremath{u}}_{i}^{n+\frac{1}{2}}\right)$ is not smaller than all its other entries.
%
%
%
%

\label{Bibliography}

\end{document}